\newcommand\EFFACE[1]{}
\newcommand\ESOK[1]{#1}
\newtheorem{theorem}{Theorem}
\newtheorem{proposition}[theorem]{Proposition}
\newtheorem{lemma}[theorem]{Lemma}
\newtheorem{corollary}[theorem]{Corollary}
\newtheorem{question}{Question}
\newtheorem{claim}{Claim}
\newenvironment{proof}{
\par
\noindent {\bf Proof.}\rm}{\mbox{}\hfill$\square$\par\vskip 3mm}
\newcommand\pcn{\chi_{\rho}}
\newcommand\SOMMET[1]{\draw[fill=black] (#1) circle (2pt)}
\newcommand\BIGSOMMET[1]{\draw[fill=black] (#1) circle (3pt)}
\newcommand\ETIQUETTE[2]{\node[below] at (#1) {#2}}
\newcommand\ETIQUETTEA[2]{\node[above] at (#1) {#2}}
\newcommand\ETIQUETTEB[2]{\node[below] at (#1) {#2}}
\newcommand\ARETEH[2]{\draw[thick] (#1) -- ++(#2,0)}
\newcommand\ARETEV[1]{\draw[thick] (#1) -- ++(0,1)}
\newcommand\ARETEVV[2]{\draw[thick] (#1) -- ++(0,#2)}
\let\@fnsymbol\@arabic
\newcommand\Xsp{\;}
\newcommand\Xud{12}
\newcommand\Xut{13}
\newcommand\Xuq{14}
\newcommand\Xuc{15}
\newcommand\Xus{16}
\newcommand\Xuse{17}
\newcommand\Xdu{21}
\newcommand\Xdt{23}
\newcommand\Xdc{25}
\newcommand\Xtu{31}
\newcommand\Xtd{32}
\newcommand\Xts{36}
\newcommand\Xqu{41}
\newcommand\Xqd{42}
\newcommand\Xqt{43}
\newcommand\Xcu{51}
\newcommand\Xcd{52}
\newcommand\Xcq{54}
\newcommand\Xsu{61}
\newcommand\Xst{63}
\newcommand\Xseu{71}
\begin{document}

\title{{\bf Packing colouring of some classes of cubic graphs}}

\author{Daouya LA\"{I}CHE~\thanks{Faculty of Mathematics, Laboratory L'IFORCE, University of Sciences and Technology
Houari Boumediene (USTHB), B.P.~32 El-Alia, Bab-Ezzouar, 16111 Algiers, Algeria.}
\and \'Eric SOPENA~\thanks{Univ. Bordeaux, Bordeaux INP, CNRS, LaBRI, UMR5800, F-33400 Talence, France.}~$^,$\footnote{Corresponding author. Eric.Sopena@labri.fr.}
}

\maketitle

\abstract{
The packing chromatic number $\pcn(G)$ of a graph $G$ is the smallest integer $k$ such that
its set of vertices $V(G)$ can be partitioned into $k$ disjoint subsets $V_1$, \ldots, $V_k$, in such a way that every two distinct vertices
in $V_i$ are at distance  greater than $i$ in $G$ for every $i$, $1\le i\le k$.

Recently, Balogh, Kostochka and Liu  proved that $\pcn$ is not bounded in the class of subcubic graphs
[Packing chromatic number of subcubic graphs,
{\it Discrete Math.} 341 (2018), 474--483],
thus answering a question previously addressed in several papers.
However, several subclasses of cubic or subcubic graphs have bounded packing chromatic number.
In this paper, we determine the exact value of, or upper and lower bounds on, the packing chromatic number of
some classes of cubic graphs, namely  circular ladders, and so-called H-graphs and generalised H-graphs.
}

\medskip

\noindent
{\bf Keywords:} Packing colouring; Packing chromatic number; Circular ladder; H-graph; Generalised H-graph.

\medskip

\noindent
{\bf MSC 2010:} 05C15, 05C12. 

\section{Introduction}

All the graphs we consider are \ESOK{simple}.
For a graph $G$, we denote by $V(G)$ its set of vertices and by $E(G)$ its set of edges.
The {\em distance} $d_G(u,v)$
between
vertices $u$ and $v$ in $G$ is the length \ESOK{(number of edges)} of a shortest path joining $u$ and $v$.
The {\em diameter} 
of $G$ is the maximum distance between two vertices of $G$.
We denote by $P_n$, $n\ge 1$, the path of order $n$ and by $C_n$, $n\ge 3$, the cycle of order~$n$.

A {\em packing $k$-colouring} of $G$ is a mapping $\pi:V(G)\rightarrow\{1,\ldots,k\}$
such that, for every two distinct vertices $u$ and $v$,  \ESOK{$\pi(u)=\pi(v)=i$ implies $d_G(u,v)>i$}.
The {\em packing chromatic number} $\pcn(G)$ of $G$ is then the smallest $k$ such that
$G$ admits a packing $k$-colouring.
In other words, $\pcn(G)$ is the smallest integer $k$ such that
$V(G)$ can be partitioned into $k$ disjoint subsets $V_i$, $1\le i\le k$, in such a way that every two vertices
in $V_i$ are at distance greater than $i$ in $G$ for every $i$, $1\le i\le k$.
A packing colouring of $G$ is {\em optimal} if it uses exactly $\pcn(G)$ colours.

The packing colouring of graphs was introduced by Goddard, Hedetniemi, Hedetniemi, Harris and Rall in~\cite{GHHHR03,GHHHR08}, under the name {\em broadcast colouring}.
%
%
In their seminal paper~\cite{GHHHR08},
the question of determining the maximum
packing chromatic number in the class of cubic graphs of a given order is posed.
In~\cite{S04}, Sloper proved that the packing chromatic number is unbounded in
the class of $k$-ary trees for every $k\ge 3$, from which it follows that
the packing chromatic number is unbounded in the class of graphs with maximum degree~4.

In~\cite{GT16}, Gastineau and Togni observed that each cubic graph of order at most~20
has packing chromatic number at most~10. They also observed that the largest cubic graph with diameter~4
(this graph has 38 vertices and is described in~\cite{AFY86})
has packing chromatic number~13,
and asked whether there exists a cubic graph with packing chromatic number larger than~13 or not.
This question was answered positively by Bre\v sar, Klav\v zar, Rall and Wash~\cite{BKRW17b}
who exhibited a cubic graph on 78~vertices with packing chromatic number at least~14.
Recently, Balogh, Kostochka and Liu finally proved in~\cite{BKL17} that the packing
chromatic number is unbounded in the class of cubic graphs, and Bre\v sar and Ferme
gave in~\cite{BF18} an explicit infinite family of subcubic graphs with unbounded packing chromatic
number.

On the other hand, the packing chromatic number is known to be upper bounded in several classes 
of graphs with maximum degree~3, as for instance complete binary trees~\cite{S04}, 
hexagonal lattices~\cite{BKR07,FKL09,KV14},
base-3 Sierpi\'nski graphs~\cite{BKR16} or particular Sierpi\'nski-type graphs~\cite{BF17},
subdivisions of subcubic graphs~\cite{GT16,BKRW17a}
and of cubic graphs~\cite{BKL18},
or several subclasses of outerplanar subcubic graphs~\cite{GHT17}.

\medskip

We prove in this paper that the packing chromatic number is bounded
in other classes of cubic graphs, extending in particular partial
results given in~\cite{WR13}.
More precisely, we determine the exact value of, or upper and lower bounds on, the packing chromatic number of
circular ladders (in Section~\ref{sec:ladder}), 
H-graphs (in Section~\ref{sec:Hgraphs}) and 
generalised H-graphs (in Section~\ref{sec:geneHgraphs}).

\section{Preliminary results}
\label{sec:preliminaryResults}

We give in this section a few results that will be useful in the sequel.

Let $G$ be a graph. 
A subset $S$ of $V(G)$ is an {\em $i$-packing}, for some integer $i\ge 1$,
if any two vertices in $S$ are at distance at least $i+1$ in $G$.
Note that such a set $S$ is a $1$-packing if and only if $S$ is an independent set.
A packing colouring of $G$ is thus a 
partition of $V(G)$ into $k$ disjoint subsets $V_1,\dots,V_k$, such that
$V_i$ is an $i$-packing for every $i$, $1\le i\le k$.

For every integer $i\ge 1$, we denote by $\rho_i(G)$ the maximum cardinality
of an $i$-packing in $G$.
Since at most $\rho_i(G)$ vertices can be assigned colour $i$ in any packing
colouring of $G$, we have the following result.

\begin{proposition}
If $G$ is a graph with $\pcn(G)=k$, then
$$\sum_{i=1}^{i=k}\rho_i(G) \ge |V(G)|.$$
\label{prop:rho-i}
\end{proposition}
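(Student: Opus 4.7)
The plan is to argue directly from the definitions. Since $\pcn(G)=k$, there exists an optimal packing $k$-colouring $\pi:V(G)\to\{1,\ldots,k\}$. For each colour $i$ with $1\le i\le k$, let $V_i=\pi^{-1}(i)$ denote the colour class of $i$. By the definition of a packing colouring, any two distinct vertices of $V_i$ are at distance strictly greater than $i$, which is precisely the condition defining an $i$-packing, so each $V_i$ is an $i$-packing of $G$.

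From the maximality of $\rho_i(G)$ I would then deduce $|V_i|\le \rho_i(G)$ for every $i$. Since $\{V_1,\ldots,V_k\}$ is a partition of $V(G)$, summing these inequalities yields
$$|V(G)|=\sum_{i=1}^{k}|V_i|\le \sum_{i=1}^{k}\rho_i(G),$$
which is the desired inequality.

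There is no real obstacle here: the statement is essentially a direct counting consequence of the two definitions recalled just above the proposition (packing colouring, and the parameter $\rho_i$). The only subtle point to keep in mind is that $V_i$ may be empty for some $i$, in which case $|V_i|=0\le\rho_i(G)$ trivially; so the argument goes through with no case distinction.
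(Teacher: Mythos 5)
Your argument is correct and is exactly the one the paper intends: the text preceding the proposition already observes that at most $\rho_i(G)$ vertices can receive colour $i$, and the inequality follows by summing over the colour classes of an optimal packing $k$-colouring, just as you do. Nothing is missing.
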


Let $H$ be a subgraph of $G$. Since $d_G(u,v)\le d_H(u,v)$ for any two vertices $u,v\in V(H)$,
the restriction to $V(H)$ of any packing colouring of $G$ is a packing colouring
of $H$. 
Hence, having packing chromatic number at most $k$
is a hereditary property:

\begin{proposition}[Goddard, Hedetniemi, Hedetniemi, Harris and Rall~\cite{GHHHR08}] \mbox{}\\
Let $G$ and $H$ be two graphs.
If $H$ is a subgraph of $G$, then $\pcn(H)\le\pcn(G)$.
\label{prop:subgraph}
\end{proposition}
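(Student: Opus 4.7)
The plan is to prove the proposition by restricting an optimal packing colouring of $G$ to the vertex set of $H$ and checking that the packing condition is preserved. Concretely, I would let $k=\pcn(G)$ and fix an optimal packing $k$-colouring $\pi:V(G)\to\{1,\dots,k\}$ of $G$. I would then define $\pi':V(H)\to\{1,\dots,k\}$ to be the restriction $\pi'=\pi\restriction_{V(H)}$ and argue that $\pi'$ is a valid packing $k$-colouring of $H$, which immediately yields $\pcn(H)\le k=\pcn(G)$.

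The only thing to check is the packing condition for $\pi'$ in $H$. Take any two distinct vertices $u,v\in V(H)$ with $\pi'(u)=\pi'(v)=i$. Then $\pi(u)=\pi(v)=i$ in $G$, so by the assumption that $\pi$ is a packing colouring of $G$ we have $d_G(u,v)>i$. The key step is the distance inequality already noted in the excerpt: any $u$--$v$ path in $H$ is also a $u$--$v$ path in $G$, so the length of a shortest $u$--$v$ path in $G$ is at most the length of a shortest such path in $H$, i.e.\ $d_G(u,v)\le d_H(u,v)$. Combining the two inequalities gives $d_H(u,v)\ge d_G(u,v)>i$, which is precisely the packing condition needed for colour $i$ in $H$.

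There is no real obstacle here; the whole argument hinges on the one-line observation that taking subgraphs can only increase distances between vertices that already belonged to the subgraph. I would therefore expect the author's proof to consist of essentially the same two sentences: introduce the restricted colouring, and invoke $d_H\ge d_G$ to transport the packing condition from $G$ to $H$.
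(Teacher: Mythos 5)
Your argument is correct and is exactly the one the paper gives (in the remark preceding the proposition): restrict a packing colouring of $G$ to $V(H)$ and use the inequality $d_G(u,v)\le d_H(u,v)$ for $u,v\in V(H)$ to transfer the packing condition. Nothing is missing.
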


In particular, Proposition~\ref{prop:subgraph} gives a lower bound on the packing
chromatic number of a graph~$G$ whenever $G$ contains a subgraph $H$ whose packing chromatic
number is known. 
As we will see later, all the cubic graphs we consider in this paper contain a corona of a cycle
as a subgraph.
Recall that the {\em corona} $G\odot K_1$ of a graph $G$ is the graph obtained from $G$ by adding a degree-one neighbour to every vertex  of $G$.
In~\cite{LBS17}, we have determined with I.~Bouchemakh 
the packing chromatic number of the corona of 
cycles.

\begin{theorem}[La\"iche, Bouchemakh, Sopena~\cite{LBS17}] \mbox{}\\
The packing chromatic number of the corona graph $C_n\odot K_1$ is given by:
$$\chi_\rho(C_{n}\odot K_1)= \left\{
 \begin{array}{ll}
     $4$ & \hbox{if $n\in\{3,4\}$,} \\
     $5$ & \hbox{if $n\ge 5$.}
 \end{array}
\right.
$$
\label{th:CrCn}
\end{theorem}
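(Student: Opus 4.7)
The proof naturally splits into upper and lower bound arguments for each range of $n$. For the upper bound, I would exploit the fact that any two distinct pendants are at distance at least $3$ in $C_n \odot K_1$, so that assigning colour $2$ to every pendant is always valid. It then suffices to colour the cycle $C_n$ using colours drawn from $\{1,3,4,5\}$ in a way consistent with the pendants; note that no cycle vertex can receive colour $2$, since each cycle vertex is adjacent to its own pendant. For $n=3$ the cycle can be coloured $(1,3,4)$, and for $n=4$ it can be coloured $(1,3,1,4)$, giving packing $4$-colourings. For $n \ge 5$, I would use a periodic pattern on the cycle such as $(1,3,1,4,1,5)$ of period $6$, handling each residue class of $n \bmod 6$ by a short adjustment near the closing of the cycle; in every case one verifies that colour $1$ is independent, that colour-$3$ vertices are at cyclic distance at least $4$, and so on.

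For the lower bound when $n=3$, suppose a packing $3$-colouring exists. The triangle $u_1 u_2 u_3$ already uses all three colours. Since the diameter is $3$, colour $3$ can appear at most once; since every pendant is at distance at most $2$ from every cycle vertex, no pendant can receive colour $2$; and since colour $3$ is already taken by some cycle vertex within distance $3$ of every pendant, no pendant can receive colour $3$ either. Consequently the three pendants must all receive colour $1$, contradicting the adjacency of each pendant with its own cycle vertex (one of which is itself coloured $1$). For $n=4$ the same strategy works with a bit more bookkeeping: using that $\rho_3(C_4 \odot K_1)=2$ and that the $3$-packings of size $2$ are exactly the pairs of opposite pendants, one enumerates a few cases on where colours $2$ and $3$ can land to derive a contradiction.

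The main obstacle is the lower bound $\pcn(C_n \odot K_1) \ge 5$ for $n \ge 5$. One checks that $\rho_1 = \rho_2 = n$, that $\rho_3$ is roughly $n/2$ and that $\rho_4$ is roughly $n/4$, so Proposition~\ref{prop:rho-i} does not by itself rule out a packing $4$-colouring; a structural argument is needed. I would argue by contradiction, studying the restriction of a putative packing $4$-colouring to a window of a few consecutive cycle vertices together with their pendants. Colours $3$ and $4$ are forced to be sparse along the cycle, so colours $1$ and $2$ must cover almost all of such a window. A case analysis of the finitely many possible colour patterns on the window should reveal, in every case, a forbidden configuration: either two colour-$1$ vertices end up adjacent, a colour-$2$ vertex sits within distance $2$ of another, or a would-be colour-$3$ or colour-$4$ vertex violates its packing constraint globally around the cycle. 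This local window analysis, together with the separate verification of the small cases $n\in\{5,6,7\}$ where the cycle is too short for a clean periodic argument, is the most delicate step of the proof.
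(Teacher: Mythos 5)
The first thing to note is that this paper does not prove Theorem~\ref{th:CrCn} at all: it is imported verbatim from~\cite{LBS17} and used only as a lower-bound tool via Proposition~\ref{prop:subgraph}, so your proposal can only be measured against the cited source, not against an in-paper argument. Your treatment of the small cases is essentially sound: the upper bounds for $n=3,4$ and the lower-bound arguments you sketch for $n=3$ (colour~$3$ used at most once, no pendant can receive colour~$2$ or~$3$, hence all pendants would need colour~$1$) and for $n=4$ are correct in spirit.

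The genuine gap is in your upper bound for $n\ge 5$. The strategy ``give every pendant colour~$2$ and colour the cycle from $\{1,3,4,5\}$'' is not merely delicate to close up modulo~$6$; it is \emph{impossible} for several values of $n$. In $C_7$ every two vertices are at distance at most~$3$, so each of the colours $3$, $4$, $5$ can be used on at most one cycle vertex, while colour~$1$ can be used on at most three; this covers at most $6<7$ vertices. The same counting (sizes of $i$-packings in $C_n$, in the spirit of Proposition~\ref{prop:rho-i}) rules out $n=9$, $n=10$ and $n=11$ as well: for instance in $C_{10}$ a maximum independent set is a parity class, within which no two vertices are at odd distance $5$, so colours $1,3,4,5$ cover at most $9$ vertices. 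Consequently no ``short adjustment near the closing of the cycle'' can rescue the period-$6$ pattern there: any packing $5$-colouring of $C_n\odot K_1$ for these $n$ must place colour~$2$ on some cycle vertices and colours from $\{1,3,4,5\}$ on some pendants, i.e.\ the construction has to interleave the two layers rather than fix all pendants at colour~$2$. You would need to exhibit such mixed patterns (as is done in~\cite{LBS17}) for these residues. Separately, your lower-bound plan for $n\ge 5$ is only a programme (``a case analysis \dots should reveal''), so even apart from the upper-bound flaw the proposal is not yet a proof of the hardest inequality $\pcn(C_n\odot K_1)\ge 5$.
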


This result will thus provide a lower bound on the packing chromatic number of each
cubic graph considered in this paper.

\section{Circular ladders}
\label{sec:ladder}

Recall that the Cartesian product $G\,\Box\,H$ of two graphs $G$ and $H$ is the graph with vertex set $V(G)\times V(H)$,
two vertices $(u,u')$ and $(v,v')$ being adjacent if and only if either $u=v$ and $u'v'\in E(H)$
or $u'=v'$ and $uv\in E(G)$.

\begin{figure}
\begin{center}
\begin{tikzpicture}[domain=0:12,x=1cm,y=1cm]
\SOMMET{2.5,-4}; \SOMMET{2.5,-3};
\SOMMET{3.5,-4}; \SOMMET{3.5,-3};
\SOMMET{4.5,-4}; \SOMMET{4.5,-3};
\SOMMET{5.5,-4}; \SOMMET{5.5,-3};
\SOMMET{6.5,-4}; \SOMMET{6.5,-3};
\SOMMET{7.5,-4}; \SOMMET{7.5,-3};
\SOMMET{8.5,-4}; \SOMMET{8.5,-3};
\ETIQUETTE{2.5,-2.4}{$u_0$}; \ETIQUETTE{2.5,-4.1}{$v_0$};
\ETIQUETTE{3.5,-2.4}{$u_1$}; \ETIQUETTE{3.5,-4.1}{$v_1$};
\ETIQUETTE{4.5,-2.4}{$u_2$}; \ETIQUETTE{4.5,-4.1}{$v_2$};
\ETIQUETTE{5.5,-2.4}{$u_3$}; \ETIQUETTE{5.5,-4.1}{$v_3$};
\ETIQUETTE{6.5,-2.4}{$u_4$}; \ETIQUETTE{6.5,-4.1}{$v_4$};
\ETIQUETTE{7.5,-2.4}{$u_5$}; \ETIQUETTE{7.5,-4.1}{$v_5$};
\ETIQUETTE{8.5,-2.4}{$u_6$}; \ETIQUETTE{8.5,-4.1}{$v_6$};
\ARETEH{2.5,-3}{6};
\ARETEH{2.5,-4}{6};
\ARETEV{2.5,-4};
\ARETEV{3.5,-4};
\ARETEV{4.5,-4};
\ARETEV{5.5,-4};
\ARETEV{6.5,-4};
\ARETEV{7.5,-4};
\ARETEV{8.5,-4};
\draw[thick] (2.5,-3) .. controls (3.5,-1.8) and (7.5,-1.8) .. (8.5,-3);
\draw[thick] (2.5,-4) .. controls (3.5,-5.2) and (7.5,-5.2) .. (8.5,-4);
\end{tikzpicture}
\caption{The circular ladder $CL_7$.}
\label{fig:CL7}
\end{center}
\end{figure}

The \emph{circular ladder} $CL_n$ of length $n\ge 3$ is the Cartesian product $CL_n=C_n\,\Box\,K_2$.
Note that $CL_n$ is a bipartite graph if and only if $n$ is even.

For every circular ladder $CL_n$, 
we let 
$$V(CL_n)=\{u_0,\dots,u_{n-1}\}\ \cup\ \{v_0,\dots,v_{n-1}\},$$
and
$$E(CL_n)=\{u_iv_i\ |\ 0\le i\le n-1\}\ \cup\ \{u_iu_{i+1},v_iv_{i+1}\ |\ 0\le i\le n-1\}$$
(subscripts are taken modulo $n$).
Figure~\ref{fig:CL7} depicts the circular ladder $CL_7$.

Note that for every $n\ge 3$, the corona graph $C_n\odot K_1$ is a subgraph
of the circular ladder $CL_n$.
Therefore, by Proposition~\ref{prop:subgraph}, Theorem~\ref{th:CrCn} provides a lower
bound on the packing chromatic number of circular ladders.
More precisely, $\pcn(CL_n)\ge 4$ if $n\in\{3,4\}$, and $\pcn(CL_n)\ge 5$ if $n\ge 5$.

William and Roy~\cite{WR13} proved that the packing chromatic number of a circular ladder of length $n=6q$, $q\ge 1$,
is at most~5.
In Theorem~\ref{th:cl} below, we extend this result and determine the packing chromatic number of every circular ladder.

We first need the following technical lemma, which will also be useful in Section~\ref{sec:geneHgraphs}.

\begin{figure}
\begin{center}
\begin{tikzpicture}[domain=0:12,x=1cm,y=1cm]
\SOMMET{2.5,-4}; \SOMMET{2.5,-3};
\SOMMET{3.5,-4}; \SOMMET{3.5,-3};
\SOMMET{4.5,-4}; \SOMMET{4.5,-3};
\SOMMET{5.5,-4}; \SOMMET{5.5,-3};
\SOMMET{6.5,-4}; \SOMMET{6.5,-3};
\SOMMET{7.5,-4}; \SOMMET{7.5,-3};
\SOMMET{8.5,-4}; \SOMMET{8.5,-3};
\SOMMET{9.5,-4}; \SOMMET{9.5,-3};
\SOMMET{10.5,-4}; \SOMMET{10.5,-3};
\ETIQUETTE{2.5,-2.4}{$u_0$}; \ETIQUETTE{2.5,-4.1}{$v_0$};
\ETIQUETTE{3.5,-2.4}{$u_1$}; \ETIQUETTE{3.5,-4.1}{$v_1$};
\ETIQUETTE{4.5,-2.4}{$u_2$}; \ETIQUETTE{4.5,-4.1}{$v_2$};
\ETIQUETTE{5.5,-2.4}{$u_3$}; \ETIQUETTE{5.5,-4.1}{$v_3$};
\ETIQUETTE{6.5,-2.4}{$u_4$}; \ETIQUETTE{6.5,-4.1}{$v_4$};
\ETIQUETTE{7.5,-2.4}{$u_5$}; \ETIQUETTE{7.5,-4.1}{$v_5$};
\ETIQUETTE{8.5,-2.4}{$u_6$}; \ETIQUETTE{8.5,-4.1}{$v_6$};
\ETIQUETTE{9.5,-2.4}{$u_7$}; \ETIQUETTE{9.5,-4.1}{$v_7$};
\ETIQUETTE{10.5,-2.4}{$u_8$}; \ETIQUETTE{10.5,-4.1}{$v_8$};
\ARETEH{2.5,-3}{8};
\ARETEH{2.5,-4}{8};
\ARETEV{4.5,-4};
\ARETEV{5.5,-4};
\ARETEV{6.5,-4};
\ARETEV{7.5,-4};
\ARETEV{8.5,-4};
\draw[thick,dashed] (5,-2.2) -- (8,-2.2);
\draw[thick,dashed] (5,-4.8) -- (8,-4.8);
\draw[thick,dashed] (5,-2.2) -- (5,-4.8);
\draw[thick,dashed] (8,-2.2) -- (8,-4.8);
\end{tikzpicture}
\caption{The graph $X$.}
\label{fig:graphX}
\end{center}
\end{figure}

\begin{lemma}
Let $X$ be the graph depicted in Figure~\ref{fig:graphX}, 
and $\pi$ be a packing $5$-colouring of $X$.
If $\pi(u_i)\neq 1$ and $\pi(v_i)\neq 1$ for some integer $i$, $3\le i\le 5$,
then either $u_i$ or $v_i$ has colour $2$, and its three neighbours have colours
$3$, $4$ and $5$ (the three corresponding edges are the vertical edges surrounded by the dashed box).
\label{lem:graphX}
\end{lemma}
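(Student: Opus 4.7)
The plan is to analyse the $2\times 3$ window
$W=\{u_{i-1},u_i,u_{i+1},v_{i-1},v_i,v_{i+1}\}$;
since the rungs at positions $i-1,i,i+1$ all belong to $X$ (as $i\in\{3,4,5\}$), the diameter of $W$ in $X$ equals $3$. Three numerical constraints follow on $\pi|_W$: every colour $c\in\{3,4,5\}$ appears at most once in $W$, since two copies would require distance strictly greater than $c\ge 3$; colour $2$ appears at most twice, and only on one of the two diagonals $\{u_{i-1},v_{i+1}\}$ or $\{u_{i+1},v_{i-1}\}$; and colour $1$ is restricted to an independent set, hence to at most $2$ of the four boundary vertices $u_{i\pm 1},v_{i\pm 1}$.

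The argument splits into two halves. \emph{First}, I would prove that one of $u_i,v_i$ is coloured $2$. Suppose for contradiction that $\pi(u_i)=a$ and $\pi(v_i)=b$ with $a,b$ distinct elements of $\{3,4,5\}$, and let $c$ denote the third one. Each boundary vertex of $W$ is adjacent to one of $u_i,v_i$ and at distance $2$ from the other, so its colour is forced into $\{1,2,c\}$. Together with the three constraints above this leaves only a small number of possible patterns for $(\pi(u_{i-1}),\pi(v_{i-1}),\pi(u_{i+1}),\pi(v_{i+1}))$. I would then propagate each such pattern one column outward to $u_{i\pm 2},v_{i\pm 2}$ (using the rungs that happen to exist at those positions), and iterate until reaching the degree-$2$ endpoints $u_0,u_1,u_7,u_8,v_0,v_1,v_7,v_8$; in every case some vertex along the way exhausts its set of admissible colours. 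The central case $i=4$ and the symmetric pair $i\in\{3,5\}$ must be handled separately, as $X$ is only symmetric under the reflection $i\leftrightarrow 8-i$.

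\emph{Second}, assuming without loss of generality that $\pi(u_i)=2$, the three neighbours $u_{i-1},u_{i+1},v_i$ of $u_i$ all differ from $2$ and are pairwise at distance at most $2$ in $X$, so any two of them sharing a colour must share colour $1$. Since $\pi(v_i)\neq 1$ by hypothesis, it suffices to prove that $\pi(u_{i-1})\neq 1$ and $\pi(u_{i+1})\neq 1$; the three neighbours then take three pairwise distinct values in $\{3,4,5\}$, and hence all of them. To rule out $\pi(u_{i-1})=1$ (the case $\pi(u_{i+1})=1$ being symmetric), I would again propagate outward: under $\pi(u_{i-1})=1$, $\pi(u_i)=2$, $\pi(v_i)\in\{3,4,5\}$, the colour $\pi(v_{i-1})$ is confined to two values of $\{3,4,5\}$, then $\pi(u_{i-2})$ becomes the unique remaining value of $\{3,4,5\}$, and the cascade continues column by column until a vertex near $u_1$ or $v_1$ has no admissible colour.

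The main obstacle is the bookkeeping of this case analysis: several boundary patterns arise in the first half, each one has to be followed outward for each value of $i\in\{3,4,5\}$, and the behaviour close to the dangling ends of $X$ (where rungs are absent) differs slightly from the interior. I anticipate organising the verification as a compact table listing, for every candidate pattern and every $i$, the vertex at which the forced colouring breaks down, so that all cascades can be checked at a glance.
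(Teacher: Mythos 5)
Your overall strategy (first force colour $2$ onto one of $u_i,v_i$, then force the three neighbours to be $3,4,5$, in both steps by propagating forced colours to a contradiction) is the same two-stage case analysis the paper uses. The problem is that your proposal never carries the analysis out: the statements ``I would then propagate each such pattern one column outward \dots in every case some vertex along the way exhausts its set of admissible colours'' and ``I anticipate organising the verification as a compact table'' defer exactly the content that constitutes the proof. The paper's argument for this lemma is nothing but that verification (an appendix with roughly two dozen configurations), and the claim that every candidate pattern dies is not self-evident -- it depends delicately on where the rungs of $X$ are and on which distance-$3$ and distance-$4$ constraints happen to be available. A plan that asserts the outcome of the case analysis without exhibiting it leaves the lemma unproved.

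Moreover, the one propagation you do describe in detail points in the wrong direction. Take the second half with $\pi(u_i)=2$, $\pi(v_i)=3$ and, for contradiction, $\pi(u_{i-1})=1$. As you say, $\pi(v_{i-1})\in\{4,5\}$ and then $\pi(u_{i-2})$ is the other element of $\{4,5\}$; but the cascade does \emph{not} continue leftwards as you claim: at the next column colours $1$ and $2$ become admissible again (for $i=3$ the endpoint $u_0$ can simply take colour $1$), so no vertex ``near $u_1$ or $v_1$'' is ever starved. The actual contradiction is obtained by bouncing back across column $i$: since $\{\pi(u_{i-2}),\pi(v_{i-1})\}=\{4,5\}$ and $u_{i+1}$ is within distance $3$ of both of these vertices and within distance $2$ of $u_i$ and $v_i$, one gets $\pi(u_{i+1})=1$, and then $v_{i+1}$ (adjacent to $v_i$ and $u_{i+1}$, at distance $2$ from $u_i$ and $v_{i-1}$ and at distance $4$ from $u_{i-2}$) has no admissible colour. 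This reflection of constraints to the opposite side of column $i$ is the mechanism that closes essentially all cases in the paper's proof, and it is absent from your plan; as written, your outward-to-the-boundary cascades would simply peter out in several subcases. So the gap is twofold: the exhaustive verification is missing, and the propagation scheme you sketch in its place would not succeed without being substantially modified.
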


\begin{proof}
The proof is done by case analysis and is given in Appendix~\ref{ap:graphX}.
\end{proof}

Observe now that for every integer $n\ge 9$,
the subgraph of $CL_n$ induced by the set of vertices
$\{u_i,v_i\ |\ 0\le i\le 8\}$ contains
the graph $X$ of Figure~\ref{fig:graphX} as a subgraph.
Moreover, every packing $5$-colouring $\pi$ of $CL_n$, $6\le n\le 8$, can be 
``unfolded'' to produce a packing $5$-colouring $\pi'$ of $X$, 
by setting 
$\pi'(u_i)=\pi(u_i)$ and $\pi'(v_i)=\pi(v_i)$ for every $i$, $0\le i\le n-1$,
and
$\pi'(u_{n-1+j})=\pi(u_{j-1})$ and $\pi'(v_{n-1+j})=\pi(v_{j-1})$ for every $j$, $1\le j\le 9-n$.
This follows from the fact that vertices $u_j$ and~$u_{n+j}$, as well as vertices
$v_j$ and~$v_{n+j}$, are at distance $n\ge 6$ from each other, while the largest colour
used by $\pi'$ is~$5$.
Therefore, thanks to the symmetries of $CL_n$ for every $n\ge 6$,
Proposition~\ref{prop:subgraph} and Lemma~\ref{lem:graphX} give the following corollary.

\begin{corollary}
Let $CL_n$, $n\ge 6$, be a circular ladder with $\pcn(CL_n)\le 5$,
and $\pi$ be a packing $5$-colouring of $CL_n$.
For every integer $i$, $0\le i\le n-1$, if $\pi(u_i)\neq 1$ and $\pi(v_{i})\neq 1$,
then either $u_i$ or $v_i$ has colour $2$, and its three neighbours have colours
$3$, $4$ and $5$.
\label{cor:CL1alt}
\end{corollary}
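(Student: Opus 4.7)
The plan is to reduce the statement, at an arbitrary column index $i$, to the ``middle column'' case of Lemma~\ref{lem:graphX}, using the rotational symmetry of $CL_n$ together with the embedding/unfolding construction already described in the paragraph preceding the corollary.

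First, I would exploit the rotational automorphism of $CL_n$: the map sending $u_j \mapsto u_{(j+k) \bmod n}$ and $v_j \mapsto v_{(j+k) \bmod n}$ is an automorphism for every $k$. Taking $k \equiv 4 - i \pmod n$ and composing $\pi$ with this automorphism yields a new packing $5$-colouring $\widetilde\pi$ of $CL_n$ for which the hypothesis holds at column $4$, i.e.\ $\widetilde\pi(u_4) \neq 1$ and $\widetilde\pi(v_4) \neq 1$. Since the conclusion of the corollary is carried by the same automorphism, proving it for $\widetilde\pi$ at $i=4$ establishes it for $\pi$ at the original index; moreover $4$ lies in the range $3 \leq i \leq 5$ demanded by Lemma~\ref{lem:graphX}.

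Next, I would transfer $\widetilde\pi$ to a packing $5$-colouring $\pi'$ of the graph $X$ of Figure~\ref{fig:graphX}, arranged so that column $4$ of $X$ corresponds to column $4$ of $CL_n$. If $n \geq 9$, this is immediate: the subgraph of $CL_n$ induced by columns $0,\dots,8$ contains $X$ as a subgraph, and by Proposition~\ref{prop:subgraph} the restriction of $\widetilde\pi$ to these columns is a packing $5$-colouring of $X$. If $6 \leq n \leq 8$, I would use verbatim the unfolding $\pi'$ given just before the corollary; validity of $\pi'$ as a packing $5$-colouring of $X$ has already been certified there, since the only new equal-colour pairs created are of the form $\{u_j, u_{n+j}\}$ or $\{v_j, v_{n+j}\}$, which are at distance $n \geq 6 > 5$ in $X$.

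Finally, applying Lemma~\ref{lem:graphX} to $\pi'$ at $i = 4$ gives that one of $u_4, v_4$ is coloured $2$ and that its three neighbours carry the three colours $3, 4, 5$. Since both $u_4$, $v_4$ and their three neighbours in $X$ are the very same vertices as in $CL_n$ (the embedding/unfolding affects only columns with index at least $5$ or at most $3$ counted cyclically), the conclusion transports back to $\widetilde\pi$ on $CL_n$, and then to $\pi$ at the original column $i$ via the rotational automorphism. I do not anticipate any real obstacle here: once the unfolding construction is granted (this being the only non-trivial point, and it is handled in the preceding paragraph), the argument is pure bookkeeping built on Proposition~\ref{prop:subgraph} and Lemma~\ref{lem:graphX}.
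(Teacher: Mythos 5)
Your proposal is correct and follows essentially the same route as the paper: the paper also derives the corollary from Lemma~\ref{lem:graphX} by noting that $X$ is a subgraph of $CL_n$ for $n\ge 9$ (restriction via Proposition~\ref{prop:subgraph}), by unfolding a packing $5$-colouring when $6\le n\le 8$, and by invoking the symmetries of $CL_n$ to place the relevant column in the range $3\le i\le 5$. Your write-up merely makes the rotational automorphism and the back-transfer of the conclusion explicit, which is exactly the bookkeeping the paper leaves implicit.
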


Let $CL_n$ be a circular ladder satisfying the hypothesis of Corollary~\ref{cor:CL1alt},
and $\pi$ be a packing 5-colouring of $CL_n$.
From Corollary~\ref{cor:CL1alt}, it follows that if $\pi(u_i)\neq 1$ and $\pi(v_i)\neq 1$ for some edge $u_iv_i$ of $CL_n$,
then the colour~2 has to be used on the edge $u_iv_i$ and, since the neighbours of the 2-coloured vertex
are coloured with 3, 4 and~5, the colour~2 can be replaced by colour~1.
Therefore, we get the following corollary.


\begin{corollary}
If $CL_n$, $n\ge 6$, is a circular ladder with $\pcn(CL_n)\le 5$,
then there exists a packing $5$-colouring
of $CL_n$ such that the colour~$1$ is used on each edge of $CL_n$.
\label{coro:bipa}
\end{corollary}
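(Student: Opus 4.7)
The plan is to start from any packing 5-colouring $\pi$ of $CL_n$ and, rung by rung, transform it into a packing 5-colouring in which every rung edge $u_iv_i$ is incident to a vertex of colour~$1$; the remaining ``side'' edges will then be forced to have a 1-coloured endpoint as well.

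First, I would pick any rung $u_iv_i$ with $\pi(u_i)\neq 1$ and $\pi(v_i)\neq 1$ and apply Corollary~\ref{cor:CL1alt}: exactly one of $u_i, v_i$ carries colour~$2$, say $u_i$, and the three neighbours $u_{i-1}, u_{i+1}, v_i$ carry the three colours $3,4,5$. I would then simply recolour $u_i$ with colour~$1$. The resulting map is still a packing 5-colouring, since colour~$1$ remains a 1-packing (no neighbour of $u_i$ carries colour~$1$), colour~$2$ remains a 2-packing (one vertex has merely been removed from its colour class), and the classes of colours $3,4,5$ are unchanged. Re-applying Corollary~\ref{cor:CL1alt} to the modified colouring and iterating over all rungs where colour~$1$ is still absent, I eventually obtain a packing 5-colouring $\pi^{\star}$ of $CL_n$ in which every rung $u_iv_i$ is incident to a vertex of colour~$1$.

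To conclude, I would show that $\pi^{\star}$ uses colour~$1$ on every edge of $CL_n$. Rung edges are handled by construction. For an edge $u_iu_{i+1}$, suppose for a contradiction that $\pi^{\star}(u_i)\neq 1$ and $\pi^{\star}(u_{i+1})\neq 1$: since both rungs $u_iv_i$ and $u_{i+1}v_{i+1}$ contain a 1-coloured vertex, this forces $\pi^{\star}(v_i)=\pi^{\star}(v_{i+1})=1$, contradicting the fact that $v_iv_{i+1}\in E(CL_n)$ and that colour~$1$ is a 1-packing. The same argument applies symmetrically to every edge $v_iv_{i+1}$.

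The only delicate point is to make sure the recolouring step preserves the packing 5-colouring property; this is precisely what the structural part of Corollary~\ref{cor:CL1alt}, namely that the three neighbours of the 2-coloured vertex receive colours $3,4,5$, guarantees, so the iteration goes through without any additional bookkeeping.
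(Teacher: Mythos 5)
Your proof is correct and takes essentially the same route as the paper: on every rung missing colour~$1$, recolour with colour~$1$ the 2-coloured vertex guaranteed by Corollary~\ref{cor:CL1alt} (whose neighbours carry $3,4,5$, so the packing property is preserved), and conclude that every edge then has a 1-coloured endpoint. Your rung-by-rung iteration and the explicit contradiction argument for the side edges $u_iu_{i+1}$ and $v_iv_{i+1}$ simply spell out details the paper leaves implicit.
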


Note that from Corollary~\ref{coro:bipa}, it follows that
for every integer $n\ge 6$, $\pcn(CL_n)\le 5$ implies that $CL_n$ is a bipartite graph.
Hence, $\pcn(CL_n)\ge 6$ for every odd $n\ge 6$.


\begin{figure} 
\begin{center}
\begin{tikzpicture}[domain=0:2,x=1cm,y=1cm]
\SOMMET{0,0}; \SOMMET{1,0}; \SOMMET{2,0}; 
\SOMMET{0,1}; \SOMMET{1,1}; \SOMMET{2,1}; 
\ARETEH{0,0}{2}; \ARETEH{0,1}{2}; 
\ARETEV{0,0};
\ARETEV{1,0};
\ARETEV{2,0};
\ETIQUETTEA{0,1}{1};
\ETIQUETTEA{1,1}{2};
\ETIQUETTEA{2,1}{3};
\ETIQUETTEB{0,0}{4};
\ETIQUETTEB{1,0}{1};
\ETIQUETTEB{2,0}{5};
\draw[thick] (0,1) .. controls (0.5,1.8) and (1.5,1.8) .. (2,1);
\draw[thick] (0,0) .. controls (0.5,-0.8) and (1.5,-0.8) .. (2,0);
\end{tikzpicture}
\hskip 1.5cm
\begin{tikzpicture}[domain=0:3,x=1cm,y=1cm]
\SOMMET{0,0}; \SOMMET{1,0}; \SOMMET{2,0}; \SOMMET{3,0}; 
\SOMMET{0,1}; \SOMMET{1,1}; \SOMMET{2,1}; \SOMMET{3,1}; 
\ARETEH{0,0}{3}; \ARETEH{0,1}{3}; 
\ARETEV{0,0};
\ARETEV{1,0};
\ARETEV{2,0};
\ARETEV{3,0};
\ETIQUETTEA{0,1}{1};
\ETIQUETTEA{1,1}{2};
\ETIQUETTEA{2,1}{1};
\ETIQUETTEA{3,1}{3};
\ETIQUETTEB{0,0}{4};
\ETIQUETTEB{1,0}{1};
\ETIQUETTEB{2,0}{5};
\ETIQUETTEB{3,0}{1};
\draw[thick] (0,1) .. controls (0.5,1.8) and (2.5,1.8) .. (3,1);
\draw[thick] (0,0) .. controls (0.5,-0.8) and (2.5,-0.8) .. (3,0);
\end{tikzpicture}
\hskip 1.5cm
\begin{tikzpicture}[domain=0:4,x=1cm,y=1cm]
\SOMMET{0,0}; \SOMMET{1,0}; \SOMMET{2,0}; \SOMMET{3,0}; \SOMMET{4,0}; 
\SOMMET{0,1}; \SOMMET{1,1}; \SOMMET{2,1}; \SOMMET{3,1}; \SOMMET{4,1}; 
\ARETEH{0,0}{4}; \ARETEH{0,1}{4}; 
\ARETEV{0,0};
\ARETEV{1,0};
\ARETEV{2,0};
\ARETEV{3,0};
\ARETEV{4,0};
\ETIQUETTEA{0,1}{1};
\ETIQUETTEA{1,1}{3};
\ETIQUETTEA{2,1}{1};
\ETIQUETTEA{3,1}{2};
\ETIQUETTEA{4,1}{6};
\ETIQUETTEB{0,0}{2};
\ETIQUETTEB{1,0}{1};
\ETIQUETTEB{2,0}{4};
\ETIQUETTEB{3,0}{1};
\ETIQUETTEB{4,0}{5};
\draw[thick] (0,1) .. controls (0.5,1.8) and (3.5,1.8) .. (4,1);
\draw[thick] (0,0) .. controls (0.5,-0.8) and (3.5,-0.8) .. (4,0);
\end{tikzpicture}
\caption{Optimal packing colouring of $CL_3$, $CL_4$ and $CL_5$.}
\label{fig:CL345}
\end{center}
\end{figure}


We are now able to prove the main result of this section.

\begin{theorem}
For every integer $n\ge 3$,
$$\pcn(CL_n)=\left\{
     \begin{array}{ll}
        5 & \hbox{if $n=3$, or $n$ is even and $n\not\in\{8,14\}$,} \\
        7 & \hbox{if $n\in\{7,8,9\}$,} \\
        6 & \hbox{otherwise.}
     \end{array}\right.$$
\label{th:cl}
\end{theorem}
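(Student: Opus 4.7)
The plan is to treat the three colouring regimes separately, establishing an explicit construction in each case (upper bound) and then invoking the structural corollaries of Section~\ref{sec:ladder} for the matching lower bound.

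First I would dispose of the small cases $n\in\{3,4,5\}$. The packing colourings displayed in Figure~\ref{fig:CL345} provide the upper bounds $5$, $5$ and $6$ respectively. The matching lower bounds follow from a short computation of $\rho_i(CL_n)$ combined with Proposition~\ref{prop:rho-i}: for $n=3$ (diameter $2$) one has $\rho_1=2$ and $\rho_i=1$ for $i\ge 2$, forcing $\pcn(CL_3)\ge 5$; for $n=5$ one checks $\rho_1=4$, $\rho_2=2$ and $\rho_i=1$ for $i\ge 3$, yielding $\pcn(CL_5)\ge 6$. For $n=4$ (which is $Q_3$), Proposition~\ref{prop:rho-i} is not sharp; here a direct argument based on the structure of the two maximum independent sets of $Q_3$ and the fact that their complements contain no pair at distance $3$ excludes the possibility $\pcn(CL_4)=4$.

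For the upper bounds when $n\ge 6$, I would introduce two periodic packing $5$-colourings, of periods $6$ and $10$ (the period-$6$ pattern being the one already used in \cite{WR13}), and verify that every even $n\ge 6$ with $n\notin\{8,14\}$ can be written as $n=6a+10b$ with $a,b\ge 0$; concatenating the patterns along such a decomposition, after arranging their boundary columns to agree, yields a packing $5$-colouring of $CL_n$. The exclusion of $n=14$ is forced here because $14$ admits no such decomposition. For odd $n\ge 11$ I would exhibit an analogous periodic packing $6$-colouring, and for the four remaining lengths $n\in\{7,8,9,14\}$ I would simply write down ad-hoc finite $7$- or $6$-colourings.

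The lower bounds rely on Corollaries~\ref{cor:CL1alt} and \ref{coro:bipa}. For every odd $n\ge 7$, Corollary~\ref{coro:bipa} immediately gives $\pcn(CL_n)\ge 6$ since $CL_n$ is not bipartite. The remaining inequalities are $\pcn(CL_{14})\ge 6$, $\pcn(CL_n)\ge 7$ for $n\in\{7,9\}$, and $\pcn(CL_8)\ge 7$. In each case I would assume for contradiction a colouring with too few colours and use Corollary~\ref{cor:CL1alt} (together with Lemma~\ref{lem:graphX}) to force, rung by rung, the location of the colour-$1$ class; by Corollary~\ref{coro:bipa} we may moreover assume that colour $1$ appears on every rung, so that the colour-$1$ vertices alternate between the two sides of the ladder. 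The non-colour-$1$ vertices then form a single cycle of length~$n$ in $CL_n$, on which the remaining colours must be a packing colouring satisfying additional distance constraints induced by the neighbouring rungs; analysing the possible placements of the colours $2,3,4$ (and, where relevant, $5,6$) on this cycle produces the contradiction.

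The principal obstacle is precisely this final case analysis for $n\in\{7,8,9,14\}$: Corollary~\ref{cor:CL1alt} reduces the problem to a finite set of configurations on a short piece of the ladder, but enumerating and excluding them---especially for $CL_8$, where both the $5$- and $6$-colour cases must be ruled out---requires careful bookkeeping. A secondary technical point is ensuring that the period-$6$ and period-$10$ blocks can legally be spliced at every junction appearing in the $6a+10b$ decompositions; this is a routine but non-trivial verification that underlies the upper-bound construction for all even $n\ge 16$.
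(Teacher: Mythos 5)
Your treatment of the small cases and of the upper bounds matches the paper's: the explicit colourings for $n\le 5$ with the $\rho_i$-counting (and the extra observation for $CL_4$), and the construction of packing $5$-colourings of even ladders from period-$6$ and period-$10$ blocks (the paper's ``length-$20$'' block is just two period-$10$ blocks, so your $n=6a+10b$ decomposition, which fails exactly for $n\in\{8,14\}$, is the same mechanism), together with analogous $6$-colouring patterns for odd $n\ge 11$ and ad-hoc colourings for $n\in\{7,8,9,14\}$.

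The gap is in your lower bounds for the exceptional cases. Lemma~\ref{lem:graphX}, Corollary~\ref{cor:CL1alt} and Corollary~\ref{coro:bipa} are statements about packing \emph{$5$-colourings} only: their case analyses assume the colour palette is $\{1,\dots,5\}$, and they say nothing once a colour $6$ is available. Hence they cannot be used, as you propose, to force the structure of a hypothetical packing $6$-colouring of $CL_7$, $CL_8$ or $CL_9$; ruling out six colours there requires a different argument, and you supply none. (The paper does it by counting: from $\rho_1,\dots,\rho_6$ together with refinements on how large the colour-$1$, colour-$2$ and colour-$4$ classes can simultaneously be, at most $13$, $15$, $17$ vertices of $CL_7$, $CL_8$, $CL_9$ respectively can receive a colour in $\{1,\dots,6\}$, which is one short of the order; the bound $\pcn(CL_{14})\ge 6$ is obtained the same way, without any appeal to the structural corollaries.) A secondary slip: if colour $1$ occupies one endpoint of every rung and alternates, the uncoloured vertices do \emph{not} form a cycle of $CL_n$ — they form an independent set whose consecutive members are at distance $2$ — so the ``remaining colours on a cycle of length $n$'' picture would need to be reformulated even in the $5$-colour case ($n=14$) where your structural route is at least admissible.
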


\begin{proof}
We first consider the case $n\le 5$.
Figure~\ref{fig:CL345} describes a packing 5-colouring of $CL_3$ and $CL_4$,
and a packing 6-colouring of $CL_5$.
We claim that these three packing colourings are optimal.
To see that, observe that
$\rho_1(CL_3)=2$, $\rho_i(CL_3)=1$ for every $i\ge 2$,
$\rho_1(CL_4)=\rho_1(CL_5)=4$, $\rho_2(CL_4)=\rho_2(CL_5)=2$, and 
$\rho_i(CL_4)=\rho_i(CL_5)=1$ for every $i\ge 3$.
The optimality for $CL_3$ and $CL_5$ then follows from Proposition~\ref{prop:rho-i}.
The optimality for $CL_4$ also follows, with the additional 
observation that colour~$2$ can be used at most once if colour~$1$ is used four times.

Assume now $n\ge 6$.
Since $n\ge 6$ and every circular ladder $CL_n$ contains the corona graph $C_n\odot K_1$ as a subgraph,
we get $\pcn(CL_n)\ge \pcn(C_n\odot K_1) \ge 5$ by Theorem~\ref{th:CrCn} and Proposition~\ref{prop:subgraph}.
Moreover, by Corollary~\ref{coro:bipa}, we have $\pcn(CL_n)\ge 6$ if $n$ is odd.

We now consider two general cases.

\begin{enumerate}
\item {\it $n$ is even} and $n\notin\{8,14\}$.\\
As observed above, in that case, it is enough to exhibit a packing $5$-colouring of $CL_n$
to prove $\pcn(CL_n)=5$.

If $n\equiv 0\pmod 6$, a packing 5-colouring of $CL_n$ is obtained by repeating the following 
circular pattern (the first row gives the colours of vertices $u_i$, $0\le i\le n-1$,
the second row gives the colours of vertices $v_i$, $0\le i\le n-1$, according to the value of $(i\mod 6)$):
\begin{center}
$\begin{array}{||c||}
1\ 3\ 1\ 2\ 1\ 5\\
2\ 1\ 4\ 1\ 3\ 1
\end{array}$
\end{center}

If $n\equiv 2\pmod 6$, which implies $n\ge 20$, a packing 5-colouring of $CL_n$ is obtained by repeating
the previous circular pattern $\frac{n-20}{6}$ times and adding a pattern of length 20, as illustrated below:
\begin{center}
$\begin{array}{||c||c}
1\ 3\ 1\ 2\ 1\ 5 & 1\ 3\ 1\ 2\ 1\ 3\ 1\ 4\ 1\ 5\ 1\ 3\ 1\ 2\ 1\ 3\ 1\ 4\ 1\ 5\\
2\ 1\ 4\ 1\ 3\ 1 & 2\ 1\ 4\ 1\ 5\ 1\ 2\ 1\ 3\ 1\ 2\ 1\ 4\ 1\ 5\ 1\ 2\ 1\ 3\ 1
\end{array}$
\end{center}

Finally, if $n\equiv 4\pmod 6$, which implies $n\ge 10$, a packing 5-colouring of $CL_n$ is obtained by repeating
the same circular pattern $\frac{n-10}{6}$ times and adding a pattern of length 10:
\begin{center}
$\begin{array}{||c||c}
1\ 3\ 1\ 2\ 1\ 5 & 1\ 3\ 1\ 2\ 1\ 3\ 1\ 4\ 1\ 5\\
2\ 1\ 4\ 1\ 3\ 1 & 2\ 1\ 4\ 1\ 5\ 1\ 2\ 1\ 3\ 1
\end{array}$
\end{center}

\item {\it $n$ is odd} and $n\ge 11$.\\
As observed above, in that case, it is enough to exhibit a packing $6$-colouring of $CL_n$
to prove $\pcn(CL_n)=6$.

Similarly as in the previous case, 
if $n\equiv 1,3$ or $5\pmod 6$,
a packing 6-colouring of $CL_n$ is obtained by repeating
the previous circular pattern $\frac{n-7}{6}$, $\frac{n-9}{6}$ or $\frac{n-5}{6}$ times, respectively, 
and adding a pattern of length 7, 9 or 5, respectively, as illustrated below:
\begin{center}
$\begin{array}{||c||c}
1\ 3\ 1\ 2\ 1\ 5 & 1\ 3\ 1\ 4\ 1\ 2\ 6\\
2\ 1\ 4\ 1\ 3\ 1 & 6\ 1\ 2\ 1\ 3\ 1\ 5
\end{array}$
\end{center}
\begin{center}
$\begin{array}{||c||c}
1\ 3\ 1\ 2\ 1\ 5 & 1\ 4\ 1\ 2\ 3\ 1\ 4\ 1\ 6\\
2\ 1\ 4\ 1\ 3\ 1 & 2\ 1\ 6\ 1\ 5\ 2\ 1\ 3\ 1
\end{array}$
\end{center}
\begin{center}
$\begin{array}{||c||c}
1\ 3\ 1\ 2\ 1\ 5 & 1\ 3\ 1\ 2\ 6\\
2\ 1\ 4\ 1\ 3\ 1 & 2\ 1\ 4\ 1\ 5
\end{array}$
\end{center}

\end{enumerate}

It remains to consider four cases, namely $n=7,8,9,14$,
which we consider separately.

\begin{enumerate}
\item $n=7$.\\
We first claim that $\pcn(CL_7)\ge 7$.
Note that $\rho_1(CL_7)=6$, $\rho_2(CL_7)=3$, $\rho_3(CL_7)=2$, 
and $\rho_i(CL_7)=1$ for every $i\ge 4$.
However, if we use six times colour~$1$, colour~$2$ can be used at most twice.
Hence, at most 13 vertices of $CL_7$ can be coloured with a colour in $\{1,\dots,6\}$
and the claim follows.

A packing 7-colouring of $CL_7$ is given by the following pattern:
\begin{center}
$\begin{array}{c}
1\ 3\ 1\ 2\ 1\ 4\ 5\\
2\ 1\ 6\ 1\ 3\ 1\ 7
\end{array}$
\end{center}

\item $n=8$.\\
We first claim that $\pcn(CL_8)\ge 7$.
Note that $\rho_1(CL_8)=8$, $\rho_2(CL_8)=4$, $\rho_3(CL_8)=\rho_4(CL_8)=2$, 
and $\rho_i(CL_8)=1$ for every $i\ge 5$.
However, if we use eight times colour~$1$, colour~$2$ can be used at most twice, and then colour~$4$ at most once.
On the other hand, if we use seven times colour~$1$, then, either colour~$2$ is used thrice, and then colour~$4$ can
be used at most once,
or colour~$2$ is used at most twice, and then colour~$4$ can be used at most twice.
Hence, at most 15 vertices of $CL_8$ can be coloured with a colour in $\{1,\dots,6\}$
and the claim follows.

A packing 7-colouring of $CL_8$ is given by the following pattern:
\begin{center}
$\begin{array}{c}
1\ 3\ 1\ 2\ 1\ 5\ 1\ 7\\
2\ 1\ 4\ 1\ 3\ 1\ 6\ 1
\end{array}$
\end{center}

\item $n=9$.\\
We first claim that $\pcn(CL_9)\ge 7$.
Note that $\rho_1(CL_9)=8$, $\rho_2(CL_9)=4$, $\rho_3(CL_9)=\rho_4(CL_9)=2$, 
and $\rho_i(CL_9)=1$ for every $i\ge 5$.
However, if we use eight times colour~$1$, colour~$2$ can be used at most thrice.
Hence, at most 17 vertices of $CL_9$ can be coloured with a colour in $\{1,\dots,6\}$
and the claim follows.

A packing 7-colouring of $CL_9$ is given by the following pattern:
\begin{center}
$\begin{array}{c}
1\ 3\ 1\ 2\ 1\ 5\ 1\ 4\ 6\\
2\ 1\ 4\ 1\ 3\ 1\ 2\ 1\ 7
\end{array}$
\end{center}

\item $n=14$.\\
We first claim that $\pcn(CL_{14})\ge 6$.
Note that $\rho_1(CL_{14})=14$, $\rho_2(CL_{14})=6$, $\rho_3(CL_{14})=4$, $\rho_4(CL_{14})=3$
and $\rho_5(CL_{14})=2$.
However, if we use 14 times colour~$1$, colour~$2$ can be used at most four times.
On the other hand, if we use 13 times colour~$1$, colour~$2$ can be used at most five times.
Hence, at most 27 vertices of $CL_{14}$ can be coloured with a colour in $\{1,\dots,5\}$
and the claim follows.

A packing 6-colouring of $CL_{14}$ is given by the following pattern:
\begin{center}
$\begin{array}{c}
1\ 3\ 1\ 2\ 1\ 5\ 1\ 2\ 1\ 4\ 1\ 3\ 1\ 6\\
2\ 1\ 4\ 1\ 3\ 1\ 6\ 1\ 3\ 1\ 2\ 1\ 5\ 1
\end{array}$
\end{center}

\end{enumerate}

This completes the proof of Theorem~\ref{th:cl}.
\end{proof}

\section{H-graphs}
\label{sec:Hgraphs}


\begin{figure} 
\begin{center}
\begin{tikzpicture}[domain=0:12,x=1cm,y=1cm]


\SOMMET{2.5,-5};\SOMMET{2.5,-4}; \SOMMET{2.5,-3};
\SOMMET{3.5,-5};\SOMMET{3.5,-4}; \SOMMET{3.5,-3};
\SOMMET{4.5,-5};\SOMMET{4.5,-4}; \SOMMET{4.5,-3};
\SOMMET{5.5,-5};\SOMMET{5.5,-4}; \SOMMET{5.5,-3};
\SOMMET{6.5,-5};\SOMMET{6.5,-4}; \SOMMET{6.5,-3};
\SOMMET{7.5,-5};\SOMMET{7.5,-4}; \SOMMET{7.5,-3};
\SOMMET{8.5,-5};\SOMMET{8.5,-4}; \SOMMET{8.5,-3};
\SOMMET{9.5,-5};\SOMMET{9.5,-4}; \SOMMET{9.5,-3};


\ETIQUETTE{2.5,-2.4}{$u_0$}; \ETIQUETTE{2.5,-5.1}{$w_0$}; \ETIQUETTE{2.25,-3.51}{$v_0$};
\ETIQUETTE{3.5,-2.4}{$u_1$}; \ETIQUETTE{3.5,-5.1}{$w_1$}; \ETIQUETTE{3.75,-3.51}{$v_1$};
\ETIQUETTE{4.5,-2.4}{$u_2$}; \ETIQUETTE{4.5,-5.1}{$w_2$}; \ETIQUETTE{4.25,-3.51}{$v_2$};
\ETIQUETTE{5.5,-2.4}{$u_3$}; \ETIQUETTE{5.5,-5.1}{$w_3$}; \ETIQUETTE{5.75,-3.51}{$v_3$};
\ETIQUETTE{6.5,-2.4}{$u_4$}; \ETIQUETTE{6.5,-5.1}{$w_4$}; \ETIQUETTE{6.25,-3.51}{$v_4$};
\ETIQUETTE{7.5,-2.4}{$u_5$}; \ETIQUETTE{7.5,-5.1}{$w_5$}; \ETIQUETTE{7.75,-3.51}{$v_5$};
\ETIQUETTE{8.5,-2.4}{$u_6$}; \ETIQUETTE{8.5,-5.1}{$w_6$}; \ETIQUETTE{8.25,-3.51}{$v_6$};
\ETIQUETTE{9.5,-2.4}{$u_7$}; \ETIQUETTE{9.5,-5.1}{$w_7$}; \ETIQUETTE{9.75,-3.51}{$v_7$};

\ARETEH{2.5,-3}{7};
\ARETEH{2.5,-5}{7};

\ARETEV{2.5,-5};
\ARETEV{3.5,-5};
\ARETEV{4.5,-5};
\ARETEV{5.5,-5};
\ARETEV{6.5,-5};
\ARETEV{7.5,-5};
\ARETEV{8.5,-5};
\ARETEV{9.5,-5};

\ARETEV{2.5,-4};
\ARETEV{3.5,-4};
\ARETEV{4.5,-4};
\ARETEV{5.5,-4};
\ARETEV{6.5,-4};
\ARETEV{7.5,-4};
\ARETEV{8.5,-4};
\ARETEV{9.5,-4};

\ARETEH{2.5,-4}{1};
\ARETEH{4.5,-4}{1};
\ARETEH{6.5,-4}{1};
\ARETEH{8.5,-4}{1};

\draw[thick] (2.5,-3) .. controls (4,-1.6) and (8,-1.6) .. (9.5,-3);
\draw[thick] (2.5,-5) .. controls (4,-6.4) and (8,-6.4) .. (9.5,-5);

\end{tikzpicture}
\caption{The H-graph $H(4)$.}
\label{fig:H4}
\end{center}
\end{figure}


The \emph{H-graph} $H(r)$, $r\ge 2$, is the 3-regular graph of order $6r$, 
with vertex set
$$V(H(r))=\{u_i,v_i,w_i:0\le i\le 2r-1\},$$ 
 and edge set (subscripts are taken modulo $2r$)
 $$\begin{array}{rcl}
 E(H(r)) & = & \{(u_i,u_{i+1}),\ (w_i,w_{i+1}),\ (u_i,v_i),\ (v_i,w_i):0\le i\le 2r-1\}\\
 & & \cup \ \{(v_{2i},v_{2i+1}):0\le i\le r-1\}. 
 \end{array}$$
Figure~\ref{fig:H4} depicts the H-graph $H(4)$.
These graphs have been introduced and studied by William and Roy in~\cite{WR13},
where it is proved that 
$\pcn(H(r))\le 5$ for every H-graph $H(r)$ with even $r\ge 4$.
We complete their result in Theorem~\ref{th:H(r)} below.

We first prove a technical lemma.
For every pair of integers $r\ge 2$ and $0\le i\le r-1$,
we denote by $G_i(r)$ the subgraph of $H(r)$ 
induced by the set of vertices 
$\{u_{2i},u_{2i+1},v_{2i},v_{2i+1},w_{2i},w_{2i+1}\}.$
Observe that for every $r\ge 2$, all the subgraphs $G_i(r)$ are isomorphic
to the graph depicted in Figure~\ref{fig:G_i(r)}(a),
and thus $\pcn(G_i(r))=\pcn(P_2\Box P_3)=4$~\cite{GHHHR08}.

For a given packing 5-colouring $\pi$ of $H(r)$,
we denote by $\pi(G_i(r))$ the set of colours assigned to the vertices of $G_i(r)$.
We then have the following result.


\begin{lemma}\label{lem:G_i(r)}
For every integer $r\ge 3$ and every packing $5$-colouring $\pi$ of $H(r)$, 
$\pi(G_i(r)) \cap \pi(G_{i+1}(r)) = \{1,2,3\}$ for every $i$, $0\le i\le r-1$.
\end{lemma}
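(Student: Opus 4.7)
The claim decomposes into (I) $\{1,2,3\} \subseteq \pi(G_i)$ for every $i$, and (II) $\{4,5\} \cap \pi(G_i) \cap \pi(G_{i+1}) = \emptyset$ for every $i$. Both parts rely on a preliminary computation of the pairwise distances in $H(r)$ between the six vertices of $G_i$ and the six vertices of $G_{i+1}$. Three features of this distance table will be used throughout: the maximum distance equals $5$ and is attained only by the pairs $\{u_{2i}, w_{2i+3}\}$ and $\{w_{2i}, u_{2i+3}\}$; the three vertices $u_{2i+1}, v_{2i+1}, w_{2i+1}$ are at distance at most $4$ from every vertex of $G_{i+1}$; and no pair lies at distance $6$ or more, which immediately gives (II) for colour $5$. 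I also record the following consequences of $G_i \cong P_2 \Box P_3$: the only maximum independent sets are $I_1 = \{u_{2i}, v_{2i+1}, w_{2i}\}$ and $I_2 = \{u_{2i+1}, v_{2i}, w_{2i+1}\}$, the only $2$-packings of size $2$ are $\{u_{2i}, w_{2i+1}\}$ and $\{u_{2i+1}, w_{2i}\}$ (neither contained in $I_1$ or $I_2$, so $(|V_1\cap V(G_i)|,|V_2\cap V(G_i)|) = (3,2)$ is infeasible), and $\pcn(G_i) = 4$.

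For (II) with colour $4$, I would assume $\pi(u_{2i}) = \pi(w_{2i+3}) = 4$ (the other configuration being symmetric) and use (I). In $G_i$ the only size-$3$ independent set avoiding $u_{2i}$ is $I_2$, and symmetrically in $G_{i+1}$ only $J_1 := \{u_{2i+2}, v_{2i+3}, w_{2i+2}\}$ avoids $w_{2i+3}$; this yields exactly two colour shapes for each block, depending on whether $|V_1 \cap V(G_i)| = 3$ or $2$. Pairing the four combinations, each runs into one of three conflicts: adjacent boundary vertices such as $u_{2i+1}u_{2i+2}$ or $w_{2i+1}w_{2i+2}$ share colour $1$; the colour-$3$ vertices of $G_i$ and $G_{i+1}$ lie at distance $3 < 4$; or the colour-$2$ vertices lie at distance $2 < 3$. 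Hence colour $4$ cannot appear in both blocks.

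For (I), the inclusion $1 \in \pi(G_i)$ is a counting argument since $|V(G_i)| = 6$ exceeds $\rho_2(G_i) + \rho_3(G_i) + \rho_4(G_i) + \rho_5(G_i) = 5$. To show $\{2,3\} \subseteq \pi(G_i)$, suppose $c \in \{2,3\}$ satisfies $c \notin \pi(G_i)$; the capacity constraints together with the infeasibility of $(|V_1|,|V_2|) = (3,2)$ on $G_i$ force $V_1 \cap V(G_i)$ to equal either $I_1$ or $I_2$, with the complementary triple receiving three distinct colours from $\{2,3,4,5\}$ containing both $4$ and $5$. Colour $5$ is then blocked from $G_{i+1}$ by the distance table, and colour $4$ is blocked unless its host in $G_i$ lies in $\{u_{2i}, v_{2i}, w_{2i}\}$; in each of these three residual sub-cases the colour-$4$ vertex of $G_{i+1}$ is forced to a unique position, and a direct analysis of the $P_2 \Box P_3$ structure of $G_{i+1}$ (with its colour-$3$ and colour-$4$ vertices removed) shows that the four remaining vertices admit no partition into an independent set and a $2$-packing covering both colours $1$ and $2$. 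In every case we conclude $\pi(G_{i+1}) \subseteq \{1,2,3\}$, contradicting $|\pi(G_{i+1})| \ge \pcn(P_2 \Box P_3) = 4$.

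The main obstacle, shared by both parts, is this fine $P_2 \Box P_3$ analysis needed to rule out the sub-cases where colour $4$ propagates from $G_i$ to $G_{i+1}$ through one of the distance-$5$ pairs; the remaining steps are straightforward distance or counting observations.
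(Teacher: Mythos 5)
Your overall architecture (a distance table between consecutive blocks, a counting argument inside $P_2\Box P_3$, and a case analysis on where colour $4$ can sit in both blocks) is the right kind of argument, but it rests on an incorrect distance computation, and this creates a real gap. The maximum distance $5$ between $V(G_i)$ and $V(G_{i+1})$ is attained not only by the pairs $\{u_{2i},w_{2i+3}\}$ and $\{w_{2i},u_{2i+3}\}$ but also by $\{v_{2i},v_{2i+3}\}$: since the middle row has no edges between consecutive rungs, any path from $v_{2i}$ to $v_{2i+3}$ must climb to the $u$- or $w$-row, cross, and come back, which costs at least $2+1+2=5$. Consequently, in part (II) for colour $4$ the reduction to the single configuration $\pi(u_{2i})=\pi(w_{2i+3})=4$ ``up to symmetry'' is not valid: the configuration $\pi(v_{2i})=\pi(v_{2i+3})=4$ is not symmetric to it and must be analysed separately (it is exactly the second, and equally long, case in the paper's proof). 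As written, your argument never excludes it, so (II) for colour $4$ is not established; note also that your own part (I) tacitly contradicts the stated table, since you allow the colour-$4$ host $v_{2i}$ to force a colour-$4$ vertex in $G_{i+1}$, which is only possible because $d(v_{2i},v_{2i+3})=5$.

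A second, smaller gap: inside the case $\pi(u_{2i})=\pi(w_{2i+3})=4$ you assert that each block has ``exactly two colour shapes,'' which presupposes that colour $5$ appears on neither $G_i$ nor $G_{i+1}$. That is true but needs an argument: the paper obtains it by observing that every vertex of $G_{i-1}$ (resp.\ $G_{i+2}$) is within distance $4$ of $u_{2i}$ (resp.\ $w_{2i+3}$), so these neighbouring blocks cannot use colour $4$, hence must use colour $5$ (as each block needs at least four colours), hence $G_i$ and $G_{i+1}$ cannot. Without this step, distributions such as $(\#1,\#2,\#3,\#5)=(2,1,1,1)$ on the five uncoloured vertices are not ruled out and your enumeration of ``four combinations'' is incomplete. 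Your explicit part (I) (that $\{1,2,3\}$ must appear in every block) is a reasonable supplement — the paper only really proves and uses the inclusion $\pi(G_i)\cap\pi(G_{i+1})\subseteq\{1,2,3\}$ together with the fact that each block needs colour $4$ or $5$ — but it, too, leans on the same flawed distance table, and in the sub-case where colour $3$ is absent the colour-$1$ class need not be one of $I_1,I_2$ (the distribution $(2,2,1,1)$ with the two $2$'s on a diagonal pair is locally feasible), so that intermediate claim also needs repair even though the conclusion ``both $4$ and $5$ occur on $G_i$'' survives.
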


\begin{proof}
Since $\pcn(P_2\Box P_3)=4$, 
every packing 5-colouring of $H(r)$ must use colour 4 or colour~5 on every $G_i(r)$, $0\le i\le r-1$.
%
We now prove that if colour~4 (resp. colour~5) is used on $G_i(r)$, then colour~4 (resp. colour~5) cannot
be used on $G_{i+1}(r)$.
Observe first that every vertex of $G_i(r)$ is at distance at most~5 from every vertex of $G_{i+1}(r)$.
Therefore, colour~5 cannot be used on both $G_i(r)$ and $G_{i+1}(r)$.
Suppose now that colour~4 is used on both $G_i(r)$ and $G_{i+1}(r)$.
Up to symmetries, we necessarily have one of the two following cases.

\begin{enumerate}
\item $\pi(u_{2i})=\pi(w_{2i+3})=4$ (see Figure~\ref{fig:G_i(r)}(b)).\\
Since every vertex of $G_{i-1}(r)$ is at distance at most~4 from $u_{2i}$,
it follows that $G_{i-1}(r)$ does not contain the colour $4$.
This implies that $G_{i-1}(r)$ contains the colour $5$ since $\pcn(G_{i-1}(r)) > 3$.
By symmetry, $G_{i+2}(r)$ must also contain the colour $5$.
Furthermore, since two consecutive $G_i(r)$s cannot both use colour $5$, neither $G_{i}(r)$ nor
$G_{i+1}(r)$ contains the colour $5$.

Now, on the remaining uncoloured vertices of $G_i(r)$, colour~1 can be used at
most thrice, colour~2 at most twice and colour~3 at most once.
If colour~1 is used thrice, then we necessarily have $\pi(u_{2i+1})=\pi(v_{2i})=\pi(w_{2i+1})=1$,
so that $\{\pi(v_{2i+1}),\pi(w_{2i})\}=\{2,3\}$, 
and no colour is available for $w_{2i+2}$ (recall that colour~5 is not used on $G_{i+1}(r)$).
If colour~1 is used twice, then we necessarily have, up to symmetry,
$\pi(v_{2i})=\pi(w_{2i+1})=1$, $\pi(u_{2i+1})=\pi(w_{2i})=2$, and $\pi(v_{2i+1})=3$,
and no colour is available for $w_{2i+2}$.

\item $\pi(v_{2i})=\pi(v_{2i+3})=4$ (see Figure~\ref{fig:G_i(r)}(c)).\\
Similarly as before, since every vertex of $G_{i-1}(r)$ is at distance at most~4 from $v_{2i}$
and two consecutive $G_i(r)$'s cannot both use colour~5, it follows from
the first item of Lemma~\ref{lem:G_i(r)} that colour~5 is used neither on $G_i(r)$, nor,
by symmetry, on $G_{i+1}(r)$.

Again, on the remaining uncoloured vertices of $G_i(r)$, colour~1 can be used at
most thrice, colour~2 at most twice and colour~3 at most once.
If colour~1 is used thrice, then we necessarily have $\pi(u_{2i})=\pi(v_{2i+1})=\pi(w_{2i})=1$,
so that $\{\pi(u_{2i+1}),\pi(w_{2i+1})\}=\{2,3\}$.
Up to symmetry, we may assume $\pi(u_{2i+1})=2$ and $\pi(w_{2i+1})=3$,
which implies $\pi(u_{2i+2})=1$, 
and no colour is available for $v_{2i+2}$ (recall that colour~5 is not used on $G_{i+1}(r)$).
If colour~1 is used twice, then we necessarily have, up to symmetry,
$\pi(u_{2i+1})=\pi(w_{2i})=1$, $\pi(u_{2i})=\pi(w_{2i+1})=2$, and $\pi(v_{2i+1})=3$,
and no colour is available for $u_{2i+2}$.
\end{enumerate}
This completes the proof.
\end{proof}


\begin{figure}
\begin{center}
\begin{tikzpicture}[domain=0:12,x=1cm,y=1cm]
\ETIQUETTEA{0,2.83}{{\scriptsize $2i$}}; \ETIQUETTEA{1,2.8}{{\scriptsize $2i+1$}};
\ETIQUETTEA{-0.5,1.8}{{\footnotesize $u$}}; \ETIQUETTEA{-0.5,0.8}{{\footnotesize $v$}}; \ETIQUETTEA{-0.5,-0.2}{{\footnotesize $w$}}; 
\SOMMET{0,0}; \SOMMET{1,0}; 
\SOMMET{0,1}; \SOMMET{1,1}; 
\SOMMET{0,2}; \SOMMET{1,2}; 
\ARETEH{0,0}{1}; \ARETEH{0,1}{1}; \ARETEH{0,2}{1};
\ARETEVV{0,0}{2}; \ARETEVV{1,0}{2};
\ETIQUETTEB{0.5,-0.6}{(a)};
\end{tikzpicture}
\hskip 1cm
\begin{tikzpicture}[domain=0:12,x=1cm,y=1cm]
\ETIQUETTEA{0,2.83}{{\scriptsize $2i$}}; \ETIQUETTEA{1,2.8}{{\scriptsize $2i+1$}};
\ETIQUETTEA{2,2.8}{{\scriptsize $2i+2$}}; \ETIQUETTEA{3,2.8}{{\scriptsize $2i+3$}};
\SOMMET{0,0}; \SOMMET{1,0}; \SOMMET{2,0}; \SOMMET{3,0}; 
\SOMMET{0,1}; \SOMMET{1,1}; \SOMMET{2,1}; \SOMMET{3,1}; 
\SOMMET{0,2}; \SOMMET{1,2}; \SOMMET{2,2}; \SOMMET{3,2}; 
\ARETEH{0,0}{3}; \ARETEH{0,2}{3}; 
\ARETEH{0,1}{1}; \ARETEH{2,1}{1}; 
\ARETEVV{0,0}{2}; \ARETEVV{1,0}{2}; \ARETEVV{2,0}{2}; \ARETEVV{3,0}{2};  
\ETIQUETTEA{0,2.1}{4};
\ETIQUETTEA{1,2}{1/2};
\ETIQUETTEA{2,2}{};
\ETIQUETTEA{3,2}{};
\ETIQUETTEB{0,0}{3,2/2};
\ETIQUETTEB{1,0}{1/1};
\ETIQUETTEB{2,0}{?/?};
\ETIQUETTEB{3,0}{4};
\ETIQUETTEA{-0.4,1}{1/1};
\ETIQUETTEA{1.5,1}{2,3/3};
\ETIQUETTEA{1.8,1}{};
\ETIQUETTEA{3.2,1}{};
\node[below] at (1.5,-0.6) {(b)};
\end{tikzpicture}
\hskip 1cm
\begin{tikzpicture}[domain=0:12,x=1cm,y=1cm]
\ETIQUETTEA{0,2.83}{{\scriptsize $2i$}}; \ETIQUETTEA{1,2.8}{{\scriptsize $2i+1$}};
\ETIQUETTEA{2,2.8}{{\scriptsize $2i+2$}}; \ETIQUETTEA{3,2.8}{{\scriptsize $2i+3$}};
\SOMMET{0,0}; \SOMMET{1,0}; \SOMMET{2,0}; \SOMMET{3,0}; 
\SOMMET{0,1}; \SOMMET{1,1}; \SOMMET{2,1}; \SOMMET{3,1}; 
\SOMMET{0,2}; \SOMMET{1,2}; \SOMMET{2,2}; \SOMMET{3,2}; 
\ARETEH{0,0}{3}; \ARETEH{0,2}{3}; 
\ARETEH{0,1}{1}; \ARETEH{2,1}{1}; 
\ARETEVV{0,0}{2}; \ARETEVV{1,0}{2}; \ARETEVV{2,0}{2}; \ARETEVV{3,0}{2};  
\ETIQUETTEA{0,2}{1/2};
\ETIQUETTEA{1,2}{2/1};
\ETIQUETTEA{2,2}{1/?};
\ETIQUETTEA{3,2}{};
\ETIQUETTEB{0,0}{1/1};
\ETIQUETTEB{1,0}{3/2};
\ETIQUETTEB{2,0}{};
\ETIQUETTEB{3,0}{};
\ETIQUETTEA{-0.2,1}{4};
\ETIQUETTEA{1.35,1}{1/3};
\ETIQUETTEA{2.3,1}{?/};
\ETIQUETTEA{3.2,1}{4};
\node[below] at (1.5,-0.6) {(c)};
\end{tikzpicture}
\caption{The subgraph $G_i(r)$ and two configurations for the proof of Lemma~\ref{lem:G_i(r)}.}
\label{fig:G_i(r)}
\end{center}
\end{figure}


From Lemma~\ref{lem:G_i(r)}, it follows that every $G_i(r)$ must use colour 4 or~5, 
and that no two consecutive $G_i(r)$'s can use
the same colour from $\{4,5\}$. Therefore, $H(r)$ does not admit any packing 5-colouring
when $r$ is odd.

\begin{corollary}
For every odd integer $r$, $r\ge 3$, $\pcn(H(r))>5$.
\label{cor:H(r)-odd}
\end{corollary}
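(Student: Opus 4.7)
The plan is to derive a contradiction from assuming the existence of a packing $5$-colouring $\pi$ of $H(r)$ when $r$ is odd, by exploiting Lemma~\ref{lem:G_i(r)} together with the fact that $\pcn(G_i(r))=\pcn(P_2\,\Box\,P_3)=4$.

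First I would observe that the restriction of $\pi$ to $V(G_i(r))$ is a packing colouring of $G_i(r)$. Since $\pcn(G_i(r))=4$, this restriction cannot use only colours from $\{1,2,3\}$, so $\pi(G_i(r))$ must contain at least one of the colours $4$ or~$5$. Next, suppose for contradiction that some $G_i(r)$ uses both colours $4$ and $5$. Then $\{4,5\}\subseteq\pi(G_i(r))$, and by Lemma~\ref{lem:G_i(r)} neither $4$ nor $5$ lies in $\pi(G_{i+1}(r))$; hence $\pi(G_{i+1}(r))\subseteq\{1,2,3\}$, contradicting $\pcn(G_{i+1}(r))=4$. Consequently, each $G_i(r)$ uses \emph{exactly} one of the two colours $4$ or~$5$.

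This allows me to define a map $f\colon\{0,\dots,r-1\}\to\{4,5\}$ by letting $f(i)$ be the unique element of $\pi(G_i(r))\cap\{4,5\}$. Applying Lemma~\ref{lem:G_i(r)} once more, the set $\{4,5\}$ is disjoint from $\pi(G_i(r))\cap\pi(G_{i+1}(r))$, so $f(i)\neq f(i+1)$ for every $i$, where indices are taken modulo $r$ (the cyclic structure of $H(r)$ making $G_{r-1}$ and $G_0$ play symmetric roles). Thus $f$ is a proper $2$-colouring of the cycle $C_r$.

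Finally, since $C_r$ admits a proper $2$-colouring if and only if $r$ is even, the assumption that $r$ is odd yields a contradiction, and therefore $\pcn(H(r))>5$. There is no serious obstacle in this argument: the two ingredients (the parity obstruction on $C_r$ and the dichotomy between colours $4$ and $5$ on consecutive blocks) combine directly. The only point requiring a touch of care is checking that each $G_i(r)$ uses \emph{exactly} one high colour, and this is precisely what forces the function $f$ to be well-defined.
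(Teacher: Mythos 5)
Your proof is correct and follows essentially the same route as the paper: Lemma~\ref{lem:G_i(r)} forces each $G_i(r)$ to carry exactly one of the colours $4,5$ and forbids consecutive blocks from sharing it, and the resulting alternation around the cyclic sequence $G_0(r),\dots,G_{r-1}(r)$ is impossible when $r$ is odd. The paper leaves the parity step implicit, whereas you spell it out as a proper $2$-colouring of $C_r$; that is just a more explicit phrasing of the same argument.
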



\begin{figure}
\begin{center}
\begin{tikzpicture}[domain=0:12,x=1cm,y=1cm]

\SOMMET{0,0}; \SOMMET{1,0}; \SOMMET{2,0}; \SOMMET{3,0}; 
\SOMMET{0,1}; \SOMMET{1,1}; \SOMMET{2,1}; \SOMMET{3,1}; 
\SOMMET{0,2}; \SOMMET{1,2}; \SOMMET{2,2}; \SOMMET{3,2}; 

\ARETEH{0,0}{3}; \ARETEH{0,2}{3}; 
\ARETEH{0,1}{1}; \ARETEH{2,1}{1}; 
\ARETEVV{0,0}{2}; \ARETEVV{1,0}{2}; \ARETEVV{2,0}{2}; \ARETEVV{3,0}{2}; 
\draw[thick] (0,2) .. controls (0.5,2.8) and (2.5,2.8) .. (3,2);
\draw[thick] (0,0) .. controls (0.5,-0.8) and (2.5,-0.8) .. (3,0);

\ETIQUETTEA{0,2}{1};
\ETIQUETTEA{1,2}{2};
\ETIQUETTEA{2,2}{1};
\ETIQUETTEA{3,2}{3};
\ETIQUETTEB{0,0}{1};
\ETIQUETTEB{1,0}{3};
\ETIQUETTEB{2,0}{1};
\ETIQUETTEB{3,0}{2};
\ETIQUETTEA{-0.2,1}{4};
\ETIQUETTEA{1.2,1}{1};
\ETIQUETTEA{1.8,1}{5};
\ETIQUETTEA{3.2,1}{1};

\node[below] at (1.5,-1) {(a) A packing 5-colouring pattern for $H(r)$, $r$ even, $r\ge 2$};
\end{tikzpicture}

\begin{tikzpicture}[domain=0:12,x=1cm,y=1cm]
\SOMMET{0,0}; \SOMMET{1,0}; \SOMMET{2,0}; \SOMMET{3,0}; \SOMMET{4,0}; 
\SOMMET{5,0}; \SOMMET{6,0}; \SOMMET{7,0}; \SOMMET{10,0}; \SOMMET{11,0}; 
\SOMMET{0,1}; \SOMMET{1,1}; \SOMMET{2,1}; \SOMMET{3,1}; \SOMMET{4,1}; 
\SOMMET{5,1}; \SOMMET{6,1}; \SOMMET{7,1}; \SOMMET{10,1}; \SOMMET{11,1}; 
\SOMMET{0,2}; \SOMMET{1,2}; \SOMMET{2,2}; \SOMMET{3,2}; \SOMMET{4,2}; 
\SOMMET{5,2}; \SOMMET{6,2}; \SOMMET{7,2}; \SOMMET{10,2}; \SOMMET{11,2}; 

\ARETEH{0,0}{7}; \ARETEH{0,2}{7}; \ARETEH{10,0}{1}; \ARETEH{10,2}{1}; 
\ARETEH{0,1}{1}; \ARETEH{2,1}{1}; \ARETEH{4,1}{1}; \ARETEH{6,1}{1}; \ARETEH{10,1}{1}; 
\ARETEVV{0,0}{2}; \ARETEVV{1,0}{2}; \ARETEVV{2,0}{2}; \ARETEVV{3,0}{2}; \ARETEVV{4,0}{2}; 
\ARETEVV{5,0}{2}; \ARETEVV{6,0}{2}; \ARETEVV{7,0}{2}; \ARETEVV{10,0}{2}; \ARETEVV{11,0}{2}; 
\draw[thick] (0,2) .. controls (1,3.3) and (10,3.3) .. (11,2);
\draw[thick] (0,0) .. controls (1,-1.3) and (10,-1.3) .. (11,0);

\draw[thick,dotted] (8,0) -- (9.5,0);
\draw[thick,dotted] (8,1) -- (9.5,1);
\draw[thick,dotted] (8,2) -- (9.5,2);

\ETIQUETTEA{0,2}{1};
\ETIQUETTEA{1,2}{2};
\ETIQUETTEA{2,2}{1};
\ETIQUETTEA{3,2}{3};
\ETIQUETTEA{4,2}{1};
\ETIQUETTEA{5,2}{2};
\ETIQUETTEA{6,2}{1};
\ETIQUETTEA{7,2}{3};
\ETIQUETTEA{10,2}{2};
\ETIQUETTEA{11,2}{6};
\ETIQUETTEB{0,0}{2};
\ETIQUETTEB{1,0}{3};
\ETIQUETTEB{2,0}{1};
\ETIQUETTEB{3,0}{2};
\ETIQUETTEB{4,0}{1};
\ETIQUETTEB{5,0}{3};
\ETIQUETTEB{6,0}{1};
\ETIQUETTEB{7,0}{2};
\ETIQUETTEB{10,0}{7};
\ETIQUETTEB{11,0}{1};
\ETIQUETTEA{-0.2,1}{5};
\ETIQUETTEA{1.2,1}{1};
\ETIQUETTEA{1.8,1}{4};
\ETIQUETTEA{3.2,1}{1};
\ETIQUETTEA{3.8,1}{5};
\ETIQUETTEA{5.2,1}{1};
\ETIQUETTEA{5.8,1}{4};
\ETIQUETTEA{7.2,1}{1};
\ETIQUETTEA{9.8,1}{1};
\ETIQUETTEA{11.2,1}{4};

\draw[thick,dashed] (3.5,-0.6) -- (7.5,-0.6);
\draw[thick,dashed] (3.5,2.6) -- (7.5,2.6);
\draw[thick,dashed] (3.5,-0.6) -- (3.5,2.6);
\draw[thick,dashed] (7.5,-0.6) -- (7.5,2.6);

\node[below] at (5.5,-1.5) {(b) A packing 7-colouring pattern for $H(r)$, $r$ odd, $r\ge 3$};
\end{tikzpicture}

\caption{Packing colouring patterns for H-graphs.}
\label{fig:H-graphs}
\end{center}
\end{figure}


We are now able to prove the main result of this section.

\begin{theorem}
For every integer $r\ge 2$, $\pcn(H(r))=5$ if $r$ is even, and $6\le\pcn(H(r))\le 7$ if $r$ is odd.
\label{th:H(r)}
\end{theorem}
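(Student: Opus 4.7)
The plan is to split the proof according to the parity of $r$ and, in each case, establish matching upper and lower bounds on $\pcn(H(r))$. The upper bounds will be given by explicit periodic colourings along the two outer cycles of the H-graph (those of Figure~\ref{fig:H-graphs}), whereas the lower bounds will follow from the corona subgraph $C_{2r}\odot K_1\subseteq H(r)$ together with Theorem~\ref{th:CrCn}, complemented in the odd case by Lemma~\ref{lem:G_i(r)}.

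\emph{Case $r$ even.} For the upper bound, I would verify that the length-$4$ pattern depicted in Figure~\ref{fig:H-graphs}(a) extends to a valid packing $5$-colouring of $H(r)$: since $2r\equiv 0\pmod{4}$ when $r$ is even, the pattern tiles both outer cycles consistently, and the distance conditions have to be checked only within one period and across the wrap-around seam. For the lower bound, observe that the outer cycle $u_0u_1\cdots u_{2r-1}$ together with the edges $u_iv_i$ induces a copy of $C_{2r}\odot K_1$ in $H(r)$, so Proposition~\ref{prop:subgraph} and Theorem~\ref{th:CrCn} give $\pcn(H(r))\ge 5$ as soon as $2r\ge 5$, that is $r\ge 3$. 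The remaining case $r=2$ must be handled separately by direct verification, bounding the values $\rho_i(H(2))$ for $i\le 4$ and applying Proposition~\ref{prop:rho-i}, supplemented by a short case analysis to rule out any packing $4$-colouring when Proposition~\ref{prop:rho-i} alone is not tight enough.

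\emph{Case $r$ odd.} The lower bound $\pcn(H(r))\ge 6$ is already at hand: it is exactly Corollary~\ref{cor:H(r)-odd}, itself a consequence of Lemma~\ref{lem:G_i(r)}, since colours $4$ and $5$ cannot alternate consistently around an odd number of blocks $G_0(r),G_1(r),\dots,G_{r-1}(r)$. For the upper bound, I would exhibit an explicit packing $7$-colouring by following the scheme of Figure~\ref{fig:H-graphs}(b): repeat the length-$4$ bulk pattern around most of the graph and close it off with a short two-column block that employs the two extra colours $6$ and $7$ to absorb the parity mismatch created by $r$ odd. Verifying that this is a valid packing colouring again reduces to a routine inspection within one period together with a check across the closing block.

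The main obstacle is the bookkeeping involved in verifying the distance constraints of the two explicit colourings, together with the auxiliary case $r=2$ in the even case, where the corona subgraph argument only yields the weaker bound $\pcn(H(2))\ge 4$ and must be reinforced by an ad hoc computation. Beyond these routine checks, no new structural idea is required: Lemma~\ref{lem:G_i(r)} and Theorem~\ref{th:CrCn} already supply all the non-trivial input.
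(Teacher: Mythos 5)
Your proposal is correct and follows essentially the same route as the paper: upper bounds from the explicit periodic patterns of Figure~\ref{fig:H-graphs}(a) and (b), and lower bounds from a corona-of-a-cycle subgraph via Theorem~\ref{th:CrCn} and Proposition~\ref{prop:subgraph} in the even case, and from Corollary~\ref{cor:H(r)-odd} (i.e.\ Lemma~\ref{lem:G_i(r)}) in the odd case. The only divergence is in the even lower bound: instead of the outer-cycle corona $C_{2r}\odot K_1$, which for $r=2$ only yields $\pcn(H(2))\ge 4$ and forces the ad hoc $\rho_i$-analysis you sketch, the paper takes the $6$-cycle $u_1v_1w_1w_2v_2u_2$ with its pendant neighbours, giving a copy of $C_6\odot K_1$ in $H(r)$ for every $r\ge 2$ and hence the bound $\pcn(H(r))\ge 5$ uniformly, so no special case is needed (your fallback for $r=2$ would still work, it is just extra effort).
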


\begin{proof}
We consider two cases, according to the parity of $r$.
\begin{enumerate}
\item {\em $r$ is even}.\\
Since $H(r)$ contains the corona graph $C_{6}\odot K_1$ as a subgraph
(consider for instance the 6-cycle $u_1v_1w_1w_2v_2u_2$),
we get $\pcn(H(r))\ge 5$ by Theorem~\ref{th:CrCn} and Proposition~\ref{prop:subgraph}.
A packing 5-colouring of $H(r)$ is then obtained by repeating the pattern 
depicted in Figure~\ref{fig:H-graphs}(a), and thus $\pcn(H(r))=5$.

\item {\em $r$ is odd}.\\
From Corollary~\ref{cor:H(r)-odd}, we get $\pcn(H(r))\ge 6$.
A packing 7-colouring of $H(r)$ is described in Figure~\ref{fig:H-graphs}(b), where
the circular pattern (surrounded by the dashed box) is repeated $\frac{r-3}{2}$ times.
This gives $\pcn(H(r))\le 7$.
\end{enumerate}
This concludes the proof.
\end{proof}


\section{Generalised H-graphs}
\label{sec:geneHgraphs}

We now consider a natural extension of H-graphs.
For every integer $r\ge 2$,
the \emph{generalised H-graph} $H^\ell(r)$ with $\ell$ levels, $\ell\ge 1$,  
is the 3-regular graph of order $2r(\ell+2)$, with vertex set
$$V(H^\ell(r))=\{u^i_j:0\le i\le \ell+1, 0\le j\le 2r-1\}$$
and edge set (subscripts are taken modulo $2r$)
 $$\begin{array}{rcl}
 E(H^\ell(r)) & = & \{(u^0_j,u^0_{j+1}),\ (u^{\ell+1}_j,u^{\ell+1}_{j+1}):0\le j\le 2r-1\}\\
 & & \cup \ \{(u^i_{2j},u^i_{2j+1}):1\le i\le \ell,\ 0\le j\le r-1\}\\
 & & \cup\ \{(u^i_j,u^{i+1}_j): 0\le i\le \ell,\ 0\le j\le 2r-1\}.
 \end{array}$$

Figure~\ref{fig:H34} depicts the generalised H-graph with three levels $H^3(4)$.
Note that generalised H-graphs with one level are precisely H-graphs.


\begin{figure} 
\begin{center}
\begin{tikzpicture}[domain=0:12,x=1cm,y=1cm]


\SOMMET{2.5,-5};\SOMMET{2.5,-4}; \SOMMET{2.5,-3};
\SOMMET{3.5,-5};\SOMMET{3.5,-4}; \SOMMET{3.5,-3};
\SOMMET{4.5,-5};\SOMMET{4.5,-4}; \SOMMET{4.5,-3};
\SOMMET{5.5,-5};\SOMMET{5.5,-4}; \SOMMET{5.5,-3};
\SOMMET{6.5,-5};\SOMMET{6.5,-4}; \SOMMET{6.5,-3};
\SOMMET{7.5,-5};\SOMMET{7.5,-4}; \SOMMET{7.5,-3};
\SOMMET{8.5,-5};\SOMMET{8.5,-4}; \SOMMET{8.5,-3};
\SOMMET{9.5,-5};\SOMMET{9.5,-4}; \SOMMET{9.5,-3};

\SOMMET{2.5,-6};\SOMMET{2.5,-7}; 
\SOMMET{3.5,-6};\SOMMET{3.5,-7}; 
\SOMMET{4.5,-6};\SOMMET{4.5,-7}; 
\SOMMET{5.5,-6};\SOMMET{5.5,-7}; 
\SOMMET{6.5,-6};\SOMMET{6.5,-7}; 
\SOMMET{7.5,-6};\SOMMET{7.5,-7}; 
\SOMMET{8.5,-6};\SOMMET{8.5,-7}; 
\SOMMET{9.5,-6};\SOMMET{9.5,-7}; 


\ETIQUETTE{2.5,-2.25}{$u^0_0$}; 
\ETIQUETTE{3.5,-2.25}{$u^0_1$}; 
\ETIQUETTE{4.5,-2.25}{$u^0_2$}; 
\ETIQUETTE{5.5,-2.25}{$u^0_3$}; 
\ETIQUETTE{6.5,-2.25}{$u^0_4$}; 
\ETIQUETTE{7.5,-2.25}{$u^0_5$}; 
\ETIQUETTE{8.5,-2.25}{$u^0_6$}; 
\ETIQUETTE{9.5,-2.25}{$u^0_7$}; 

\ETIQUETTE{2.5,-7}{$u^4_0$}; 
\ETIQUETTE{3.5,-7}{$u^4_1$}; 
\ETIQUETTE{4.5,-7}{$u^4_2$}; 
\ETIQUETTE{5.5,-7}{$u^4_3$}; 
\ETIQUETTE{6.5,-7}{$u^4_4$}; 
\ETIQUETTE{7.5,-7}{$u^4_5$}; 
\ETIQUETTE{8.5,-7}{$u^4_6$}; 
\ETIQUETTE{9.5,-7}{$u^4_7$}; 

\ETIQUETTE{2.25,-3.38}{$u^1_0$};
\ETIQUETTE{3.75,-3.38}{$u^1_1$};
\ETIQUETTE{4.25,-3.38}{$u^1_2$};
\ETIQUETTE{5.75,-3.38}{$u^1_3$};
\ETIQUETTE{6.25,-3.38}{$u^1_4$};
\ETIQUETTE{7.75,-3.38}{$u^1_5$};
\ETIQUETTE{8.25,-3.38}{$u^1_6$};
\ETIQUETTE{9.75,-3.38}{$u^1_7$};

\ETIQUETTE{2.25,-4.38}{$u^2_0$};
\ETIQUETTE{3.75,-4.38}{$u^2_1$};
\ETIQUETTE{4.25,-4.38}{$u^2_2$};
\ETIQUETTE{5.75,-4.38}{$u^2_3$};
\ETIQUETTE{6.25,-4.38}{$u^2_4$};
\ETIQUETTE{7.75,-4.38}{$u^2_5$};
\ETIQUETTE{8.25,-4.38}{$u^2_6$};
\ETIQUETTE{9.75,-4.38}{$u^2_7$};

\ETIQUETTE{2.25,-5.38}{$u^3_0$};
\ETIQUETTE{3.75,-5.38}{$u^3_1$};
\ETIQUETTE{4.25,-5.38}{$u^3_2$};
\ETIQUETTE{5.75,-5.38}{$u^3_3$};
\ETIQUETTE{6.25,-5.38}{$u^3_4$};
\ETIQUETTE{7.75,-5.38}{$u^3_5$};
\ETIQUETTE{8.25,-5.38}{$u^3_6$};
\ETIQUETTE{9.75,-5.38}{$u^3_7$};

\ARETEH{2.5,-3}{7};
\ARETEH{2.5,-7}{7};

\ARETEVV{2.5,-7}{4};
\ARETEVV{3.5,-7}{4};
\ARETEVV{4.5,-7}{4};
\ARETEVV{5.5,-7}{4};
\ARETEVV{6.5,-7}{4};
\ARETEVV{7.5,-7}{4};
\ARETEVV{8.5,-7}{4};
\ARETEVV{9.5,-7}{4};

\ARETEH{2.5,-4}{1};
\ARETEH{4.5,-4}{1};
\ARETEH{6.5,-4}{1};
\ARETEH{8.5,-4}{1};

\ARETEH{2.5,-5}{1};
\ARETEH{4.5,-5}{1};
\ARETEH{6.5,-5}{1};
\ARETEH{8.5,-5}{1};

\ARETEH{2.5,-6}{1};
\ARETEH{4.5,-6}{1};
\ARETEH{6.5,-6}{1};
\ARETEH{8.5,-6}{1};

\draw[thick] (2.5,-3) .. controls (4,-1.3) and (8,-1.3) .. (9.5,-3);
\draw[thick] (2.5,-7) .. controls (4,-8.5) and (8,-8.5) .. (9.5,-7);

\end{tikzpicture}
\caption{The generalised H-graph $H^3(4)$.}
\label{fig:H34}
\end{center}
\end{figure}


The three following lemmas will be useful for determining the packing chromatic number of generalised H-graphs.

\begin{lemma}
For every pair of integers $\ell\ge 3$ and $r\ge 3$,
let $H^\ell(r)$ be a generalised H-graph with $\pcn(H^\ell(r))\le 5$ and let $\pi$ be a
packing $5$-colouring of $H^\ell(r)$.
For every edge
$u^i_{2j}u^i_{2j+1}$, $1\le i\le \ell$, $0\le j\le r-1$,
with $\pi(u^i_{2j})\neq 1$ and $\pi(u^i_{2j+1})\neq 1$,
either $u^i_{2j}$ or $u^i_{2j+1}$ has colour~$2$ and its three neighbours have colours $3$, $4$ and~$5$.
\label{lem:1altV}
\end{lemma}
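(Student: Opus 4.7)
The plan is to deduce the lemma from Lemma~\ref{lem:graphX} by embedding the graph $X$ of Figure~\ref{fig:graphX} as a subgraph of $H^\ell(r)$ in which the edge $u^i_{2j}u^i_{2j+1}$ plays the role of one of the three central rungs $u_k v_k$ with $k\in\{3,4,5\}$. This is the same reduction that produced Corollary~\ref{cor:CL1alt} for circular ladders, adapted here to the sparser structure of generalised H-graphs.

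Given $1\le i\le\ell$, I choose $k\in\{3,4,5\}$ so that the five rows $i+2-k,\ldots,i+6-k$ all lie in $\{0,1,\ldots,\ell+1\}$: take $k=3$ when $i\le\ell-2$, $k=5$ when $i\ge 3$, and $k=4$ otherwise; the hypothesis $\ell\ge 3$ ensures that at least one of these cases always applies. I then identify the five rungs $u_{k'}v_{k'}$ of $X$ (for $k'\in\{2,3,4,5,6\}$) with the edges $u^{i+k'-k}_{2j}u^{i+k'-k}_{2j+1}$ of $H^\ell(r)$, and the two spines of $X$ with the corresponding column segments in columns $2j$ and $2j+1$. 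Each rung is indeed an edge of $H^\ell(r)$ (a matching edge for rows in $\{1,\ldots,\ell\}$ or a cycle edge for row $0$ or row $\ell+1$), and each spine is a path of column edges.

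Next, I embed the eight ``side'' vertices of $X$ by extending along the top cycle of $H^\ell(r)$ whenever a spine endpoint lies at row~$0$, along the bottom cycle whenever it lies at row $\ell+1$, and along further column edges otherwise. For instance, when $u_2=u^0_{2j}$ I set $u_1=u^0_{2j-1}$, $u_0=u^0_{2j-2}$ and symmetrically $v_1=u^0_{2j+2}$, $v_0=u^0_{2j+3}$. Because $r\ge 3$, the six cycle positions $2j-2,\ldots,2j+3$ are pairwise distinct modulo~$2r$, so all eighteen image vertices are distinct, and a direct check confirms that every edge of $X$ is sent to an edge of $H^\ell(r)$.

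With the embedding in hand, the restriction of $\pi$ to the image of $X$ is a packing $5$-colouring of $X$, since $d_X(u,v)\ge d_{H^\ell(r)}(u,v)$ for any pair of image vertices; hence Lemma~\ref{lem:graphX} applies and yields that one of $u^i_{2j},u^i_{2j+1}$ has colour~$2$ while its three neighbours in $X$ receive the colours $3,4,5$. Because each of $u^i_{2j}$ and $u^i_{2j+1}$ has degree $3$ in $H^\ell(r)$ and all three of its incident edges already belong to $X$, those neighbours in $X$ coincide with the neighbours in $H^\ell(r)$, which gives exactly the claimed statement. I expect the main technical obstacle to be the case-by-case definition of the side extensions so that no two of the eighteen image vertices are identified, which is precisely where the hypothesis $r\ge 3$ is used; potential extra edges of $H^\ell(r)$ among those vertices are inconsequential, since they can only shorten distances in $H^\ell(r)$, never in~$X$.
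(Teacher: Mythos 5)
Your strategy is exactly the one the paper uses: realise $X$ as a subgraph of $H^\ell(r)$ so that the edge $u^i_{2j}u^i_{2j+1}$ becomes one of the central rungs $u_3v_3$, $u_4v_4$, $u_5v_5$, then invoke Lemma~\ref{lem:graphX}; the paper merely exhibits the embedding in the extremal case $H^3(3)$ (rungs on rows $0,\dots,4$ of columns $2,3$, side vertices taken along the two cycles) and appeals to symmetry, whereas you spell out the general construction. Your key observations are all sound: the column pair $(2j,2j+1)$ carries a horizontal edge at every level because $2j$ is even, the restriction of $\pi$ to a (not necessarily induced) subgraph is a packing colouring since $d_X\ge d_{H^\ell(r)}$, the three $X$-neighbours of the central rung endpoints coincide with their $H^\ell(r)$-neighbours, and $r\ge 3$ gives the needed room on the cycles.

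There is, however, one configuration your extension rule does not cover as written. Your trichotomy (``along the top cycle when the endpoint is at row $0$, along the bottom cycle when at row $\ell+1$, along further column edges otherwise'') fails when an extreme rung of the embedded ladder sits at row $1$ or at row $\ell$: then only one further column edge is available and the second side vertex must turn onto the adjacent cycle. This situation does occur under your own choice of $k$ — e.g.\ for $i=2\le\ell-2$ you take $k=3$, so the top rung lies at row $1$ and ``further column edges'' would require a row $-1$ — and for some parameters it is unavoidable altogether (for $\ell=4$, $i=2$ every admissible placement of the five rungs in rows $0,\dots,5$ puts one extreme rung at row $1$ or row $\ell$). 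The repair is immediate and in the spirit of your construction: extend one step along the column to row $0$ (resp.\ row $\ell+1$) and then one step along the corresponding cycle; edges still map to edges, and the at most four cycle vertices used at that end are distinct from each other and from the column vertices since $2r\ge 6$. With this amendment your proof is complete and, if anything, more explicit than the paper's ``it is now easy to see'' treatment of the general case.
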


\begin{proof}
%
We first claim that every such edge $u^i_{2j}u^i_{2j+1}$
belongs to a subgraph of $H^\ell(r)$ isomorphic to the graph $X$
depicted in Figure~\ref{fig:graphX}, in such a way that $u^i_{2j}u^i_{2j+1}$ corresponds to one of the
edges $u_3v_3$, $u_4v_4$ or $u_5v_5$ of $X$.
Indeed, consider first the ``extremal'' case of $H^3(3)$, and observe that $X$
is a subgraph of the subgraph of $H^3(3)$ induced by the set of vertices
$$\{u^0_0,\dots,u^0_5\}\ \cup\ \{u^4_0,\dots,u^4_5\}\ \cup\ \{u^1_2,u^1_3,u^2_2,u^2_3,u^3_2,u^3_3\}.$$
Our claim then follows for $H^3(3)$ thanks to its symmetries.

It is now easy to see that our claim holds for every generalised H-graph $H^\ell(r)$
with $\ell,r\ge 3$.
The result then follows by Lemma~\ref{lem:graphX}.
\end{proof}

From Lemma~\ref{lem:1altV}, it follows that if $\pi(u^i_{2j})\neq 1$ and $\pi(u^i_{2j+1})\neq 1$ 
for some edge $u^i_{2j}u^i_{2j+1}$ of $H^\ell(r)$, $1\le i\le \ell$, $0\le j\le r-1$,
then the colour~2 has to be used on this edge and, since the neighbours of the 2-coloured vertex
are coloured with 3, 4 and~5, the colour~2 can be replaced by colour~1.
Therefore, we get the following corollary. 

\begin{corollary}
For every pair of integers $\ell\ge 3$ and $r\ge 3$,
if $H^\ell(r)$ is a generalised H-graph with $\pcn(H^\ell(r))\le 5$, then there exists a
packing $5$-colouring of $H^\ell(r)$ such that, for every pair of integers $i$ and $j$, $1\le i\le \ell$, $0\le j\le r-1$,
the colour~$1$ is used on the edge $u^i_{2j}u^i_{2j+1}$ of $H^\ell(r)$.
\label{cor:1altV}
\end{corollary}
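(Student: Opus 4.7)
The plan is to follow the recolouring strategy sketched in the paragraph preceding the statement. Start from an arbitrary packing 5-colouring $\pi$ of $H^\ell(r)$, which exists by hypothesis, and process the edges $u^i_{2j}u^i_{2j+1}$ with $1\le i\le \ell$ and $0\le j\le r-1$ one after the other, modifying $\pi$ along the way so that at the end each such edge has at least one endpoint of colour~$1$.

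For the current edge $e=u^i_{2j}u^i_{2j+1}$, either one of its endpoints already carries colour~$1$, in which case nothing is done, or both endpoints avoid colour~$1$. In the latter case, Lemma~\ref{lem:1altV} applies to the current colouring and yields an endpoint $v\in\{u^i_{2j},u^i_{2j+1}\}$ with $\pi(v)=2$ whose three neighbours in $H^\ell(r)$ carry colours $3$, $4$ and~$5$. I then recolour $v$ from $2$ to $1$.

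The key verification is that this local change preserves the packing 5-colouring property, so that Lemma~\ref{lem:1altV} can be reapplied to the next edge. Indeed, the colour classes $V_3$, $V_4$, $V_5$ are untouched; the class $V_2$ only loses a vertex, which cannot violate the $2$-packing constraint; and $v$ is added to $V_1$ without conflict, since none of its three neighbours belongs to $V_1$ (they carry colours~$3$, $4$, $5$). Iterating over all edges of the prescribed form, in any fixed order, produces the desired colouring.

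The only potential obstacle is a bookkeeping one, namely checking that the repeated applications of Lemma~\ref{lem:1altV} do not interact badly as the colouring evolves; but this is automatic, because the lemma's sole hypothesis is that the current colouring is a packing 5-colouring, and this is maintained at every step by the argument above. No further subtlety intervenes.
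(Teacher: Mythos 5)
Your proposal is correct and is essentially the paper's own argument: the paragraph preceding the corollary likewise takes the $2$-coloured vertex supplied by Lemma~\ref{lem:1altV} (whose neighbours are coloured $3$, $4$ and $5$) and simply replaces its colour by~$1$. Your extra bookkeeping about iterating edge by edge is sound (the rung edges are vertex-disjoint, colour~$1$ is never removed, and each recolouring preserves the packing $5$-colouring, so the lemma can be reapplied) but adds nothing beyond making the paper's implicit iteration explicit.
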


\begin{lemma}
For every pair of integers $\ell\ge 3$ and $r\ge 3$,
let $H^\ell(r)$ be a generalised H-graph with $\pcn(H^\ell(r))\le 5$ and $\pi$ be a
packing $5$-colouring of $H^\ell(r)$.
For every $j$, $0\le j\le 2r-1$, $\pi$ must assign colour~$1$ to one vertex of 
each of the edges $u^0_ju^0_{j+1}$ and $u^{\ell+1}_ju^{\ell+1}_{j+1}$ (subscripts are taken modulo $2r$).
\label{lem:1altH}
\end{lemma}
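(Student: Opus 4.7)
The plan is to argue by contradiction, using the 4-cycle formed (in the aligned case $j = 2k$) by the top-cycle edge $u^0_{2k} u^0_{2k+1}$ and the level-1 matching edge $u^1_{2k} u^1_{2k+1}$, together with Lemma~\ref{lem:1altV} propagated to deeper matching edges. By the symmetry of $H^\ell(r)$ swapping level $i$ with level $\ell+1-i$, it suffices to handle the top cycle.

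Suppose $\pi(u^0_j), \pi(u^0_{j+1}) \neq 1$ for some $j$. First consider the aligned case $j = 2k$: the four vertices $u^0_{2k}, u^0_{2k+1}, u^1_{2k+1}, u^1_{2k}$ form a $4$-cycle in $H^\ell(r)$, so their pairwise distances are at most $2$. I would rule out the case where none of them has colour~$1$: then colours $\{2,3,4,5\}$ are used once each; colour~$2$ cannot be on $u^0_{2k}$ or $u^0_{2k+1}$ (otherwise the matching edge $u^1_{2k} u^1_{2k+1}$ would contain no colour-$2$ vertex, violating Lemma~\ref{lem:1altV}), so WLOG $\pi(u^1_{2k}) = 2$; its three neighbours are then a permutation of $\{3,4,5\}$, which by elimination forces $\pi(u^0_{2k+1}) = \pi(u^2_{2k})$. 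But $d_{H^\ell(r)}(u^0_{2k+1}, u^2_{2k}) = 3$ and this common colour is at least $3$, contradicting the packing property.

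Hence WLOG $\pi(u^1_{2k}) = 1$. Writing $a = \pi(u^0_{2k})$, $b = \pi(u^0_{2k+1})$ and $d = \pi(u^1_{2k+1}) \in \{2,3,4,5\} \setminus \{a,b\}$, I would examine the level-$2$ matching edge $u^2_{2k} u^2_{2k+1}$. Lemma~\ref{lem:1altV} rules out $\pi(u^2_{2k}) = 2$ (its neighbour $u^1_{2k}$ already has colour~$1$), so if both endpoints were $\neq 1$ then $\pi(u^2_{2k+1}) = 2$; a distance analysis involving $u^2_{2k}$, $u^3_{2k+1}$, and the vertices $u^0_{2k}, u^0_{2k+1}, u^1_{2k+1}$ then forces a same-colour pair at distance at most~$4$ with colour $\geq 3$, a contradiction. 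So $\pi(u^2_{2k+1}) = 1$. Continuing the propagation with Lemma~\ref{lem:1altV} at level~$3$ and tracking distances in $H^\ell(r)$, I would show that the vertex $u^4_{2k}$ (which exists because $\ell \geq 3$) becomes forbidden from every colour in $\{1,\dots,5\}$ in at least one subcase, yielding the desired contradiction. The non-aligned case $j = 2k+1$ is handled analogously using the two flanking level-$1$ matching edges $u^1_{2k} u^1_{2k+1}$ and $u^1_{2k+2} u^1_{2k+3}$.

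The hard part will be the combinatorial case analysis when propagating constraints down the two columns: each subcase parameterised by the values $(a, b, d)$ requires careful computation of distances in $H^\ell(r)$, which may be surprisingly small due to the several short paths through vertical and matching edges. The hypotheses $\ell \geq 3$ and $r \geq 3$ ensure, respectively, that enough interior levels exist for the propagation and that distances along the top cycle do not collapse.
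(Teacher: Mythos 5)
Your opening steps are fine (ruling out a colour-1-free $4$-cycle via Lemma~\ref{lem:1altV}, then forcing colour~$1$ at $u^1_{2k}$, then at $u^2_{2k+1}$), but the core of the plan -- reaching the contradiction by propagating down the two columns $2k$ and $2k+1$ until $u^4_{2k}$ has no available colour -- genuinely fails, and it fails exactly at the boundary value $\ell=3$ that the lemma must cover. Take the subcase $a=\pi(u^0_{2k})=2$, $b=\pi(u^0_{2k+1})=3$, $d=\pi(u^1_{2k+1})=4$ with $\ell=3$. Your forcing then yields $\pi(u^0_{2k},u^1_{2k},u^2_{2k},u^3_{2k},u^4_{2k})=(2,1,5,1,3)$ and $\pi(u^0_{2k+1},u^1_{2k+1},u^2_{2k+1},u^3_{2k+1},u^4_{2k+1})=(3,4,1,2,1)$, and this partial colouring of the two columns violates nothing: the colour-$1$ vertices form an independent set, the two $2$'s are at distance~$4$, the two $3$'s at distance~$5$, and every matching edge inside these columns already carries a~$1$, so Lemma~\ref{lem:1altV} gives no further information. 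In particular $u^4_{2k}$ happily takes colour~$3$, contrary to your claim that it ends up with no colour. The actual impossibility of $(a,b)=(2,3)$ on an aligned top edge comes from the \emph{neighbouring} columns, which your scheme never inspects: in the paper's Claim~\ref{cl:1altH-1} (Appendix~\ref{ap:1altH}), Case~1 with $\pi(u^1_2)=1$ forces $\{\pi(u^2_2),\pi(u^1_3)\}=\{4,5\}$, then $\pi(u^0_1)=1$, and the dead end appears at $u^0_0$, two columns to the side. (For $\ell\ge 4$ your vertical propagation does eventually stall around level~$5$, but not for $\ell=3$.) Moreover, the phrase ``in at least one subcase'' is the wrong quantifier: to refute the assumption you must kill \emph{every} triple $(a,b,d)$.

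Two further points. First, the level-$2$ step is right in substance but not as stated: ``a same-colour pair at distance at most $4$ with colour $\ge 3$'' is not a contradiction when the colour is $3$ and the distance is $4$; the correct argument first excludes $\pi(u^0_{2k+1})=2$ (it is at distance~$2$ from the forced $2$ at $u^2_{2k+1}$) and then uses distance-$3$ conflicts with $u^2_{2k}$ or $u^3_{2k+1}$. Second, the paper never repeats a case analysis for the non-aligned edges: Claim~\ref{cl:1altH-1} only treats the aligned edges $u^0_{2j}u^0_{2j+1}$, and the remaining edges follow by a short parity argument -- an independent colour-$1$ set of the even cycle on $\{u^0_0,\dots,u^0_{2r-1}\}$ that meets every aligned edge but misses some other edge propagates around the cycle to two adjacent colour-$1$ vertices. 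Your proposed ``analogous'' treatment of an odd edge would face a genuinely different local picture (there is no matching edge directly below it) and is unnecessary; adopting the parity step would also simplify your overall plan.
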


\begin{proof}
The proof is done by case analysis and is given in Appendix~\ref{ap:1altH}.
\end{proof}

Let $H^\ell(r)$ be a generalised H-graph with $\pcn(H^\ell(r))\le 5$.
From Corollary~\ref{cor:1altV} and Lemma~\ref{lem:1altH},
it follows that one can always produce a packing 5-colouring of $H^\ell(r)$
that uses colour~1 on each edge $u^i_{2j}u^i_{2j+1}$ of $H^\ell(r)$,
$0\le i\le \ell+1$, $0\le j\le r-1$.
Since adjacent vertices cannot be assigned the same colour and $H^\ell(r)$ is a bipartite graph,
we get the following corollary. 

\begin{corollary}
For every pair of integers $\ell\ge 3$ and $r\ge 3$,
if $H^\ell(r)$ is a generalised H-graph with $\pcn(H^\ell(r))\le 5$, then there exists a
packing $5$-colouring of $H^\ell(r)$ such that the colour~$1$ is used on each edge of $H^\ell(r)$.
\label{coro:bipaH}
\end{corollary}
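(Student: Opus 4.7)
My plan is to combine Corollary~\ref{cor:1altV} and Lemma~\ref{lem:1altH} with the bipartite structure of $H^\ell(r)$, then propagate a uniform colour-$1$ pattern through the graph by induction on the level. First, I would invoke Corollary~\ref{cor:1altV} to select a packing $5$-colouring $\pi$ in which colour~$1$ already appears on every rung $u^i_{2j}u^i_{2j+1}$ with $1\le i\le\ell$ and $0\le j\le r-1$. Applying Lemma~\ref{lem:1altH} to this same $\pi$ then gives colour~$1$ on every top-cycle edge and every bottom-cycle edge; what is left to control is the family of vertical edges $u^i_ju^{i+1}_j$.

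Next, I would record the bipartition. Along every type of edge of $H^\ell(r)$ the parity of $i+j$ flips, so the graph is bipartite with parts $A=\{u^i_j:i+j\text{ even}\}$ and $B=\{u^i_j:i+j\text{ odd}\}$. On the top cycle $C_{2r}$, the colour-$1$ vertices form an independent set that meets every edge; in an even cycle the only such set is one side of that cycle's bipartition. After possibly relabelling $A\leftrightarrow B$ (the two parts are swapped by the automorphism $u^i_j\mapsto u^i_{2r-1-j}$), I may assume the colour-$1$ vertices at level~$0$ are exactly the $A$-vertices.

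The heart of the proof is then an induction on $i$: at every level $0\le i\le \ell+1$, the colour-$1$ vertices at level $i$ are precisely the $A$-vertices of that level. For the inductive step at a rung level $i+1\le\ell$, I consider any rung $u^{i+1}_{2j}u^{i+1}_{2j+1}$; by the initial application of Corollary~\ref{cor:1altV}, colour~$1$ already sits on one of its endpoints, and its $B$-endpoint is joined via a vertical edge to an $A$-vertex at level~$i$, which is coloured~$1$ by the inductive hypothesis. Since colour-$1$ vertices form an independent set, the $B$-endpoint cannot also be coloured~$1$, so the $A$-endpoint must be. The remaining case $i+1=\ell+1$ is handled analogously, using the bottom-cycle conclusion of Lemma~\ref{lem:1altH} together with the vertical edges from level~$\ell$. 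Once every $A$-vertex carries colour~$1$, each edge of the bipartite graph $H^\ell(r)$---joining an $A$-vertex to a $B$-vertex---has a colour-$1$ endpoint, which is precisely the statement of the corollary.

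I expect the only delicate point to be the parity bookkeeping, since the set $A\cap\{\text{level }i\}$ alternates between even-indexed and odd-indexed vertices as $i$ changes. The key structural observation that makes the induction run---with no further colour manipulation needed beyond that already obtained from Corollary~\ref{cor:1altV}---is that once a colour-$1$ vertex is pinned above a rung's $B$-endpoint by a vertical edge, the independence of the colour-$1$ class forces the entire rung choice, so the pattern propagates automatically from the top cycle all the way down to the bottom.
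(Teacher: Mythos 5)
Your proposal is correct and takes essentially the same route as the paper: it combines Corollary~\ref{cor:1altV} and Lemma~\ref{lem:1altH} with the independence of the colour-$1$ class and the bipartiteness of $H^\ell(r)$. Your level-by-level induction simply makes explicit the propagation that the paper compresses into a single sentence, and the parity bookkeeping you flag does work out.
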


\begin{lemma}
For every pair of integers $\ell\ge 3$ and $r\ge 3$,
if $H^\ell(r)$ is a generalised H-graph with $\pcn(H^\ell(r))\le 5$, then there exists a
packing $5$-colouring $\pi$ of $H^\ell(r)$ such that 
$\pi(u^0_j)\notin\{4,5\}$ and $\pi(u^{\ell+1}_j)\notin\{4,5\}$ for every $j$, $0\le j\le 2r-1$.
\label{lem:level0-no45}
\end{lemma}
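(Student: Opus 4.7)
The plan is to start from a packing 5-colouring $\pi$ of $H^\ell(r)$ given by Corollary~\ref{coro:bipaH}, so that colour~$1$ appears on every edge. Since $H^\ell(r)$ is connected and bipartite, the colour-$1$ vertex set is then both an independent set and a vertex cover, so it must coincide with one whole bipartition class, and the remaining vertices all carry colours in $\{2,3,4,5\}$. By the top/bottom symmetry of $H^\ell(r)$, it suffices to show that $\pi$ can be further adjusted so that no $u^0_j$ receives a colour in $\{4,5\}$.

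Two consecutive non-colour-$1$ top-row vertices $u^0_j$ and $u^0_{j+2}$ are at distance exactly~$2$ in $H^\ell(r)$, via the top-row path through $u^0_{j+1}$: every alternative path dipping into the lower levels has length at least $2(\ell+1)\ge 8$ because $\ell\ge 3$. Now suppose $\pi(u^0_j)\in\{4,5\}$ for some~$j$. I would embed the graph~$X$ of Figure~\ref{fig:graphX} into $H^\ell(r)$ so that $u^0_j$ sits at one of the central positions $u_3,u_4,u_5$ of~$X$, reusing the embedding from the proof of Lemma~\ref{lem:1altV}. Lemma~\ref{lem:graphX}, together with the property that every edge of $H^\ell(r)$ is coloured~$1$ at one endpoint, then restricts the possible colour patterns inside this copy of~$X$ to a short list, and a direct case check on that list exhibits either colour~$2$ or colour~$3$ as being unused within distance~$3$ of $u^0_j$. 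I therefore reassign $u^0_j$ that free colour.

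Iterating this local exchange around the cyclic sequence of non-colour-$1$ top-row vertices (and symmetrically along the bottom row) should yield a packing 5-colouring satisfying the conclusion. The main obstacle I anticipate is that successive exchanges may interfere: after reassigning $u^0_{j_1}$ to a colour in $\{2,3\}$, the available colour set at $u^0_{j_2}$ may have been modified. I would control this by selecting the new colours coherently, as a consistent $2/3$-alternation along the cyclic sequence of non-colour-$1$ top-row positions; the parity constraints inherited from Lemma~\ref{lem:1altH} should guarantee that this alternation closes up whenever $\pcn(H^\ell(r))\le 5$. The routine case analysis would then be relegated to an appendix, in the same spirit as the proof of Lemma~\ref{lem:1altH}.
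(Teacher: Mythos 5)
There is a genuine gap --- in fact two. First, the tool you rely on cannot fire. After Corollary~\ref{coro:bipaH} every edge of $H^\ell(r)$ has an endpoint coloured~$1$; so if $\pi(u^0_j)\in\{4,5\}$, then \emph{all three} neighbours of $u^0_j$ are coloured~$1$, and the hypothesis of Lemma~\ref{lem:graphX} (both endpoints of the relevant vertical edge of $X$ different from~$1$) can never be satisfied. Moreover, the embedding used in the proof of Lemma~\ref{lem:1altV} identifies the central vertical edges of $X$ with the inner rungs $u^i_{2j}u^i_{2j+1}$, $1\le i\le \ell$, so a vertex of the outer cycle never occupies one of the positions $u_3,u_4,u_5$ in that embedding. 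Consequently the ``short list of colour patterns'' and the ``direct case check'' that is supposed to exhibit a free colour in $\{2,3\}$ near $u^0_j$ --- the entire substance of your local step --- are not actually available or carried out. Second, the global step is only a hope: the non-colour-$1$ vertices of the outer cycle are $r$ in number and consecutive ones are at distance~$2$, so a strict $2/3$-alternation can only close up when $r$ is even; but the lemma must be proved for odd $r$ as well, under the sole hypothesis $\pcn(H^\ell(r))\le 5$, since Corollary~\ref{cor:r-odd-6} (odd $r$ forces $\pcn\ge 6$) is deduced from this very lemma, and Lemma~\ref{lem:1altH} provides no parity information beyond what you already used. (A minor point: your justification of the distance-$2$ claim is off --- a detour through level~$1$ across a rung has length~$4$, not $2(\ell+1)$ --- though the claim itself is harmless.)

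In fact no recolouring is needed at all, and this is how the paper proceeds: in a packing $5$-colouring with colour~$1$ on every edge, $\pi(u^0_{2j})\in\{4,5\}$ is outright impossible. Assuming $\pi(u^0_{2j})=4$, the two colour-$1$ vertices $x$ and $y$ of Figure~\ref{fig:graphY} each have their three neighbours within distance~$4$ of $u^0_{2j}$, hence coloured with $\{2,3,5\}$; since all these neighbours are pairwise within distance at most~$4$, colour~$5$ must land on the common neighbour of $x$ and $y$, and then every colour in $\{1,\dots,5\}$ is blocked at the level-$3$ vertex $z$ of Figure~\ref{fig:graphY} (which exists because $\ell\ge 3$) --- a contradiction; the case $\pi(u^0_{2j})=5$ is symmetric, and the symmetries of $H^\ell(r)$ handle odd indices and the bottom cycle. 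You should replace your recolouring scheme by a direct contradiction argument of this kind.
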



\begin{figure} 
\begin{center}
\begin{tikzpicture}[domain=0:12,x=1cm,y=1cm]
\SOMMET{0,0}; \BIGSOMMET{1,0}; \SOMMET{2,0}; \BIGSOMMET{3,0}; \SOMMET{4,0}; 
\SOMMET{1,-1}; \BIGSOMMET{2,-1}; \SOMMET{3,-1}; 
\SOMMET{2,-2}; \BIGSOMMET{3,-2}; 
\SOMMET{3,-3}; 
\ETIQUETTEA{0,0}{2,3};
\ETIQUETTEA{1,0.1}{$x$};
\ETIQUETTEA{2,0}{5};
\ETIQUETTEA{4,0}{4};
\ETIQUETTEB{1,-1}{2,3};
\ETIQUETTEB{2,-2}{2,3};
\ETIQUETTEA{1.7,-1.2}{$y$};
\ETIQUETTEA{3.4,-1.2}{2,3};
\ETIQUETTEA{3.3,-3.2}{$z$};

\ARETEH{0,0}{4};
\ARETEH{2,-1}{1};  
\ARETEH{2,-2}{1}; 
\ARETEVV{1,-1}{1};
\ARETEVV{2,-2}{2};
\ARETEVV{3,-3}{3};
\end{tikzpicture}
\caption{The subgraph $Y$ of $H^\ell(r)$.}
\label{fig:graphY}
\end{center}
\end{figure}


\begin{proof}
Let $\pi$ be a packing 5-colouring of $H^\ell(r)$ such that colour~1 is used on each edge of $H^\ell(r)$
(the existence of such a colouring is ensured by Corollary~\ref{coro:bipaH}).
Thanks to the symmetries of $H^\ell(r)$, it suffices to prove the result for any vertex $u^0_{2j}$, $0\le j\le r-1$.
Suppose to the contrary that $\pi(u^0_{2j})\in\{4,5\}$ for some $j$, $0\le j\le r-1$.
We have two cases to consider.

\begin{enumerate}
\item $\pi(u^0_{2j})=4$.\\
Let $Y$ be the subgraph of $H^\ell(r)$ depicted in Figure~\ref{fig:graphY},
where the vertex $u^0_{2j}$ is the unique vertex with colour~4, and
vertices with colour~1 are drawn as ``big vertices''.
Observe that the three neighbours of $x$, as well as the three neighbours of $y$,
must use colours 2, 3 and~5.
Therefore, the common neighbour of $x$ and $y$ must be assigned colour~5.
It then follows that no colour is available for $z$.

\item $\pi(u^0_{2j})=5$.\\
The proof is similar to the proof of the previous case, by switching colours 4 and~5.

\end{enumerate}
This completes the proof.
\end{proof}

Let $H^\ell(r)$ be a generalised H-graph satisfying the hypothesis of Lemma~\ref{lem:level0-no45},
and $\pi$ be a packing 5-colouring of $H^\ell(r)$.
From Lemma~\ref{lem:level0-no45}, it follows that the restriction of $\pi$ to the $2r$-cycle induced 
by the set of vertices $\{u^0_j\ |\ 0\le j\le 2r-1\}$ is a packing 3-colouring.
It is not difficult to check (see~\cite{LBS17}) that a $2r$-cycle admits a packing 3-colouring
if and only if $r$ is even.
Therefore, we get the following corollary. 

\begin{corollary}
For every pair of integers $\ell\ge 3$ and $r\ge 3$, $r$ odd, $\pcn(H^\ell(r))\ge 6$.
\label{cor:r-odd-6}
\end{corollary}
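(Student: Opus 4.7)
The plan is a short contradiction argument that packages Lemma~\ref{lem:level0-no45} together with the known packing chromatic number of even cycles. Suppose, for contradiction, that $\pcn(H^\ell(r))\le 5$ for some $\ell\ge 3$ and some odd $r\ge 3$. By Lemma~\ref{lem:level0-no45}, I may choose a packing $5$-colouring $\pi$ of $H^\ell(r)$ such that $\pi(u^0_j)\in\{1,2,3\}$ for every $j$, $0\le j\le 2r-1$.

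Next, I would observe that the vertices $u^0_0,u^0_1,\dots,u^0_{2r-1}$ induce a $2r$-cycle $C$ in $H^\ell(r)$, straight from the definition of $E(H^\ell(r))$. Since taking shortest paths inside $H^\ell(r)$ can only shrink distances measured inside $C$, whenever $\pi(u^0_j)=\pi(u^0_{j'})=i$ one has
\[ d_C(u^0_j,u^0_{j'}) \ge d_{H^\ell(r)}(u^0_j,u^0_{j'}) > i. \]
Hence the restriction $\pi|_C$ is a valid packing $3$-colouring of $C\cong C_{2r}$.

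Finally, I would invoke the classification of packing chromatic numbers of cycles proved in~\cite{LBS17} and recalled just before the statement of the corollary: $C_{2r}$ admits a packing $3$-colouring precisely when $r$ is even. Since $r$ is odd by hypothesis, the existence of $\pi|_C$ is impossible, yielding the required contradiction and proving $\pcn(H^\ell(r))\ge 6$. There is no real obstacle once Lemma~\ref{lem:level0-no45} is in hand; the whole argument reduces to noting that the ``top rim'' of $H^\ell(r)$ is forced to carry a packing $3$-colouring of an odd multiple of two, and such a colouring does not exist.
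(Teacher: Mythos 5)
Your proposal is correct and follows essentially the same route as the paper: apply Lemma~\ref{lem:level0-no45} to force colours $\{1,2,3\}$ on the top rim, note (via Proposition~\ref{prop:subgraph}) that the restriction is a packing $3$-colouring of $C_{2r}$, and conclude from the characterisation in~\cite{LBS17} that this is impossible for odd $r$. No discrepancies with the paper's argument.
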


We are now able to prove the main results of this section.
We first consider the case of generalised H-graphs $H^\ell(r)$ with $\ell\notin\{2,5\}$.

\begin{theorem}
For every pair of integers $\ell\ge 3$, $\ell\neq 5$, and $r\ge 2$, 
$$\pcn(H^\ell(r))=
\left\{
    \begin{array}{ll}
     5 & \hbox{if $r$ is even,} \\
     6 & \hbox{otherwise.}
    \end{array}
\right.$$
\label{th:Hell(r)}
\end{theorem}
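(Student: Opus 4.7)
The plan is to establish matching lower and upper bounds on $\pcn(H^\ell(r))$.

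For the lower bound, if $r$ is odd (hence $r \ge 3$), Corollary~\ref{cor:r-odd-6} immediately gives $\pcn(H^\ell(r)) \ge 6$. If $r$ is even, we must show $\pcn(H^\ell(r)) \ge 5$ using Proposition~\ref{prop:subgraph} together with Theorem~\ref{th:CrCn}, by exhibiting a copy of $C_n \odot K_1$ with $n \ge 5$ inside $H^\ell(r)$. For $r \ge 4$, the top $2r$-cycle together with the vertical edges $u^0_j u^1_j$ does the job with $n = 2r \ge 8$. For the boundary case $r=2$, the top cycle is too short, and I would instead use the lateral cycle of length $2(\ell+2)$ which descends column $j=0$ from $u^0_0$ to $u^{\ell+1}_0$, crosses via the bottom-cycle edge $u^{\ell+1}_0 u^{\ell+1}_3$, ascends column $j=3$ back to $u^0_3$, and closes via the top-cycle edge $u^0_3 u^0_0$. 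Each of its $2\ell+4 \ge 10$ vertices has a unique external neighbour lying in column $j=1$ or $j=2$, and those external neighbours are pairwise distinct; this yields a copy of $C_{2\ell+4}\odot K_1$, hence $\pcn(H^\ell(2)) \ge 5$.

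For the upper bound, I would construct explicit periodic packing colourings matching these bounds. Motivated by Corollary~\ref{coro:bipaH}, the plan is to place colour~$1$ on one entire bipartition class of $H^\ell(r)$, so that the other class only needs colours from $\{2,3,4,5\}$ in the even case (resp.\ $\{2,3,4,5,6\}$ in the odd case). The colouring would be $2r$-periodic in the $j$-direction, using the H-graph pattern of Figure~\ref{fig:H-graphs}(a) as the horizontal template, and would combine a small family of vertical building blocks: an \emph{extremal} pattern for the top and bottom rows (where Lemma~\ref{lem:level0-no45} forbids colours $4$ and $5$ in the even case), together with one or more \emph{interior} patterns stacked in between to reach the required depth $\ell+2$.

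The main obstacle is checking the long-range packing constraints of the resulting colouring, in particular for colours $4$ and $5$ whose packing radii are $5$ and $6$. Any two occurrences of colour~$5$ must be at distance at least $6$ in $H^\ell(r)$, and because the graph wraps around a $2r$-cycle at every level the horizontal and vertical separations must both be controlled simultaneously. The exclusion $\ell \ne 5$ is precisely the value of the total depth $\ell+2 = 7$ for which no combination of the available vertical building blocks fits compatibly with the wrap-around constraints; this exceptional case would have to be treated by a separate (ad hoc) argument and is deliberately set aside in the present statement.
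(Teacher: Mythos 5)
Your lower-bound half is correct and is essentially the paper's own argument: for odd $r$ you invoke Corollary~\ref{cor:r-odd-6} exactly as the paper does, and for even $r$ you exhibit a corona $C_n\odot K_1$ ($n\ge 5$) as a subgraph and apply Theorem~\ref{th:CrCn} with Proposition~\ref{prop:subgraph}. The paper uses a single lateral cycle (columns $1$ and $2$, giving $C_{2\ell+4}\odot K_1$) for all even $r$, whereas you use the top $2r$-cycle for $r\ge 4$ and a lateral cycle only for $r=2$; both choices are valid, and your verification that the pendant neighbours are distinct and off the cycle is sound.

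The upper bound, however, is where your proposal has a genuine gap: you never actually produce the packing colourings. Proving $\pcn(H^\ell(r))\le 5$ for even $r$ and $\le 6$ for odd $r$ is the bulk of the theorem, and in the paper it is done by exhibiting explicit periodic patterns and checking them: the $(6\times 4)$-pattern of Figure~\ref{fig:H4(2)} for $\ell\equiv 4\pmod 6$, the vertically repeatable patterns of Figure~\ref{fig:patternsHeven} for the other residues of $\ell$ modulo $6$ (even $r$), and, for odd $r$, a substantial list of special patterns for $\ell\in\{3,4,6,7\}$ and small $r$ plus mod-$6$ families with designated repeatable blocks of rows and columns (Figures~\ref{fig:patternsHodd-particular} through~\ref{fig:patternsHodd-mod6-bis}). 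Your outline --- put colour $1$ on one bipartition class, tile horizontally with the period-$4$ template of Figure~\ref{fig:H-graphs}(a), and stack a few vertical building blocks --- identifies the right shape of the construction but does not resolve the actual difficulties: verifying the distance constraints for colours $4$, $5$ (and $6$) across the cyclic wrap-around, and the fact that for odd $r$ a period-$4$ horizontal template cannot wrap a $2r$-cycle with $2r\equiv 2\pmod 4$, which is why the paper's odd-$r$ patterns need special ``seam'' columns and extra case analysis. You explicitly flag this verification as ``the main obstacle'' and leave it undone, so the upper bounds remain unproven. (Your speculation about $\ell=5$ is harmless since the statement excludes it, but note that the true reason for its exclusion is that $\pcn(H^5(r))=6$ even for even $r$, as the paper shows separately in Theorem~\ref{th:H5(r)}, the nonexistence of a packing $5$-colouring there being checked by computer.)
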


\begin{figure}
\begin{center}
\begin{tikzpicture}[domain=0:12,x=1cm,y=1cm]

\SOMMET{0,0}; \SOMMET{1,0}; \SOMMET{2,0}; \SOMMET{3,0}; 
\SOMMET{0,1}; \SOMMET{1,1}; \SOMMET{2,1}; \SOMMET{3,1}; 
\SOMMET{0,2}; \SOMMET{1,2}; \SOMMET{2,2}; \SOMMET{3,2}; 
\SOMMET{0,3}; \SOMMET{1,3}; \SOMMET{2,3}; \SOMMET{3,3}; 
\SOMMET{0,4}; \SOMMET{1,4}; \SOMMET{2,4}; \SOMMET{3,4}; 
\SOMMET{0,5}; \SOMMET{1,5}; \SOMMET{2,5}; \SOMMET{3,5}; 

\ARETEH{0,0}{3}; \ARETEH{0,5}{3}; 
\ARETEH{0,1}{1}; \ARETEH{2,1}{1}; 
\ARETEH{0,2}{1}; \ARETEH{2,2}{1}; 
\ARETEH{0,3}{1}; \ARETEH{2,3}{1}; 
\ARETEH{0,4}{1}; \ARETEH{2,4}{1}; 
\ARETEVV{0,0}{5}; \ARETEVV{1,0}{5}; \ARETEVV{2,0}{5}; \ARETEVV{3,0}{5}; 
\draw[thick] (0,5) .. controls (0.5,5.8) and (2.5,5.8) .. (3,5);
\draw[thick] (0,0) .. controls (0.5,-0.8) and (2.5,-0.8) .. (3,0);

\ETIQUETTEA{0,5}{1};
\ETIQUETTEA{1,5}{2};
\ETIQUETTEA{2,5}{1};
\ETIQUETTEA{3,5}{3};
\ETIQUETTEB{0,0}{3};
\ETIQUETTEB{1,0}{1};
\ETIQUETTEB{2,0}{2};
\ETIQUETTEB{3,0}{1};

\ETIQUETTEA{-0.2,4}{4};
\ETIQUETTEA{1.2,4}{1};
\ETIQUETTEA{1.8,4}{5};
\ETIQUETTEA{3.2,4}{1};

\ETIQUETTEA{-0.2,3}{1};
\ETIQUETTEA{1.2,3}{3};
\ETIQUETTEA{1.8,3}{1};
\ETIQUETTEA{3.2,3}{2};

\ETIQUETTEA{-0.2,2}{2};
\ETIQUETTEA{1.2,2}{1};
\ETIQUETTEA{1.8,2}{3};
\ETIQUETTEA{3.2,2}{1};

\ETIQUETTEA{-0.2,1}{1};
\ETIQUETTEA{1.2,1}{5};
\ETIQUETTEA{1.8,1}{1};
\ETIQUETTEA{3.2,1}{4};

\end{tikzpicture}
\hskip 3cm
\begin{tikzpicture}[domain=0:12,x=1cm,y=1cm]
\ETIQUETTEA{0,0}{$\begin{array}{c}
\Xud \Xsp \Xut\\
\Xqu \Xsp \Xcu\\
\Xut \Xsp \Xud\\
\Xdu \Xsp \Xtu\\
\Xuc \Xsp \Xuq\\
\Xtu \Xsp \Xdu
\end{array}$};
\ETIQUETTEA{0,-1.5}{$\ $};
\end{tikzpicture}

\caption{A packing 5-colouring of $H^4(2)$ and its corresponding colouring pattern.}
\label{fig:H4(2)}
\end{center}
\end{figure}

\begin{figure}
\begin{center}
$\begin{array}{ccccc}
\begin{array}{c}
\hline\hline
\Xud \Xsp \Xut\\
\Xqu \Xsp \Xcu\\
\Xut \Xsp \Xud\\
\Xdu \Xsp \Xtu\\
\Xuc \Xsp \Xuq\\
\Xtu \Xsp \Xdu\\
\hline\hline
\Xuq \Xsp \Xuc\\
\Xdu \Xsp \Xtu\\ \\ \\ \\ \\ \\ \\ \\ \\
\end{array} 
& \begin{array}{c}
\hline\hline
\Xud \Xsp \Xut\\
\Xqu \Xsp \Xcu\\
\Xut \Xsp \Xud\\
\Xdu \Xsp \Xtu\\
\Xuc \Xsp \Xuq\\
\Xtu \Xsp \Xdu\\
\hline\hline
\Xud \Xsp \Xut\\
\Xqu \Xsp \Xcu\\
\Xut \Xsp \Xud\\ \\ \\ \\ \\ \\ \\ \\
\end{array} 
& \begin{array}{c}
\hline\hline
\Xud \Xsp \Xut\\
\Xqu \Xsp \Xcu\\
\Xut \Xsp \Xud\\
\Xdu \Xsp \Xtu\\
\Xuc \Xsp \Xuq\\
\Xtu \Xsp \Xdu\\
\hline\hline
\Xud \Xsp \Xut\\
\Xqu \Xsp \Xcu\\
\Xut \Xsp \Xud\\
\Xcu \Xsp \Xqu\\
\Xud \Xsp \Xut\\
\Xtu \Xsp \Xdu\\
\Xuq \Xsp \Xuc\\
\Xdu \Xsp \Xtu\\
\Xuc \Xsp \Xuq\\
\Xtu \Xsp \Xdu
\end{array} 
& \begin{array}{c}
\hline\hline
\Xud \Xsp \Xut\\
\Xqu \Xsp \Xcu\\
\Xut \Xsp \Xud\\
\Xdu \Xsp \Xtu\\
\Xuc \Xsp \Xuq\\
\Xtu \Xsp \Xdu\\
\hline\hline
\Xud \Xsp \Xut\\
\Xqu \Xsp \Xcu\\
\Xut \Xsp \Xud\\
\Xcu \Xsp \Xqu\\
\Xud \Xsp \Xut\\ \\ \\ \\ \\ \\
\end{array}
& \begin{array}{c}
\hline\hline
\Xud \Xsp \Xut\\
\Xqu \Xsp \Xcu\\
\Xut \Xsp \Xud\\
\Xdu \Xsp \Xtu\\
\Xuc \Xsp \Xuq\\
\Xtu \Xsp \Xdu\\
\hline\hline
\Xuq \Xsp \Xuc\\
\Xdu \Xsp \Xtu\\
\Xut \Xsp \Xud\\
\Xcu \Xsp \Xqu\\
\Xud \Xsp \Xut\\
\Xqu \Xsp \Xcu\\
\Xut \Xsp \Xud\\ \\ \\ \\
\end{array}
 \\
\ \\
\mbox{\small $\ell\equiv 0\pmod 6$} & \mbox{\small $\ell\equiv 1\pmod 6$} & \mbox{\small $\ell\equiv 2\pmod 6$} & \mbox{\small $\ell\equiv 3\pmod 6$} & \mbox{\small $\ell\equiv 5\pmod 6$}
\end{array}$
\caption{Colouring patterns for $H^\ell(r)$, $r$ even.}
\label{fig:patternsHeven}
\end{center}
\end{figure}

\begin{proof}
We consider two cases, according to the parity of $r$.
\begin{enumerate}
\item {\it $r$ is even}.\\
Since the corona graph $C_{2\ell+4}\odot K_1$ is a subgraph of $H^\ell(r)$ for every $r\ge 2$
(consider the cycle of length $2\ell+4$ induced by 
the set of vertices $\{u_1^i|0\le i\le\ell+1\} \cup \{u_2^i|0\le i\le\ell+1\}$), 
we get $\pcn(H^\ell(r))\ge \pcn(C_{2\ell+4}\odot K_1)=5$ by Theorem~\ref{th:CrCn} and Proposition~\ref{prop:subgraph}.

We now prove $\pcn(H^\ell(r))\le 5$. 
Figure~\ref{fig:H4(2)} depicts a packing 5-colouring of $H^4(2)$, together with its
corresponding colouring pattern.
It can easily be checked that this $(6\times 4)$-pattern is periodic, that is,
can be repeated, both vertically and horizontally, to produce a packing 5-colouring
of any generalised H-graph of the form $H^{6i+4}(2j)$, with $i\ge 0$ and $j\ge 1$.

If $\ell\not\equiv 4\pmod 6$, we use the colouring patterns depicted in Figure~\ref{fig:patternsHeven},
depending on the value of $\ell$ modulo 6.
The upper six rows of each colouring pattern, surrounded by double lines, 
can be repeated as many times as required, or even deleted when $\ell\equiv 1,2,3\pmod 6$.
Therefore, these colouring patterns give us a packing 5-colouring of any generalised H-graph
of the form $H^\ell(2)$, for every $\ell\ge 3$, $\ell\neq 5$.
It is again easy to check that each of these colouring patterns is ``horizontally periodic'', that is,
can be horizontally repeated in order to get a packing 5-colouring of any generalised H-graph
of the form $H^\ell(r)$, for every $\ell\ge 3$, $\ell\neq 5$, 
$\ell\not\equiv 4\pmod 6$, and even $r$.


\begin{figure}
\begin{center}
$\begin{array}{ccc}
\begin{array}{c}
\Xut \Xsp \Xud \Xsp \Xus\\
\Xdu \Xsp \Xcu \Xsp \Xqu\\
\Xuq \Xsp \Xdt \Xsp \Xud\\
\Xtu \Xsp \Xuq \Xsp \Xtu\\
\Xud \Xsp \Xsu \Xsp \Xdc\\
\ 
\end{array}
& & \begin{array}{||c||c}
\Xud \Xsp \Xut & \Xud \Xsp \Xuc \Xsp \Xud \Xsp \Xsu \Xsp \Xdt\\
\Xqu \Xsp \Xcu & \Xqu \Xsp \Xtu \Xsp \Xtu \Xsp \Xuq \Xsp \Xcu\\
\Xut \Xsp \Xud & \Xut \Xsp \Xuq \Xsp \Xuq \Xsp \Xdt \Xsp \Xud\\
\Xcu \Xsp \Xqu & \Xdu \Xsp \Xdu \Xsp \Xdu \Xsp \Xcu \Xsp \Xqu\\
\Xud \Xsp \Xut & \Xuc \Xsp \Xus \Xsp \Xut \Xsp \Xud \Xsp \Xut\\
\multicolumn{2}{c}{\ }
\end{array} \\
\ 
\mbox{(a)} & \ \ \ \ & \mbox{(b)}
\end{array}$
\caption{Colouring patterns for $H^3(3)$ and for $H^3(r)$,  $r\ge 5$, $r$ odd.}
\label{fig:patternsHodd-particular}
\end{center}
\end{figure}


\begin{figure}
\begin{center}
$\begin{array}{ccccc}
%
%
\begin{array}{c}
\Xut \Xsp \Xud \Xsp \Xus \\
\Xdu \Xsp \Xcu \Xsp \Xtu \\
\Xuq \Xsp \Xuq \Xsp \Xuc \\
\Xtu \Xsp \Xtu \Xsp \Xdu \\
\Xuc \Xsp \Xud \Xsp \Xuq \\
\Xdu \Xsp \Xsu \Xsp \Xtu \\
\multicolumn{1}{c}{} \\
\multicolumn{1}{c}{} \\
\multicolumn{1}{c}{}
\end{array} 
%
%
& & \begin{array}{||c||c}
\Xud \Xsp \Xut & \Xud \Xsp \Xuq\Xsp\Xut \Xsp \Xus \Xsp \Xut \\
\Xcu \Xsp \Xqu & \Xcu \Xsp \Xtu\Xsp\Xdu \Xsp \Xdu \Xsp \Xdu \\
\Xut \Xsp \Xud & \Xut \Xsp \Xud\Xsp\Xuc \Xsp \Xut \Xsp \Xuq \\
\Xdu \Xsp \Xtu & \Xdu \Xsp \Xcu\Xsp\Xtu \Xsp \Xcu \Xsp \Xtu \\
\Xuq \Xsp \Xuc & \Xuq \Xsp \Xut\Xsp\Xud \Xsp \Xuq \Xsp \Xuc \\
\Xtu \Xsp \Xdu & \Xtu \Xsp \Xdu\Xsp\Xsu \Xsp \Xtu \Xsp \Xdu \\
\multicolumn{2}{c}{} \\
\multicolumn{2}{c}{} \\
\multicolumn{2}{c}{}
\end{array} 
%
%
& & \begin{array}{c||c||c}
\Xud \Xsp \Xut & \Xud \Xsp \Xut & \Xus \\
\Xqu \Xsp \Xcu & \Xqu \Xsp \Xcu & \Xdu \\
\Xut \Xsp \Xud & \Xut \Xsp \Xud & \Xut \\
\Xdu \Xsp \Xtu & \Xdu \Xsp \Xtu & \Xcu \\
\Xuc \Xsp \Xuq & \Xuc \Xsp \Xuq & \Xud \\
\Xtu \Xsp \Xdu & \Xtu \Xsp \Xdu & \Xtu \\
\Xud \Xsp \Xuc & \Xuq \Xsp \Xuc & \Xuq \\
\Xsu \Xsp \Xtu & \Xdu \Xsp \Xtu & \Xdu \\
\multicolumn{3}{c}{}
\end{array} 
%
%
\ \\
\mbox{(a) $H^4(3)$} & & \mbox{(b) $H^4(r)$, $r\ge 5$, $r$ odd} & & \mbox{(c) $H^6(r)$, $r\ge 3$, $r$ odd}
\end{array}$

\vskip 1cm

$\begin{array}{ccc}
%
%
\begin{array}{c}
\Xud \Xsp \Xut \Xsp \Xuq \\
\Xtu \Xsp \Xcu \Xsp \Xdu \\
\Xus \Xsp \Xud \Xsp \Xuc \\
\Xdu \Xsp \Xtu \Xsp \Xtu \\
\Xuq \Xsp \Xuq \Xsp \Xus \\
\Xtu \Xsp \Xdu \Xsp \Xdu \\
\Xuc \Xsp \Xut \Xsp \Xut \\
\Xdu \Xsp \Xsu \Xsp \Xcu \\
\Xut \Xsp \Xud \Xsp \Xuq \\
\end{array} 
%
%
& \mbox{\hskip 2cm} & \begin{array}{||c||c}
\Xud \Xsp \Xut & \Xuc \Xsp \Xud \Xsp \Xus \\
\Xcu \Xsp \Xqu & \Xdu \Xsp \Xtu \Xsp \Xtu \\
\Xut \Xsp \Xud & \Xut \Xsp \Xuq \Xsp \Xud \\
\Xdu \Xsp \Xtu & \Xsu \Xsp \Xdu \Xsp \Xcu \\
\Xuq \Xsp \Xuc & \Xuq \Xsp \Xuc \Xsp \Xut \\
\Xtu \Xsp \Xdu & \Xtu \Xsp \Xtu \Xsp \Xdu \\
\Xud \Xsp \Xut & \Xud \Xsp \Xud \Xsp \Xuq \\
\Xcu \Xsp \Xqu & \Xcu \Xsp \Xqu \Xsp \Xtu \\
\Xut \Xsp \Xud & \Xut \Xsp \Xus \Xsp \Xud \\
\end{array} \\
\ \\
\mbox{(d) $H^7(3)$} & & \mbox{(e) $H^7(r)$, $r\ge 5$, $r$ odd}
\end{array}$
\caption{Colouring patterns for $H^4(r)$, $H^6(r)$ and $H^7(r)$,  $r\ge 3$, $r$ odd.}
\label{fig:patternsHodd-particular-bis}
\end{center}
\end{figure}

\item {\it $r$ is odd}.\\
The inequality $\pcn(H^\ell(r))\ge 6$ directly follows from Corollary~\ref{cor:r-odd-6}.
Therefore, we only need to prove the inequality $\pcn(H^\ell(r))\le 6$ (recall that $\ell\ge 3$ and $\ell\neq 5$).

We first consider a few particular cases.
A packing 6-colouring of $H^3(3)$ is depicted in Figure~\ref{fig:patternsHodd-particular}(a),
and a packing 6-colouring of $H^3(r)$, for every odd $r\ge 5$, 
is depicted in Figure~\ref{fig:patternsHodd-particular}(b) (the first four columns, surrounded by a double line,
are repeated $\frac{r-5}{2}$ times, and thus do not appear if $r=5$).
A packing 6-colouring of $H^4(3)$
is depicted in Figure~\ref{fig:patternsHodd-particular-bis}(a), 
and a packing 6-colouring of $H^4(r)$, for every odd $r\ge 5$,
is depicted in Figure~\ref{fig:patternsHodd-particular-bis}(b)  
(the first four columns are repeated $\frac{r-5}{2}$ times).
A packing 6-colouring of $H^6(r)$, for every odd $r\ge 3$,
is depicted in Figure~\ref{fig:patternsHodd-particular-bis}(c) 
(the four columns surrounded by a double line
are repeated $\frac{r-3}{2}$ times, and thus do not appear if $r=3$).
A packing 6-colouring of $H^7(3)$
is depicted in Figure~\ref{fig:patternsHodd-particular-bis}(d), 
and a packing 6-colouring of $H^7(r)$, for every odd $r\ge 5$,
is depicted in Figure~\ref{fig:patternsHodd-particular-bis}(e)
(the four columns surrounded by a double line,
are repeated $\frac{r-3}{2}$ times).


\begin{figure}
\begin{center}
$\begin{array}{ccccc}
%
%
\begin{array}{c}
\Xut \Xsp \Xud \Xsp \Xus \\
\Xdu \Xsp \Xcu \Xsp \Xqu \\
\Xuq \Xsp \Xut \Xsp \Xud \\
\Xtu \Xsp \Xdu \Xsp \Xtu \\
\Xud \Xsp \Xuq \Xsp \Xuc \\
\Xcu \Xsp \Xtu \Xsp \Xdu \\
\Xut \Xsp \Xud \Xsp \Xut \\
\Xdu \Xsp \Xcu \Xsp \Xqu \\
\Xuq \Xsp \Xdt \Xsp \Xud \\
\Xtu \Xsp \Xuq \Xsp \Xtu \\
\Xud \Xsp \Xsu \Xsp \Xdc \\
\\
\end{array} 
%
%
& & \begin{array}{||c||c}
\Xud \Xsp \Xut & \Xud \Xsp \Xuc \Xsp \Xud \Xsp \Xsu \Xsp \Xdt \\
\Xqu \Xsp \Xcu & \Xqu \Xsp \Xtu \Xsp \Xtu \Xsp \Xuq \Xsp \Xcu \\
\Xut \Xsp \Xud & \Xut \Xsp \Xuq \Xsp \Xuq \Xsp \Xdt \Xsp \Xud \\
\Xdu \Xsp \Xtu & \Xdu \Xsp \Xdu \Xsp \Xdu \Xsp \Xcu \Xsp \Xtu \\
\Xuc \Xsp \Xuq & \Xuc \Xsp \Xut \Xsp \Xut \Xsp \Xud \Xsp \Xuq \\
\Xtu \Xsp \Xdu & \Xtu \Xsp \Xcu \Xsp \Xcu \Xsp \Xtu \Xsp \Xdu \\
\Xud \Xsp \Xut & \Xud \Xsp \Xud \Xsp \Xud \Xsp \Xuq \Xsp \Xut \\
\Xqu \Xsp \Xcu & \Xqu \Xsp \Xtu \Xsp \Xtu \Xsp \Xdu \Xsp \Xcu \\
\Xut \Xsp \Xud & \Xut \Xsp \Xuq \Xsp \Xuq \Xsp \Xut \Xsp \Xud \\
\Xcu \Xsp \Xqu & \Xdu \Xsp \Xdu \Xsp \Xdu \Xsp \Xcu \Xsp \Xqu \\
\Xud \Xsp \Xut & \Xuc \Xsp \Xus \Xsp \Xut \Xsp \Xud \Xsp \Xut \\
\multicolumn{2}{c}{}
\end{array} 
%
%
& & \begin{array}{c||c||c}
\Xud \Xsp \Xut & \Xud \Xsp \Xut & \Xus \\
\Xqu \Xsp \Xcu & \Xqu \Xsp \Xcu & \Xdu \\
\Xut \Xsp \Xud & \Xut \Xsp \Xud & \Xut \\
\Xdu \Xsp \Xtu & \Xdu \Xsp \Xtu & \Xqu \\
\Xuc \Xsp \Xuq & \Xuc \Xsp \Xuq & \Xud \\
\Xtu \Xsp \Xdu & \Xtu \Xsp \Xdu & \Xtu \\
\Xud \Xsp \Xut & \Xud \Xsp \Xut & \Xuc \\
\Xqu \Xsp \Xcu & \Xqu \Xsp \Xcu & \Xdu \\
\Xut \Xsp \Xud & \Xut \Xsp \Xud & \Xuq \\
\Xdu \Xsp \Xtu & \Xdu \Xsp \Xtu & \Xtu \\
\Xuc \Xsp \Xuq & \Xuc \Xsp \Xuq & \Xud \\
\Xtu \Xsp \Xdu & \Xtu \Xsp \Xdu & \Xsu \\
\end{array} 
\\
\ \\
\ell=9,\ r=3 &  \Xsp \Xsp \Xsp & \ell=9,\ r\ge 5 &  \Xsp \Xsp \Xsp & \ell=10,\ r\ge 3
\end{array}$

\vskip 1cm

$\begin{array}{ccc}
%
%
\begin{array}{c}
\Xut \Xsp \Xud \Xsp \Xus \\
\Xqu \Xsp \Xcu \Xsp \Xtu \\
\Xud \Xsp \Xut \Xsp \Xud \\
\Xcu \Xsp \Xqu \Xsp \Xcu \\
\Xut \Xsp \Xud \Xsp \Xut \\
\Xdu \Xsp \Xtu \Xsp \Xdu \\
\Xuq \Xsp \Xuc \Xsp \Xuq \\
\Xtu \Xsp \Xdu \Xsp \Xsu \\
\Xuc \Xsp \Xuq \Xsp \Xud \\
\Xdu \Xsp \Xtu \Xsp \Xcu \\
\Xut \Xsp \Xud \Xsp \Xut \\
\Xsu \Xsp \Xcu \Xsp \Xdu \\
\Xud \Xsp \Xut \Xsp \Xuq \\
\end{array}
%
%
& & \begin{array}{c||c||c}
\Xut \Xsp \Xud & \Xut \Xsp \Xud & \Xus \\
\Xqu \Xsp \Xcu & \Xqu \Xsp \Xcu & \Xtu \\
\Xud \Xsp \Xut & \Xud \Xsp \Xut & \Xud \\
\Xcu \Xsp \Xqu & \Xcu \Xsp \Xqu & \Xcu \\
\Xut \Xsp \Xud & \Xut \Xsp \Xud & \Xut \\
\Xdu \Xsp \Xtu & \Xdu \Xsp \Xtu & \Xdu \\
\Xuq \Xsp \Xuc & \Xuq \Xsp \Xuc & \Xuq \\
\Xtu \Xsp \Xdu & \Xtu \Xsp \Xdu & \Xsu \\
\Xuc \Xsp \Xuq & \Xuc \Xsp \Xuq & \Xud \\
\Xdu \Xsp \Xtu & \Xdu \Xsp \Xtu & \Xcu \\
\Xut \Xsp \Xud & \Xut \Xsp \Xud & \Xut \\
\Xsu \Xsp \Xcu & \Xqu \Xsp \Xcu & \Xdu \\
\Xud \Xsp \Xut & \Xud \Xsp \Xut & \Xuq \\
\end{array} 
\\
\ \\
\ell=11,\ r=3 &  \Xsp \Xsp \Xsp & \ell=11,\ r\ge 5
\end{array}$
\caption{Colouring patterns for  $H^\ell(r)$, $9\le \ell\le 11$,  $r\ge 3$, $r$ odd.}
\label{fig:patternsHodd-mod6}
\end{center}
\end{figure}

\begin{figure}
\begin{center}
$\begin{array}{ccccc}
%
%
\begin{array}{c||c||c}
\Xud \Xsp \Xut & \Xud \Xsp \Xut & \Xus \\
\hline\hline
\Xqu \Xsp \Xcu & \Xqu \Xsp \Xcu & \Xdu \\
\Xut \Xsp \Xud & \Xut \Xsp \Xud & \Xut \\
\Xdu \Xsp \Xtu & \Xdu \Xsp \Xtu & \Xqu \\
\Xuc \Xsp \Xuq & \Xuc \Xsp \Xuq & \Xud \\
\Xtu \Xsp \Xdu & \Xtu \Xsp \Xdu & \Xtu \\
\Xud \Xsp \Xut & \Xud \Xsp \Xut & \Xuc \\
\hline\hline
\Xqu \Xsp \Xcu & \Xqu \Xsp \Xcu & \Xdu \\
\Xut \Xsp \Xud & \Xut \Xsp \Xud & \Xut \\
\Xdu \Xsp \Xqu & \Xcu \Xsp \Xqu & \Xqu \\
\Xuc \Xsp \Xut & \Xud \Xsp \Xut & \Xud \\
\Xtu \Xsp \Xdu & \Xtu \Xsp \Xdu & \Xsu \\
\Xud \Xsp \Xuc & \Xuq \Xsp \Xuc & \Xut \\
\Xqu \Xsp \Xtu & \Xdu \Xsp \Xtu & \Xdu \\
\multicolumn{2}{c}{}\\
\multicolumn{2}{c}{}
\end{array} 
%
%
& & \begin{array}{c||c||c}
\Xud \Xsp \Xut & \Xud \Xsp \Xut & \Xus \\
\hline\hline
\Xqu \Xsp \Xcu & \Xqu \Xsp \Xcu & \Xdu \\
\Xut \Xsp \Xud & \Xut \Xsp \Xud & \Xut \\
\Xdu \Xsp \Xtu & \Xdu \Xsp \Xtu & \Xqu \\
\Xuc \Xsp \Xuq & \Xuc \Xsp \Xuq & \Xud \\
\Xtu \Xsp \Xdu & \Xtu \Xsp \Xdu & \Xtu \\
\Xud \Xsp \Xut & \Xud \Xsp \Xut & \Xuc \\
\hline\hline
\Xqu \Xsp \Xcu & \Xqu \Xsp \Xcu & \Xdu \\
\Xut \Xsp \Xud & \Xut \Xsp \Xud & \Xus \\
\Xdu \Xsp \Xtu & \Xdu \Xsp \Xtu & \Xqu \\
\Xuc \Xsp \Xuq & \Xuc \Xsp \Xuq & \Xut \\
\Xtu \Xsp \Xdu & \Xtu \Xsp \Xdu & \Xdu \\
\Xud \Xsp \Xut & \Xud \Xsp \Xut & \Xuc \\
\Xsu \Xsp \Xcu & \Xqu \Xsp \Xcu & \Xtu \\
\Xut \Xsp \Xud & \Xut \Xsp \Xud & \Xuq \\
\multicolumn{2}{c}{}\\
\end{array} 
%
%
& & \begin{array}{c||c||c}
\Xud \Xsp \Xut & \Xud \Xsp \Xut & \Xus \\
\hline\hline
\Xqu \Xsp \Xcu & \Xqu \Xsp \Xcu & \Xdu \\
\Xut \Xsp \Xud & \Xut \Xsp \Xud & \Xut \\
\Xdu \Xsp \Xtu & \Xdu \Xsp \Xtu & \Xqu \\
\Xuc \Xsp \Xuq & \Xuc \Xsp \Xuq & \Xud \\
\Xtu \Xsp \Xdu & \Xtu \Xsp \Xdu & \Xtu \\
\Xud \Xsp \Xut & \Xud \Xsp \Xut & \Xuc \\
\hline\hline
\Xqu \Xsp \Xcu & \Xqu \Xsp \Xcu & \Xdu \\
\Xut \Xsp \Xud & \Xut \Xsp \Xud & \Xus \\
\Xcu \Xsp \Xqu & \Xcu \Xsp \Xqu & \Xtu \\
\Xud \Xsp \Xut & \Xud \Xsp \Xut & \Xuq \\
\Xtu \Xsp \Xdu & \Xtu \Xsp \Xdu & \Xdu \\
\Xuq \Xsp \Xuc & \Xuq \Xsp \Xuc & \Xut \\
\Xdu \Xsp \Xtu & \Xdu \Xsp \Xtu & \Xcu \\
\Xuc \Xsp \Xuq & \Xuc \Xsp \Xuq & \Xud \\
\Xtu \Xsp \Xdu & \Xtu \Xsp \Xdu & \Xsu \\
\end{array} 
\\
\ \\
\ell\equiv 0\pmod 6 &  \Xsp \Xsp \Xsp & \ell\equiv 1\pmod 6 &  \Xsp \Xsp \Xsp & \ell\equiv 2\pmod 6\\
\ell \ge 12 &  \Xsp \Xsp \Xsp & \ell \ge 13  &  \Xsp \Xsp \Xsp & \ell \ge 8 
\end{array}$

\vskip 1cm

$\begin{array}{ccccc}
%
%
\begin{array}{c||c||c}
\Xud \Xsp \Xut & \Xud \Xsp \Xut & \Xus \\
\hline\hline
\Xqu \Xsp \Xcu & \Xqu \Xsp \Xcu & \Xdu \\
\Xut \Xsp \Xud & \Xut \Xsp \Xud & \Xut \\
\Xdu \Xsp \Xtu & \Xdu \Xsp \Xtu & \Xqu \\
\Xuc \Xsp \Xuq & \Xuc \Xsp \Xuq & \Xud \\
\Xtu \Xsp \Xdu & \Xtu \Xsp \Xdu & \Xtu \\
\Xud \Xsp \Xut & \Xud \Xsp \Xut & \Xuc \\
\hline\hline
\Xqu \Xsp \Xcu & \Xqu \Xsp \Xcu & \Xsu \\
\Xut \Xsp \Xud & \Xut \Xsp \Xud & \Xud \\
\Xcu \Xsp \Xqu & \Xcu \Xsp \Xqu & \Xqu \\
\Xud \Xsp \Xut & \Xud \Xsp \Xut & \Xut \\
\Xtu \Xsp \Xdu & \Xtu \Xsp \Xdu & \Xdu \\
\Xuq \Xsp \Xuc & \Xuq \Xsp \Xuc & \Xuc \\
\Xdu \Xsp \Xtu & \Xdu \Xsp \Xtu & \Xtu \\
\Xut \Xsp \Xud & \Xut \Xsp \Xud & \Xuq \\
\Xcu \Xsp \Xqu & \Xcu \Xsp \Xqu & \Xdu \\
\Xud \Xsp \Xut & \Xud \Xsp \Xut & \Xus \\
\multicolumn{2}{c}{}\\
\multicolumn{2}{c}{}
\end{array}
%
%
& & \begin{array}{c||c||c}
\Xud \Xsp \Xut & \Xud \Xsp \Xut & \Xus \\
\hline\hline
\Xqu \Xsp \Xcu & \Xqu \Xsp \Xcu & \Xdu \\
\Xut \Xsp \Xud & \Xut \Xsp \Xud & \Xuc \\
\Xdu \Xsp \Xtu & \Xdu \Xsp \Xtu & \Xtu \\
\Xuc \Xsp \Xuq & \Xuc \Xsp \Xuq & \Xud \\
\Xtu \Xsp \Xdu & \Xtu \Xsp \Xdu & \Xqu \\
\Xud \Xsp \Xut & \Xud \Xsp \Xut & \Xut \\
\hline\hline
\Xqu \Xsp \Xcu & \Xqu \Xsp \Xcu & \Xdu \\
\Xut \Xsp \Xud & \Xut \Xsp \Xud & \Xuc \\
\Xdu \Xsp \Xtu & \Xdu \Xsp \Xtu & \Xsu \\
\Xuc \Xsp \Xuq & \Xuc \Xsp \Xuq & \Xud \\
\Xtu \Xsp \Xdu & \Xtu \Xsp \Xdu & \Xtu \\
\Xud \Xsp \Xut & \Xud \Xsp \Xut & \Xuq \\
\Xqu \Xsp \Xcu & \Xqu \Xsp \Xcu & \Xdu \\
\Xut \Xsp \Xud & \Xut \Xsp \Xud & \Xuc \\
\Xdu \Xsp \Xtu & \Xdu \Xsp \Xtu & \Xtu \\
\Xuc \Xsp \Xuq & \Xuc \Xsp \Xuq & \Xud \\
\Xtu \Xsp \Xdu & \Xtu \Xsp \Xdu & \Xsu \\
\multicolumn{2}{c}{}
\end{array} 
%
%
& & \begin{array}{c||c||c}
\Xut \Xsp \Xud & \Xut \Xsp \Xud & \Xus \\
\hline\hline
\Xqu \Xsp \Xcu & \Xqu \Xsp \Xcu & \Xtu \\
\Xud \Xsp \Xut & \Xud \Xsp \Xut & \Xud \\
\Xtu \Xsp \Xdu & \Xtu \Xsp \Xdu & \Xqu \\
\Xuc \Xsp \Xuq & \Xuc \Xsp \Xuq & \Xut \\
\Xdu \Xsp \Xtu & \Xdu \Xsp \Xtu & \Xdu \\
\Xut \Xsp \Xud & \Xut \Xsp \Xud & \Xuc \\
\hline\hline
\Xqu \Xsp \Xcu & \Xqu \Xsp \Xcu & \Xtu \\
\Xud \Xsp \Xut & \Xud \Xsp \Xut & \Xud \\
\Xcu \Xsp \Xqu & \Xcu \Xsp \Xqu & \Xsu \\
\Xut \Xsp \Xud & \Xut \Xsp \Xud & \Xuq \\
\Xdu \Xsp \Xtu & \Xdu \Xsp \Xtu & \Xdu \\
\Xuq \Xsp \Xuc & \Xuq \Xsp \Xuc & \Xut \\
\Xtu \Xsp \Xdu & \Xtu \Xsp \Xdu & \Xcu \\
\Xuc \Xsp \Xuq & \Xuc \Xsp \Xuq & \Xud \\
\Xdu \Xsp \Xtu & \Xdu \Xsp \Xtu & \Xqu \\
\Xut \Xsp \Xud & \Xut \Xsp \Xud & \Xut \\
\Xqu \Xsp \Xcu & \Xqu \Xsp \Xcu & \Xdu \\
\Xud \Xsp \Xut & \Xud \Xsp \Xut & \Xus \\
\end{array} 
\\
\ \\
\ell\equiv 3\pmod 6 &  \Xsp \Xsp \Xsp & \ell\equiv 4\pmod 6 &  \Xsp \Xsp \Xsp & \ell\equiv 5\pmod 6\\
\ell \ge 15 &  \Xsp \Xsp \Xsp & \ell \ge 16 &  \Xsp \Xsp \Xsp & \ell \ge 17
\end{array}$
\caption{Colouring patterns for  $H^\ell(r)$, $\ell=8\ \mbox{or}\ \ell\ge 12$,  $r\ge 3$, $r$ odd.}
\label{fig:patternsHodd-mod6-bis}
\end{center}
\end{figure}


In order to produce a packing 6-colouring of $H^\ell(r)$, with $\ell\ge 8$, $r\ge 3$, and $r$ odd,
we use the colouring patterns depicted in Figures \ref{fig:patternsHodd-mod6} and~\ref{fig:patternsHodd-mod6-bis}.
In both these figures, the four columns surrounded by double lines must be repeated 
$\frac{r-3}{2}$ times (and thus do not appear if $r=3$)
or $\frac{r-5}{2}$ times when $\ell=9$ and $r\ge 5$ (and thus do not appear if $r=5$).
In Figure~\ref{fig:patternsHodd-mod6-bis}, the six rows surrounded by double lines must be repeated 
$\frac{\ell-6-(\ell \mod 6)}{6}$ times 
(and thus do not appear if $\ell=8$).

\end{enumerate}

This completes the proof.
\end{proof}

\begin{figure}
\begin{center}
%
%
$\begin{array}{c}
\Xut \Xsp \Xus\\
\Xdu \Xsp \Xdu\\
\Xuq \Xsp \Xuse\\
\Xtu \Xsp \Xcu\\
\\
r=2
\end{array}$
\hskip 1cm
%
%
%
$\begin{array}{c}
\Xut \Xsp \Xus \Xsp \Xud \Xsp \Xuc\\
\Xdu \Xsp \Xdu \Xsp \Xtu \Xsp \Xtu\\
\Xuq \Xsp \Xut \Xsp \Xuq \Xsp \Xud\\
\Xtu \Xsp \Xcu \Xsp \Xdu \Xsp \Xseu\\
\\
r=4
\end{array}$
\hskip 1cm
%
%
%
$\begin{array}{c}
\Xut \Xsp \Xus \Xsp \Xuq \Xsp \Xud \Xsp \Xuse \Xsp \Xuq \Xsp \Xuc \\
\Xdu \Xsp \Xdu \Xsp \Xdu \Xsp \Xcu \Xsp \Xtu \Xsp \Xdu \Xsp \Xtu \\
\Xuq \Xsp \Xut \Xsp \Xut \Xsp \Xut \Xsp \Xud \Xsp \Xut \Xsp \Xud \\
\Xtu \Xsp \Xcu \Xsp \Xseu \Xsp \Xdu \Xsp \Xqu \Xsp \Xcu \Xsp \Xsu \\
\\
r=7
\end{array}$

\vskip 1cm

%
%
$\begin{array}{c}
\Xut \Xsp \Xus \Xsp \Xut \Xsp \Xud \Xsp \Xuse \Xsp \Xut \Xsp \Xud \Xsp \Xuc \\
\Xdu \Xsp \Xdu \Xsp \Xdu \Xsp \Xcq \Xsp \Xtu \Xsp \Xdu \Xsp \Xqu \Xsp \Xtu \\
\Xuq \Xsp \Xut \Xsp \Xuq \Xsp \Xdu \Xsp \Xuq \Xsp \Xuc \Xsp \Xut \Xsp \Xud \\
\Xtu \Xsp \Xcu \Xsp \Xseu \Xsp \Xts \Xsp \Xdu \Xsp \Xtu \Xsp \Xdu \Xsp \Xsu \\
\\
r=8
\end{array}$
\hskip 1cm
%
%
%
$\begin{array}{c}
\Xut \Xsp \Xus \Xsp \Xut \Xsp \Xud \Xsp \Xuq \Xsp \Xut \Xsp \Xus \Xsp \Xuq \Xsp \Xud \Xsp \Xuse \Xsp \Xuc \\
\Xdu \Xsp \Xdu \Xsp \Xdu \Xsp \Xcu \Xsp \Xtu \Xsp \Xdu \Xsp \Xdu \Xsp \Xdu \Xsp \Xtu \Xsp \Xtu \Xsp \Xtu \\
\Xuq \Xsp \Xut \Xsp \Xuq \Xsp \Xut \Xsp \Xud \Xsp \Xuc \Xsp \Xut \Xsp \Xut \Xsp \Xuc \Xsp \Xud \Xsp \Xud \\
\Xtu \Xsp \Xcu \Xsp \Xseu \Xsp \Xdu \Xsp \Xsu \Xsp \Xtu \Xsp \Xqu \Xsp \Xseu \Xsp \Xdu \Xsp \Xqu \Xsp \Xsu \\
\\
r=11
\end{array}$
\caption{Packing 7-colourings of $H^2(r)$, $r\in\{2,4,7,8,11\}$.}
\label{fig:H2-247811}
\end{center}
\end{figure}

\begin{figure}
\begin{center}
$\begin{array}{ccccc}
\begin{array}{c}
\Xut \Xsp \Xud \Xsp \Xut \Xsp \Xud \Xsp \Xus \\
\Xdu \Xsp \Xcu \Xsp \Xqu \Xsp \Xcu \Xsp \Xtu \\
\Xuq \Xsp \Xut \Xsp \Xud \Xsp \Xuq \Xsp \Xuc \\
\Xtu \Xsp \Xdu \Xsp \Xsu \Xsp \Xtu \Xsp \Xdu \\
\end{array}
& & \begin{array}{||c||}
\Xut \Xsp \Xud \Xsp \Xuc \\
\Xdu \Xsp \Xqu \Xsp \Xtu \\
\Xus \Xsp \Xut \Xsp \Xuq \\
\Xtu \Xsp \Xcu \Xsp \Xdu \\
\end{array}
& & \begin{array}{||c||c}
\Xut \Xsp \Xud \Xsp \Xuc & \Xut \Xsp \Xuq \Xsp \Xud \Xsp \Xut \Xsp \Xuq \Xsp \Xud \Xsp \Xuc \\
\Xdu \Xsp \Xqu \Xsp \Xtu & \Xsu \Xsp \Xdu \Xsp \Xtu \Xsp \Xcu \Xsp \Xdu \Xsp \Xsu \Xsp \Xtu \\
\Xus \Xsp \Xut \Xsp \Xuq & \Xud \Xsp \Xut \Xsp \Xus \Xsp \Xud \Xsp \Xuc \Xsp \Xut \Xsp \Xud \\
\Xtu \Xsp \Xcu \Xsp \Xdu & \Xtu \Xsp \Xcu \Xsp \Xdu \Xsp \Xqu \Xsp \Xtu \Xsp \Xdu \Xsp \Xqu \\
\end{array}
\\ \\
r=5 & & r\equiv 0\pmod 3 & & r\equiv 1\pmod 3,\ r\ge 10
\end{array}$
\vskip 0.5cm
$\begin{array}{c}
\begin{array}{c}
\Xut \Xsp \Xud \Xsp \Xuq \Xsp \Xut \Xsp \Xus \Xsp \Xud \Xsp \Xut \Xsp \Xud \Xsp \Xut \Xsp \Xus \Xsp \Xud \Xsp \Xut \Xsp \Xud \Xsp \Xus \\
\Xdu \Xsp \Xcu \Xsp \Xtu \Xsp \Xdu \Xsp \Xdu \Xsp \Xtu \Xsp \Xqu \Xsp \Xcu \Xsp \Xqu \Xsp \Xdu \Xsp \Xtu \Xsp \Xqu \Xsp \Xcu \Xsp \Xtu \\
\Xuq \Xsp \Xut \Xsp \Xud \Xsp \Xuc \Xsp \Xut \Xsp \Xuc \Xsp \Xud \Xsp \Xut \Xsp \Xuc \Xsp \Xuq \Xsp \Xuc \Xsp \Xud \Xsp \Xdt \Xsp \Xuc \\
\Xtu \Xsp \Xdu \Xsp \Xsu \Xsp \Xtu \Xsp \Xqu \Xsp \Xdu \Xsp \Xtu \Xsp \Xsu \Xsp \Xdu \Xsp \Xtu \Xsp \Xdu \Xsp \Xts \Xsp \Xuq \Xsp \Xdu \\
\end{array}
\\ \\
r=14
\end{array}$
\vskip 0.5cm
$\begin{array}{c}
\begin{array}{c}
\Xut \Xsp \Xus \Xsp \Xud \Xsp \Xut \Xsp \Xuq \Xsp \Xud \Xsp \Xus \Xsp \Xut \Xsp \Xud \Xsp \Xuc \Xsp \Xut \Xsp \Xus \Xsp \Xud \Xsp \Xut \Xsp \Xuq \Xsp \Xud \Xsp \Xus \Xsp \Xut \Xsp \Xud \Xsp \Xuc \\
\Xdu \Xsp \Xdu \Xsp \Xtu \Xsp \Xcu \Xsp \Xdu \Xsp \Xtu \Xsp \Xtu \Xsp \Xdu \Xsp \Xqu \Xsp \Xtu \Xsp \Xdu \Xsp \Xdu \Xsp \Xtu \Xsp \Xcu \Xsp \Xdu \Xsp \Xtu \Xsp \Xtu \Xsp \Xdu \Xsp \Xqu \Xsp \Xtu \\
\Xuq \Xsp \Xut \Xsp \Xuq \Xsp \Xud \Xsp \Xut \Xsp \Xuc \Xsp \Xud \Xsp \Xuc \Xsp \Xut \Xsp \Xud \Xsp \Xuq \Xsp \Xut \Xsp \Xuq \Xsp \Xud \Xsp \Xut \Xsp \Xuc \Xsp \Xud \Xsp \Xuc \Xsp \Xut \Xsp \Xud \\
\Xtu \Xsp \Xcu \Xsp \Xdu \Xsp \Xtu \Xsp \Xsu \Xsp \Xdu \Xsp \Xqu \Xsp \Xtu \Xsp \Xdu \Xsp \Xsu \Xsp \Xtu \Xsp \Xcu \Xsp \Xdu \Xsp \Xtu \Xsp \Xsu \Xsp \Xdu \Xsp \Xqu \Xsp \Xtu \Xsp \Xdu \Xsp \Xsu \\
\end{array}
\\ \\
r=20
\end{array}$
\vskip 0.5cm
$\begin{array}{c}
\begin{array}{c||c||}
\Xut \Xsp \Xud \Xsp \Xuq \Xsp \Xut \Xsp \Xud \Xsp \Xqu \Xsp \Xst \Xsp \Xud \Xsp \Xuq \Xsp \Xut \Xsp \Xus \Xsp \Xud \Xsp \Xut \Xsp \Xud \Xsp \Xut \Xsp \Xus \Xsp \Xud \Xsp \Xut \Xsp \Xud \Xsp \Xuq & \Xut \Xsp \Xud \Xsp \Xuq \Xsp \Xut \Xsp \Xud \Xsp \Xuq \\
\Xdu \Xsp \Xcu \Xsp \Xtu \Xsp \Xdu \Xsp \Xcu \Xsp \Xtd \Xsp \Xdu \Xsp \Xcu \Xsp \Xtu \Xsp \Xdu \Xsp \Xdu \Xsp \Xtu \Xsp \Xqu \Xsp \Xcu \Xsp \Xqu \Xsp \Xdu \Xsp \Xtu \Xsp \Xqu \Xsp \Xcu \Xsp \Xtu & \Xdu \Xsp \Xcu \Xsp \Xtu \Xsp \Xsu \Xsp \Xcu \Xsp \Xtu \\
\Xus \Xsp \Xut \Xsp \Xuc \Xsp \Xus \Xsp \Xut \Xsp \Xuc \Xsp \Xuq \Xsp \Xut \Xsp \Xud \Xsp \Xuc \Xsp \Xut \Xsp \Xuc \Xsp \Xud \Xsp \Xut \Xsp \Xuc \Xsp \Xuq \Xsp \Xuc \Xsp \Xud \Xsp \Xdt \Xsp \Xuc & \Xus \Xsp \Xut \Xsp \Xuc \Xsp \Xud \Xsp \Xut \Xsp \Xuc \\
\Xtu \Xsp \Xqu \Xsp \Xdu \Xsp \Xtu \Xsp \Xqu \Xsp \Xdu \Xsp \Xtu \Xsp \Xdu \Xsp \Xsu \Xsp \Xtu \Xsp \Xqu \Xsp \Xdu \Xsp \Xtu \Xsp \Xsu \Xsp \Xdu \Xsp \Xtu \Xsp \Xdu \Xsp \Xts \Xsp \Xuq \Xsp \Xdu & \Xtu \Xsp \Xqu \Xsp \Xdu \Xsp \Xtu \Xsp \Xqu \Xsp \Xdu \\
\end{array}
\\ \\
r\equiv 2\pmod 6,\ r\ge 26
\end{array}$
\vskip 0.5cm
$\begin{array}{c}
\begin{array}{c}
\Xut \Xsp \Xud \Xsp \Xuq \Xsp \Xut \Xsp \Xus \Xsp \Xud \Xsp \Xut \Xsp \Xud \Xsp \Xuq \Xsp \Xut \Xsp \Xud \Xsp \Xqu \Xsp \Xst \Xsp \Xud \Xsp \Xut \Xsp \Xud \Xsp \Xus \\
\Xdu \Xsp \Xcu \Xsp \Xtu \Xsp \Xdu \Xsp \Xdu \Xsp \Xtu \Xsp \Xqu \Xsp \Xcu \Xsp \Xtu \Xsp \Xdu \Xsp \Xcu \Xsp \Xtd \Xsp \Xdu \Xsp \Xcu \Xsp \Xqu \Xsp \Xcu \Xsp \Xtu \\
\Xuq \Xsp \Xut \Xsp \Xud \Xsp \Xuc \Xsp \Xut \Xsp \Xuc \Xsp \Xud \Xsp \Xdt \Xsp \Xuc \Xsp \Xus \Xsp \Xut \Xsp \Xuc \Xsp \Xuq \Xsp \Xut \Xsp \Xud \Xsp \Xuq \Xsp \Xuc \\
\Xtu \Xsp \Xdu \Xsp \Xsu \Xsp \Xtu \Xsp \Xqu \Xsp \Xdu \Xsp \Xts \Xsp \Xuq \Xsp \Xdu \Xsp \Xtu \Xsp \Xqu \Xsp \Xdu \Xsp \Xtu \Xsp \Xdu \Xsp \Xsu \Xsp \Xtu \Xsp \Xdu \\
\end{array}
\\ \\
r=17
\end{array}$
\vskip 0.5cm
$\begin{array}{c}
\begin{array}{c||c||}
\Xut \Xsp \Xud \Xsp \Xqu \Xsp \Xst \Xsp \Xud \Xsp \Xuq \Xsp \Xut \Xsp \Xus \Xsp \Xud \Xsp \Xut \Xsp \Xud \Xsp \Xut \Xsp \Xus \Xsp \Xud \Xsp \Xut \Xsp \Xud \Xsp \Xuq & \Xut \Xsp \Xud \Xsp \Xuq \Xsp \Xut \Xsp \Xud \Xsp \Xuq \\
\Xdu \Xsp \Xcu \Xsp \Xtd \Xsp \Xdu \Xsp \Xcu \Xsp \Xtu \Xsp \Xdu \Xsp \Xdu \Xsp \Xtu \Xsp \Xqu \Xsp \Xcu \Xsp \Xqu \Xsp \Xdu \Xsp \Xtu \Xsp \Xqu \Xsp \Xcu \Xsp \Xtu & \Xdu \Xsp \Xcu \Xsp \Xtu \Xsp \Xsu \Xsp \Xcu \Xsp \Xtu \\
\Xus \Xsp \Xut \Xsp \Xuc \Xsp \Xuq \Xsp \Xut \Xsp \Xud \Xsp \Xuc \Xsp \Xut \Xsp \Xuc \Xsp \Xud \Xsp \Xut \Xsp \Xuc \Xsp \Xuq \Xsp \Xuc \Xsp \Xud \Xsp \Xdt \Xsp \Xuc & \Xus \Xsp \Xut \Xsp \Xuc \Xsp \Xud \Xsp \Xut \Xsp \Xuc \\
\Xtu \Xsp \Xqu \Xsp \Xdu \Xsp \Xtu \Xsp \Xdu \Xsp \Xsu \Xsp \Xtu \Xsp \Xqu \Xsp \Xdu \Xsp \Xtu \Xsp \Xsu \Xsp \Xdu \Xsp \Xtu \Xsp \Xdu \Xsp \Xts \Xsp \Xuq \Xsp \Xdu & \Xtu \Xsp \Xqu \Xsp \Xdu \Xsp \Xtu \Xsp \Xqu \Xsp \Xdu \\
\end{array}
\\ \\
r\equiv 5\pmod 6,\ r\ge 23
\end{array}$
\end{center}
\caption{Colouring patterns for $H^2(r)$,  $r\notin\{2,4,7,8,11\}$.}
\label{fig:H2-general}
\end{figure}

The last two theorems of this section deal with the cases not covered
by Theorem~\ref{th:Hell(r)}, that is, $\ell=2$ and $\ell=5$,
respectively.

\begin{theorem}
For every integer $r\ge 2$,
$$\pcn(H^2(r))=\left\{
   \begin{array}{ll}
      7 & \hbox{if $r\in\{2,4,7,8,11\}$,} \\
      6 & \hbox{otherwise.}
   \end{array}
\right.$$
\label{th:H2(r)}
\end{theorem}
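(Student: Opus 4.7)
The plan is to match explicit constructive upper bounds with combinatorial lower bounds, treating the five exceptional values $r\in\{2,4,7,8,11\}$ separately from the other values.

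For the upper bound, the approach is purely constructive. When $r\in\{2,4,7,8,11\}$ I would present one tailored packing $7$-colouring for each value, displayed as a rectangular pattern (Figure~\ref{fig:H2-247811}); in each case verification reduces to a finite check that no two vertices receiving the same colour $i$ lie at distance $\le i$ in $H^2(r)$. For the remaining values of $r$, I would produce a packing $6$-colouring by combining small base patterns with one or two horizontally repeatable blocks, organised according to the residue of $r$ modulo $3$ or modulo $6$, and handcrafting the small values $r\in\{5,14,17,20\}$ (see Figure~\ref{fig:H2-general}); the only delicate points are the seams between the repeatable block and its neighbours, which are dispatched by direct distance calculations.

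For the general lower bound $\pcn(H^2(r))\ge 6$, the goal is to rule out packing $5$-colourings. The difficulty is that the structural lemmas of Section~\ref{sec:geneHgraphs} (notably Lemmas~\ref{lem:1altV} and~\ref{lem:1altH} and Corollary~\ref{coro:bipaH}) are only valid for $\ell\ge 3$, and the subgraph $X$ of Figure~\ref{fig:graphX} no longer sits inside $H^2(r)$ in the same way. The plan is therefore to mimic the block strategy used for H-graphs in Lemma~\ref{lem:G_i(r)}: look at the $8$-vertex subgraph induced by a consecutive pair of columns of $H^2(r)$, show by a finite case analysis that every such block must use colour~$4$ or~$5$ at least once, and that two consecutive blocks cannot both use the same colour from $\{4,5\}$. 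This immediately yields $\pcn(H^2(r))\ge 6$ whenever $r$ is odd, as in Corollary~\ref{cor:H(r)-odd}. For even $r$, I would combine the block reasoning with a $\rho_i$-style counting argument, exploiting the fact that levels~$0$ and~$3$ of $H^2(r)$ are at distance~$3$, to derive a global contradiction.

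For the stronger lower bound $\pcn(H^2(r))\ge 7$ when $r\in\{2,4,7,8,11\}$, my approach is the $\rho_i$-counting method of Proposition~\ref{prop:rho-i}, exactly in the spirit of the exceptional-$n$ cases in the proof of Theorem~\ref{th:cl}. For each of the five values I would compute the packing parameters $\rho_1(H^2(r)),\dots,\rho_6(H^2(r))$ directly, then argue that these maxima cannot be attained simultaneously: using colour~$1$ close to its maximum forces a strict drop in the number of vertices still available for colour~$2$, then for colour~$3$, and so on. Iterating this dependency through a short case split gives that at most $8r-1$ vertices can be coloured using $\{1,\dots,6\}$, which suffices. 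The main obstacle in the entire theorem is the general lower bound in the even-$r$ case: bipartiteness of $H^2(r)$ does not by itself obstruct a packing $5$-colouring, and the machinery of Section~\ref{sec:geneHgraphs} is unavailable, so a dedicated $\ell=2$ argument combining local block analysis with global counting is where the bulk of the technical work lies.
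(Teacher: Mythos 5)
Your upper-bound plan coincides with the paper's: explicit packing $7$-colourings for $r\in\{2,4,7,8,11\}$ and periodic packing $6$-colouring patterns for the remaining $r$, exactly the content of Figures~\ref{fig:H2-247811} and~\ref{fig:H2-general}. The divergence, and the gap, is in the lower bounds. The paper does not prove them by hand at all: it verifies by computer brute-force that $H^2(r)$ has no packing $6$-colouring for the five exceptional values, and that the $24$-vertex subgraph induced by three successive ladders (six consecutive columns over the four levels) has no packing $5$-colouring, which by Proposition~\ref{prop:subgraph} gives $\pcn(H^2(r))\ge 6$ for all remaining $r$ at once, independently of parity. Your replacement arguments are only announced, and the announced steps have concrete problems. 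For the $\ge 6$ bound, the analogue of Lemma~\ref{lem:G_i(r)} does not transfer: the two-column blocks of $H^2(r)$ have four levels, so vertices of consecutive blocks can be at distance $6$ (e.g.\ $u^0_{2i}$ and $u^3_{2i+3}$), and the key first step of that lemma --- colour~$5$ cannot occur in two consecutive blocks because all inter-block distances are at most~$5$ --- simply fails; it would have to be replaced by a much longer case analysis that you do not supply. Moreover, even if both exclusion claims ($4$ and $5$) were established, the alternation argument only rules out odd $r$, while the theorem also needs $\pcn(H^2(r))\ge 6$ for the even values $r=6,10,12,\dots$; your proposal for that case ("block reasoning plus $\rho_i$-style counting") is not an argument.

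For the $\ge 7$ bound at $r\in\{2,4,7,8,11\}$, the proposed $\rho_i$-counting in the style of the exceptional cases of Theorem~\ref{th:cl} is very unlikely to succeed as described. The exceptional set is sporadic: $6$ colours do suffice for $r=5,6,9,10$ and for every $r\ge 12$, so any inequality of the shape "at most $8r-1$ vertices of $H^2(r)$ can receive a colour from $\{1,\dots,6\}$" must become false for the neighbouring values of $r$; a bound that sharp (for $r=11$ it would have to exclude exactly one vertex out of $88$) cannot come from combining the individual maxima $\rho_1,\dots,\rho_6$ with a short interaction analysis of the kind used for $CL_7$, $CL_8$, $CL_9$ and $CL_{14}$. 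This is presumably why the authors resorted to exhaustive computer search for precisely these finitely many graphs. So while your overall architecture (constructions plus lower bounds) is the right one, the lower-bound half of the theorem is not proved by your proposal: the block lemma you rely on is not available for $\ell=2$ in the stated form, the even-$r$ case of the $\ge 6$ bound is missing, and the $\ge 7$ bound for the five exceptional graphs rests on a counting scheme that cannot work in the generality claimed.
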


\begin{proof}
The fact that $H^2(r)$ does not admit a packing 6-colouring for every $r\in\{2,4,7,8,11\}$
has been checked by a computer program, using brute-force search.
Packing 7-colourings for each of these graphs are depicted in Figure~\ref{fig:H2-247811}.

Assume now $r\notin\{2,4,7,8,11\}$.
We checked by a computer program, again using brute-force search,
that the subgraph of such a generalised H-graph induced by three successive ladders,
that is, by the set of vertices
$\{u_i^j\ |\ 0\le i\le 5,\ 0\le j\le 3\}$, does not admit a packing 5-colouring.
Packing 6-colourings of such generalised H-graphs 
are depicted in Figure~\ref{fig:H2-general},
according to the value of $r$, $r$ modulo 3, or $r$ modulo 6
(periodic patterns, made of 6 or 12 columns, are surrounded by double lines).
\end{proof}

\begin{figure}
\begin{center}
%
%
$\begin{array}{c}
\Xut \Xsp \Xud \\
\Xqu \Xsp \Xsu \\
\Xuc \Xsp \Xut \\
\Xtu \Xsp \Xdu \\
\Xud \Xsp \Xuq \\
\Xsu \Xsp \Xcu \\
\Xut \Xsp \Xud \\
\\
r=2
\end{array}$
\hskip 1cm
%
%
%
$\begin{array}{c}
\Xut \Xsp \Xus \Xsp \Xuc \\
\Xdu \Xsp \Xdu \Xsp \Xdu \\
\Xuq \Xsp \Xut \Xsp \Xuq \\
\Xtu \Xsp \Xcu \Xsp \Xtu \\
\Xuc \Xsp \Xud \Xsp \Xud \\
\Xdu \Xsp \Xtu \Xsp \Xcu \\
\Xus \Xsp \Xuq \Xsp \Xut \\
\\
r=3
\end{array}$
\hskip 1cm
%
%
%
$\begin{array}{c}
\Xut \Xsp \Xud \Xsp \Xqu \Xsp \Xdc \Xsp \Xuq \\
\Xdu \Xsp \Xcu \Xsp \Xus \Xsp \Xtu \Xsp \Xdu \\
\Xus \Xsp \Xut \Xsp \Xdt \Xsp \Xud \Xsp \Xut \\
\Xtu \Xsp \Xdu \Xsp \Xcu \Xsp \Xqu \Xsp \Xsu \\
\Xuq \Xsp \Xus \Xsp \Xud \Xsp \Xut \Xsp \Xud \\
\Xdu \Xsp \Xtu \Xsp \Xqu \Xsp \Xcu \Xsp \Xqu \\
\Xuc \Xsp \Xud \Xsp \Xut \Xsp \Xud \Xsp \Xut \\
\\
r=5
\end{array}$
\caption{Packing 6-colourings of $H^5(r)$, $r\in\{2,3,5\}$.}
\label{fig:H5-2345}
\end{center}
\end{figure}

\begin{figure}
\begin{center}
$\begin{array}{ccc}
\begin{array}{||c||}
\Xut \Xsp \Xus \Xsp \Xut \Xsp \Xuc \\
\Xdu \Xsp \Xdu \Xsp \Xqu \Xsp \Xdu \\
\Xuq \Xsp \Xut \Xsp \Xud \Xsp \Xuq \\
\Xtu \Xsp \Xqu \Xsp \Xcu \Xsp \Xtu \\
\Xuc \Xsp \Xuc \Xsp \Xut \Xsp \Xud \\
\Xdu \Xsp \Xdu \Xsp \Xdu \Xsp \Xcu \\
\Xus \Xsp \Xut \Xsp \Xuq \Xsp \Xut \\
\end{array}
 & & \begin{array}{||c||c}
\Xut \Xsp \Xus \Xsp \Xut \Xsp \Xuc & \Xut \Xsp \Xus \Xsp \Xut \Xsp \Xuq \Xsp \Xut \Xsp \Xuc \\
\Xdu \Xsp \Xdu \Xsp \Xqu \Xsp \Xdu & \Xdu \Xsp \Xdu \Xsp \Xcu \Xsp \Xdu \Xsp \Xdu \Xsp \Xdu \\
\Xuq \Xsp \Xut \Xsp \Xud \Xsp \Xuq & \Xuq \Xsp \Xut \Xsp \Xud \Xsp \Xuc \Xsp \Xus \Xsp \Xuq \\
\Xtu \Xsp \Xqu \Xsp \Xcu \Xsp \Xtu & \Xtu \Xsp \Xcu \Xsp \Xtu \Xsp \Xtu \Xsp \Xcu \Xsp \Xtu \\
\Xuc \Xsp \Xuc \Xsp \Xut \Xsp \Xud & \Xuc \Xsp \Xuq \Xsp \Xuq \Xsp \Xqd \Xsp \Xut \Xsp \Xud \\
\Xdu \Xsp \Xdu \Xsp \Xdu \Xsp \Xcu & \Xdu \Xsp \Xdu \Xsp \Xdu \Xsp \Xus \Xsp \Xdu \Xsp \Xcu \\
\Xus \Xsp \Xut \Xsp \Xuq \Xsp \Xut & \Xus \Xsp \Xut \Xsp \Xuc \Xsp \Xdt \Xsp \Xuq \Xsp \Xut \\
\end{array}
\\
\\
r\equiv 0\pmod 4,\ r\ge 4 &  \ \ \ & r\equiv 2\pmod 4,\ r\ge 6
\end{array}$
\vskip 0.5cm
$\begin{array}{ccc}
 \begin{array}{||c||c}
\Xdu \Xsp \Xcu \Xsp \Xdu \Xsp \Xtu \Xsp \Xcu \Xsp \Xtu & \Xdu  \Xsp \Xcu \Xsp \Xdu \Xsp \Xsu \Xsp \Xdu \Xsp \Xcd \Xsp \Xus \\
\Xuq \Xsp \Xut \Xsp \Xus \Xsp \Xuq \Xsp \Xud \Xsp \Xus & \Xut  \Xsp \Xut \Xsp \Xuq \Xsp \Xut \Xsp \Xut \Xsp \Xut \Xsp \Xqt \\
\Xtu \Xsp \Xdu \Xsp \Xtu \Xsp \Xdu \Xsp \Xtu \Xsp \Xdu & \Xqu  \Xsp \Xqu \Xsp \Xtu \Xsp \Xdu \Xsp \Xqu \Xsp \Xqu \Xsp \Xdu \\
\Xud \Xsp \Xuq \Xsp \Xuc \Xsp \Xut \Xsp \Xuq \Xsp \Xuc & \Xud  \Xsp \Xud \Xsp \Xuc \Xsp \Xuq \Xsp \Xud \Xsp \Xud \Xsp \Xuc \\
\Xcu \Xsp \Xtu \Xsp \Xdu \Xsp \Xcu \Xsp \Xdu \Xsp \Xtu & \Xcu  \Xsp \Xtu \Xsp \Xdu \Xsp \Xcu \Xsp \Xcu \Xsp \Xtu \Xsp \Xtu \\
\Xus \Xsp \Xuc \Xsp \Xuq \Xsp \Xus \Xsp \Xuc \Xsp \Xuq & \Xus  \Xsp \Xuc \Xsp \Xut \Xsp \Xut \Xsp \Xut \Xsp \Xuc \Xsp \Xud \\
\Xtu \Xsp \Xdu \Xsp \Xtu \Xsp \Xdu \Xsp \Xtu \Xsp \Xdu & \Xtu  \Xsp \Xdu \Xsp \Xqu \Xsp \Xdu \Xsp \Xsu \Xsp \Xdu \Xsp \Xqu \\
\end{array}
 & & \begin{array}{||c||c}
\Xdu \Xsp \Xcu \Xsp \Xdu \Xsp \Xtu \Xsp \Xcu \Xsp \Xtu & \Xdu \Xsp \Xcd \Xsp \Xus \\
\Xuq \Xsp \Xut \Xsp \Xus \Xsp \Xuq \Xsp \Xud \Xsp \Xus & \Xut \Xsp \Xut \Xsp \Xqt \\
\Xtu \Xsp \Xdu \Xsp \Xtu \Xsp \Xdu \Xsp \Xtu \Xsp \Xdu & \Xqu \Xsp \Xqu \Xsp \Xdu \\
\Xud \Xsp \Xuq \Xsp \Xuc \Xsp \Xut \Xsp \Xuq \Xsp \Xuc & \Xud \Xsp \Xud \Xsp \Xuc \\
\Xcu \Xsp \Xtu \Xsp \Xdu \Xsp \Xcu \Xsp \Xdu \Xsp \Xtu & \Xcu \Xsp \Xtu \Xsp \Xtu \\
\Xus \Xsp \Xuc \Xsp \Xuq \Xsp \Xus \Xsp \Xuc \Xsp \Xuq & \Xus \Xsp \Xuc \Xsp \Xud \\
\Xtu \Xsp \Xdu \Xsp \Xtu \Xsp \Xdu \Xsp \Xtu \Xsp \Xdu & \Xtu \Xsp \Xdu \Xsp \Xqu \\
\end{array}
\\
\\
r\equiv 1\pmod 6,\ r\ge 7 &  \ \ \ & r\equiv 3\pmod 6,\ r\ge 9
\end{array}$
\vskip 0.5cm
$\begin{array}{c}
 \begin{array}{||c||c}
\Xdu \Xsp \Xcu \Xsp \Xdu \Xsp \Xtu \Xsp \Xcu \Xsp \Xtu & \Xdu \Xsp \Xcu \Xsp \Xdu \Xsp \Xtu \Xsp \Xcu  \Xsp \Xtu \Xsp \Xqu \Xsp \Xtu \Xsp \Xdu \Xsp \Xqu \Xsp \Xtu \\
\Xuq \Xsp \Xut \Xsp \Xus \Xsp \Xuq \Xsp \Xud \Xsp \Xus & \Xuq \Xsp \Xut \Xsp \Xus \Xsp \Xuq \Xsp \Xud  \Xsp \Xud \Xsp \Xud \Xsp \Xus \Xsp \Xuc \Xsp \Xud \Xsp \Xus \\
\Xtu \Xsp \Xdu \Xsp \Xtu \Xsp \Xdu \Xsp \Xtu \Xsp \Xdu & \Xtu \Xsp \Xdu \Xsp \Xtu \Xsp \Xdu \Xsp \Xtu  \Xsp \Xsu \Xsp \Xtu \Xsp \Xcu \Xsp \Xtu \Xsp \Xtu \Xsp \Xdu \\
\Xud \Xsp \Xuq \Xsp \Xuc \Xsp \Xut \Xsp \Xuq \Xsp \Xuc & \Xud \Xsp \Xuq \Xsp \Xuc \Xsp \Xut \Xsp \Xuq  \Xsp \Xuc \Xsp \Xuc \Xsp \Xud \Xsp \Xud \Xsp \Xuc \Xsp \Xuc \\
\Xcu \Xsp \Xtu \Xsp \Xdu \Xsp \Xcu \Xsp \Xdu \Xsp \Xtu & \Xcu \Xsp \Xtu \Xsp \Xdu \Xsp \Xcu \Xsp \Xdu  \Xsp \Xtu \Xsp \Xdu \Xsp \Xqu \Xsp \Xqu \Xsp \Xsu \Xsp \Xtu \\
\Xus \Xsp \Xuc \Xsp \Xuq \Xsp \Xus \Xsp \Xuc \Xsp \Xuq & \Xus \Xsp \Xuc \Xsp \Xuq \Xsp \Xus \Xsp \Xuc  \Xsp \Xuq \Xsp \Xus \Xsp \Xut \Xsp \Xut \Xsp \Xut \Xsp \Xud \\
\Xtu \Xsp \Xdu \Xsp \Xtu \Xsp \Xdu \Xsp \Xtu \Xsp \Xdu & \Xtu \Xsp \Xdu \Xsp \Xtu \Xsp \Xdu \Xsp \Xtu  \Xsp \Xdu \Xsp \Xtu \Xsp \Xdu \Xsp \Xcu \Xsp \Xdu \Xsp \Xqu \\
\end{array}
\\
\\
 r\equiv 5\pmod 6,\ r\ge 11
\end{array}$
\end{center}
\caption{Colouring patterns for $H^5(r)$,  $r=4$ or $r\ge 6$.}
\label{fig:H5-general}
\end{figure}

\begin{theorem}
For every integer $r\ge 2$, $\pcn(H^5(r))=6$.
\label{th:H5(r)}
\end{theorem}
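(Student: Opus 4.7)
The plan is to split the statement into the lower bound $\pcn(H^5(r))\ge 6$ and the matching upper bound $\pcn(H^5(r))\le 6$, and to handle the parity of $r$ separately for the lower bound.

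For the lower bound, the odd case with $r\ge 3$ is immediate from Corollary~\ref{cor:r-odd-6}. The real work is to rule out packing $5$-colourings for every even $r\ge 2$. Assuming such a colouring $\pi$ existed, I would first invoke Corollary~\ref{coro:bipaH} to force colour~$1$ on every edge, so that one part of the bipartition of $H^5(r)$ is entirely coloured~$1$ and the other uses only colours in $\{2,3,4,5\}$. Next, Lemma~\ref{lem:level0-no45} restricts both outer $2r$-cycles (levels $0$ and $6$) to colours $\{1,2,3\}$, while Lemma~\ref{lem:1altV} pins down, at each rung $u^i_{2j}u^i_{2j+1}$ of the intermediate levels $1\le i\le 5$, a very specific local pattern. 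I would then propagate these constraints through the five interior layers and the cyclic closure, aiming for a contradiction. Because the obstruction is local, it should suffice to exhibit a constant-size subgraph embedding in $H^5(r)$ for every even $r\ge 2$ that does not admit any packing $5$-colouring compatible with the above conditions, in the spirit of the subgraph argument used in the proof of Theorem~\ref{th:H2(r)}.

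For the upper bound, I would simply display explicit packing $6$-colourings. The small values $r\in\{2,3,5\}$ receive ad hoc colourings in Figure~\ref{fig:H5-2345}. The remaining values $r=4$ or $r\ge 6$ are covered in Figure~\ref{fig:H5-general} by block-periodic patterns, arranged according to $r\bmod 4$ when $r$ is even and $r\bmod 6$ when $r$ is odd and $r\ge 7$: each display consists of a fixed $7$-row block to be repeated horizontally, together with a short closing block that fixes the length modulo. Correctness reduces to checking the distance condition for each colour locally within each pattern and across the wrap-around, which is routine.

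The main obstacle, as I see it, is the lower bound when $r$ is even. For odd $r\ge 3$, Corollary~\ref{cor:r-odd-6} forbids a packing $5$-colouring via a clean parity argument on the outer cycle; for even $r$ the same cycle does admit a packing $3$-colouring, so no immediate contradiction arises from the boundary. Extracting a genuine obstruction specific to the height $\ell+1=6$ therefore requires a careful combinatorial case analysis (or a brute-force check on a small subgraph) that propagates the very tight constraints imposed by Lemmas~\ref{lem:1altV}, \ref{lem:1altH} and~\ref{lem:level0-no45} throughout the five interior layers.
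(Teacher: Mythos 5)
Your treatment of the odd case of the lower bound (via Corollary~\ref{cor:r-odd-6}) and your upper bound (the explicit colourings of Figures~\ref{fig:H5-2345} and~\ref{fig:H5-general}) coincide with the paper. The genuine gap is the lower bound for even $r$: there your text is a plan, not a proof. You invoke Corollary~\ref{coro:bipaH}, Lemma~\ref{lem:level0-no45} and Lemma~\ref{lem:1altV} and then assert that propagating these constraints through the five interior levels ``should'' yield a contradiction, or that ``it should suffice'' to exhibit a constant-size subgraph with no compatible packing $5$-colouring; but you never produce the contradiction, the case analysis, or the subgraph. This is exactly where the content of the theorem lies once $r$ is even: the outer $2r$-cycles then do admit packing $3$-colourings, so no parity obstruction is available, and the paper itself does not extract a contradiction from those lemmas either -- it settles the point by a computer brute-force verification that $H^5(2)$ and the subgraph induced by three successive ladders, i.e.\ by $\{u_i^j \mid 0\le i\le 5,\ 0\le j\le 6\}$, admit no packing $5$-colouring; since that $42$-vertex graph is a subgraph of $H^5(r)$ for every $r\ge 3$, Proposition~\ref{prop:subgraph} gives $\pcn(H^5(r))\ge 6$. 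Until you either carry out the announced case analysis or perform such a finite check, the inequality $\pcn(H^5(r))\ge 6$ for even $r$ is simply unestablished.

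Two further concrete problems with the route you sketch. First, every structural tool you cite (Lemma~\ref{lem:1altV}, Lemma~\ref{lem:1altH}, Corollary~\ref{coro:bipaH}, Lemma~\ref{lem:level0-no45}) is stated only for $r\ge 3$, so your argument says nothing about $H^5(2)$, which is even and must be treated separately (the paper checks it directly by computer). Second, and more minor, once Corollary~\ref{coro:bipaH} has placed colour~$1$ on every edge, Lemma~\ref{lem:1altV} carries no additional information, since it only concerns rungs whose two endpoints both avoid colour~$1$; so the ``very specific local pattern'' you expect it to pin down is vacuous at that stage, and the contradiction would have to be manufactured from scratch out of the interaction of the interior levels -- which is precisely the step the paper delegated to brute force.
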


\begin{proof}
Again, we checked by a computer program, using brute-force search,
that both $H^5(2)$ and the subgraph of $H^5(r)$, $r\ge 5$, induced by three successive ladders,
that is, by the set of vertices
$\{u_i^j\ |\ 0\le i\le 5,\ 0\le j\le 6\}$, do not admit a packing 5-colouring.
Packing 6-colourings of $H^5(r)$, $r\in\{2,3,5\}$, are depicted in Figure~\ref{fig:H5-2345},
while packing 6-colourings of $H^5(r)$, $r=4$ or $r\ge 6$, are depicted in Figure~\ref{fig:H5-general}
according to the value of $r$ modulo 4, or $r$ modulo 6 
(periodic patterns, made of eight or twelve columns, are surrounded by double lines
and are repeated at least once when $r\equiv 0\pmod 4$ or $r\equiv 3\pmod 6$).
\end{proof}

\section{Discussion}

In this paper, we have studied the packing chromatic number of some classes of cubic graphs,
namely circular ladders, H-graphs and generalised H-graphs.
We have determined the exact value of this parameter for every such graph, except for
the case of H-graphs $H(r)$ with $r\ge 3$, $r$~odd (see Theorem~\ref{th:H(r)}),
for which we proved $6\le\pcn(H(r))\le 7$.
Using a computer program, we have checked that $\pcn(H(r))=7$ for every odd $r$ up to $r=13$.
We thus propose the following question.

\begin{question}
Is it true that $\pcn(H(r))=7$ for every H-graph $H(r)$ with $r\ge 3$, $r$~odd?
\end{question}


In~\cite{LBS16,LBS17}, we have extended the notion of packing colouring to the case
of digraphs.
If $D$ is a digraph, the (weak) directed distance between two vertices $u$ and~$v$ in~$D$ is defined
as the length of a shortest directed path between $u$ and~$v$, in either direction.
Using this notion of distance in digraphs, the packing colouring readily
extends to digraphs.
Recall that an orientation of an undirected graph~$G$ is any antisymmetric digraph
obtained from $G$ by giving to each edge of ~$G$ one of its two possible orientations.
It then directly follows from the definition that $\pcn(D)\le\pcn(G)$ for
any orientation $D$ of~$G$.
A natural question for oriented graphs, related to this work, is then the following.

\begin{question}
Is it true that the packing chromatic number of any oriented graph with maximum degree~$3$
is bounded by some constant?
\end{question}

\vskip 1cm

\noindent
{\bf Acknowledgment.} This work has been done while the first author
was visiting LaBRI, whose hospitality is gratefully acknowledged.



\appendix

\section{Proof of Lemma~\ref{lem:graphX}}
\label{ap:graphX}

\begin{figure}
\begin{center}

\begin{tikzpicture}[domain=-0.5:4.5,x=0.9cm,y=0.9cm]
\ARETEH{0,0}{3}; \ARETEH{1,1}{3}; 
\ARETEV{1,0};
\draw[double distance = 2pt] (2,0) -- (2,1);   
\ARETEV{3,0};
\SOMMET{0,0}; \SOMMET{1,0}; \SOMMET{2,0}; \SOMMET{3,0}; 
\SOMMET{1,1}; \SOMMET{2,1}; \SOMMET{3,1}; \SOMMET{4,1}; 
\ETIQUETTEA{0,1}{};
\ETIQUETTEA{1,1}{2};
\ETIQUETTEA{2,1}{3};
\ETIQUETTEA{3,1}{1};
\ETIQUETTEA{4,1}{5};
\ETIQUETTEB{0,0}{?};
\ETIQUETTEB{1,0}{1};
\ETIQUETTEB{2,0}{4};
\ETIQUETTEB{3,0}{2};
\ETIQUETTEB{4,0}{};
\node[below] at (2,-0.7) {(a)};
\end{tikzpicture}
\hskip 1cm
\begin{tikzpicture}[domain=-0.5:4.5,x=0.9cm,y=0.9cm]
\ARETEH{0,0}{3}; \ARETEH{0,1}{4}; 
\ARETEV{1,0};
\draw[double distance = 2pt] (2,0) -- (2,1);   
\ARETEV{3,0};
\SOMMET{0,0}; \SOMMET{1,0}; \SOMMET{2,0}; \SOMMET{3,0}; 
\SOMMET{0,1}; \SOMMET{1,1}; \SOMMET{2,1}; \SOMMET{3,1}; \SOMMET{4,1}; 
\ETIQUETTEA{0,1}{?};
\ETIQUETTEA{1,1}{1,2};
\ETIQUETTEA{2,1}{3};
\ETIQUETTEA{3,1}{1};
\ETIQUETTEA{4,1}{2};
\ETIQUETTEB{0,0}{?};
\ETIQUETTEB{1,0}{1,2};
\ETIQUETTEB{2,0}{4};
\ETIQUETTEB{3,0}{5};
\ETIQUETTEB{4,0}{};
\node[below] at (2,-0.7) {(b)};
\end{tikzpicture}
\hskip 1cm
\begin{tikzpicture}[domain=-0.5:4.5,x=0.9cm,y=0.9cm]
\ARETEH{1,0}{2}; 
\ARETEH{0,1}{3}; 
\ARETEV{1,0};
\draw[double distance = 2pt] (2,0) -- (2,1);   
\ARETEV{3,0};
\SOMMET{1,0}; \SOMMET{2,0}; \SOMMET{3,0}; 
\SOMMET{0,1}; \SOMMET{1,1}; \SOMMET{2,1}; \SOMMET{3,1}; 
\ETIQUETTEA{0,1}{?};
\ETIQUETTEA{1,1}{1};
\ETIQUETTEA{2,1}{3};
\ETIQUETTEA{3,1}{2};
%
\ETIQUETTEB{0,0}{};
\ETIQUETTEB{1,0}{2};
\ETIQUETTEB{2,0}{4};
\ETIQUETTEB{3,0}{5};
\node[below] at (1.5,-0.7) {(c)};
\end{tikzpicture}

\begin{tikzpicture}[domain=-0.5:4.5,x=0.9cm,y=0.9cm]
\ARETEH{2,0}{2 }; 
\ARETEH{1,1}{2}; 
\draw[double distance = 2pt] (2,0) -- (2,1);   
\ARETEV{3,0};
\SOMMET{2,0}; \SOMMET{3,0}; \SOMMET{4,0}; 
\SOMMET{1,1}; \SOMMET{2,1}; \SOMMET{3,1}; 
\ETIQUETTEA{1,1}{5};
\ETIQUETTEA{2,1}{3};
\ETIQUETTEA{3,1}{2};
\ETIQUETTEA{4,1}{};
%
\ETIQUETTEB{1,0}{};
\ETIQUETTEB{2,0}{4};
\ETIQUETTEB{3,0}{1};
\ETIQUETTEB{4,0}{?};
\node[below] at (2.5,-0.7) {(d)};
\end{tikzpicture}
\hskip 1cm
\begin{tikzpicture}[domain=-0.5:4.5,x=0.9cm,y=0.9cm]
\ARETEH{1,0}{3}; 
\ARETEH{0,1}{3}; 
\ARETEV{1,0};
\draw[double distance = 2pt] (2,0) -- (2,1);   
\ARETEV{3,0};
\SOMMET{1,0}; 
\SOMMET{2,0}; \SOMMET{3,0}; \SOMMET{4,0}; 
\SOMMET{0,1}; 
\SOMMET{1,1}; \SOMMET{2,1}; \SOMMET{3,1}; 
\ETIQUETTEA{0,1}{2,5};
\ETIQUETTEA{1,1}{1};
\ETIQUETTEA{2,1}{3};
\ETIQUETTEA{3,1}{2};
\ETIQUETTEA{4,1}{};
\ETIQUETTEB{0,0}{};
\ETIQUETTEB{1,0}{2,5};
\ETIQUETTEB{2,0}{4};
\ETIQUETTEB{3,0}{1};
\ETIQUETTEB{4,0}{?};
\node[below] at (2,-0.7) {(e)};
\end{tikzpicture}
\hskip 1cm
\begin{tikzpicture}[domain=-0.5:4.5,x=0.9cm,y=0.9cm]
\ARETEH{0,0}{2}; \ARETEH{0,1}{3}; 
\ARETEV{1,0};
\draw[double distance = 2pt] (2,0) -- (2,1);   
\SOMMET{0,0}; \SOMMET{1,0}; \SOMMET{2,0}; 
\SOMMET{0,1}; \SOMMET{1,1}; \SOMMET{2,1}; \SOMMET{3,1}; 
\ETIQUETTEA{0,1}{?};
\ETIQUETTEA{1,1}{1,2};
\ETIQUETTEA{2,1}{3};
\ETIQUETTEA{3,1}{5};
%
\ETIQUETTEB{0,0}{?};
\ETIQUETTEB{1,0}{1,2};
\ETIQUETTEB{2,0}{4};
\ETIQUETTEB{3,0}{};
\node[below] at (1.5,-0.7) {(f)};
\end{tikzpicture}

\begin{tikzpicture}[domain=-0.5:5.5,x=0.9cm,y=0.9cm]
\ARETEH{1,0}{3}; \ARETEH{0,1}{5}; 
\ARETEV{2,0};
\draw[double distance = 2pt] (3,0) -- (3,1);   
\ARETEV{4,0};
\SOMMET{1,0}; \SOMMET{2,0}; \SOMMET{3,0}; 
\SOMMET{0,1}; \SOMMET{1,1}; \SOMMET{2,1}; \SOMMET{3,1}; \SOMMET{4,1};  \SOMMET{5,1};
\ETIQUETTEA{0,1}{?};
\ETIQUETTEA{1,1}{1};
\ETIQUETTEA{2,1}{2};
\ETIQUETTEA{3,1}{3};
\ETIQUETTEA{4,1}{1};
\ETIQUETTEA{5,1}{4};
\ETIQUETTEB{0,0}{};
\ETIQUETTEB{1,0}{4};
\ETIQUETTEB{2,0}{1};
\ETIQUETTEB{3,0}{5};
\ETIQUETTEB{4,0}{2};
\node[below] at (2.5,-0.7) {(g)};
\end{tikzpicture}
\hskip 1cm
\begin{tikzpicture}[domain=-0.5:4.5,x=0.9cm,y=0.9cm]
\ARETEH{2,0}{2}; 
\ARETEH{1,1}{4}; 
\draw[double distance = 2pt] (2,0) -- (2,1);   
\ARETEV{3,0};
\SOMMET{2,0}; \SOMMET{3,0}; \SOMMET{4,0}; 
\SOMMET{1,1}; \SOMMET{2,1}; \SOMMET{3,1}; \SOMMET{4,1}; \SOMMET{5,1}; 
\ETIQUETTEA{1,1}{1};
\ETIQUETTEA{2,1}{3};
\ETIQUETTEA{3,1}{2};
\ETIQUETTEA{4,1}{1};
\ETIQUETTEA{5,1}{?};
%
\ETIQUETTEB{1,0}{};
\ETIQUETTEB{2,0}{5};
\ETIQUETTEB{3,0}{1};
\ETIQUETTEB{4,0}{4};
\node[below] at (2.5,-0.7) {(h)};
\end{tikzpicture}
\hskip 1cm
\begin{tikzpicture}[domain=-0.5:4.5,x=0.9cm,y=0.9cm]
\ARETEH{1,0}{3}; \ARETEH{1,1}{2}; 
\draw[double distance = 2pt] (1,0) -- (1,1);   
\ARETEV{2,0};
\ARETEV{3,0};
\SOMMET{1,0}; \SOMMET{2,0}; \SOMMET{3,0}; \SOMMET{4,0}; 
\SOMMET{1,1}; \SOMMET{2,1}; \SOMMET{3,1}; 
\ETIQUETTEA{1,1}{4};
\ETIQUETTEA{2,1}{1};
\ETIQUETTEA{3,1}{2,3};
\ETIQUETTEA{4,1}{};
%
\ETIQUETTEB{1,0}{5};
\ETIQUETTEB{2,0}{2,3};
\ETIQUETTEB{3,0}{1};
\ETIQUETTEB{4,0}{?};
\node[below] at (2.5,-0.7) {(i)};
\end{tikzpicture}

\begin{tikzpicture}[domain=-0.5:5.5,x=0.9cm,y=0.9cm]
\ARETEH{1,0}{3}; \ARETEH{0,1}{5}; 
\ARETEV{1,0};
\ARETEV{2,0};
\draw[double distance = 2pt] (3,0) -- (3,1);   
\ARETEV{4,0};
\SOMMET{1,0}; \SOMMET{2,0}; \SOMMET{3,0}; 
\SOMMET{0,1}; \SOMMET{1,1}; \SOMMET{2,1}; \SOMMET{3,1}; \SOMMET{4,1};  \SOMMET{5,1};
\ETIQUETTEA{0,1}{?};
\ETIQUETTEA{1,1}{1};
\ETIQUETTEA{2,1}{2};
\ETIQUETTEA{3,1}{4};
\ETIQUETTEA{4,1}{1};
\ETIQUETTEA{5,1}{3};
\ETIQUETTEB{0,0}{};
\ETIQUETTEB{1,0}{3};
\ETIQUETTEB{2,0}{1};
\ETIQUETTEB{3,0}{5};
\ETIQUETTEB{4,0}{2};
\node[below] at (2.5,-0.7) {(j)};
\end{tikzpicture}
\hskip 1cm
\begin{tikzpicture}[domain=-0.5:4.5,x=0.9cm,y=0.9cm]
\ARETEH{0,0}{3}; \ARETEH{0,1}{4}; 
\ARETEV{0,0};
\ARETEV{1,0};
\draw[double distance = 2pt] (2,0) -- (2,1);   
\ARETEV{3,0};
\SOMMET{0,0}; \SOMMET{1,0}; \SOMMET{2,0}; \SOMMET{3,0}; 
\SOMMET{0,1}; \SOMMET{1,1}; \SOMMET{2,1}; \SOMMET{3,1}; \SOMMET{4,1}; 
\ETIQUETTEA{0,1}{?};
\ETIQUETTEA{1,1}{1,2};
\ETIQUETTEA{2,1}{4};
\ETIQUETTEA{3,1}{1};
\ETIQUETTEA{4,1}{3};
\ETIQUETTEB{0,0}{1,2};
\ETIQUETTEB{1,0}{3};
\ETIQUETTEB{2,0}{5};
\ETIQUETTEB{3,0}{2};
\ETIQUETTEB{4,0}{};
\node[below] at (2,-0.7) {(k)};
\end{tikzpicture}
\hskip 1cm
\begin{tikzpicture}[domain=-0.5:4.5,x=0.9cm,y=0.9cm]
\ARETEH{1,1}{2}; \ARETEH{0,0}{3}; 
\ARETEV{1,0};
\draw[double distance = 2pt] (2,0) -- (2,1);   
\ARETEV{3,0};
\SOMMET{0,0}; 
\SOMMET{1,0}; \SOMMET{2,0}; \SOMMET{3,0}; 
\SOMMET{1,1}; \SOMMET{2,1}; \SOMMET{3,1}; 
\ETIQUETTEA{1,1}{2};
\ETIQUETTEA{2,1}{4};
\ETIQUETTEA{3,1}{1};
%
\ETIQUETTEB{0,0}{?};
\ETIQUETTEB{1,0}{1};
\ETIQUETTEB{2,0}{5};
\ETIQUETTEB{3,0}{3};
\node[below] at (1.5,-0.7) {(l)};
\end{tikzpicture}

\begin{tikzpicture}[domain=-0.5:4.5,x=0.9cm,y=0.9cm]
\ARETEH{-1,0}{4}; \ARETEH{0,1}{3}; 
\ARETEV{0,0};
\ARETEV{1,0};
\draw[double distance = 2pt] (2,0) -- (2,1);   
\ARETEV{3,0};
\SOMMET{-1,0}; \SOMMET{0,0}; \SOMMET{1,0}; \SOMMET{2,0}; \SOMMET{3,0}; 
\SOMMET{0,1}; \SOMMET{1,1}; \SOMMET{2,1}; \SOMMET{3,1}; 
\ETIQUETTEA{0,1}{3};
\ETIQUETTEA{1,1}{1};
\ETIQUETTEA{2,1}{4};
\ETIQUETTEA{3,1}{1};
%
\ETIQUETTEB{-1,0}{?};
\ETIQUETTEB{0,0}{1};
\ETIQUETTEB{1,0}{2};
\ETIQUETTEB{2,0}{5};
\ETIQUETTEB{3,0}{3};
\node[below] at (1.5,-0.7) {(m)};
\end{tikzpicture}
\hskip 1cm
\begin{tikzpicture}[domain=-0.5:4.5,x=0.9cm,y=0.9cm]
\ARETEH{1,0}{2}; \ARETEH{1,1}{3}; 
\draw[double distance = 2pt] (1,0) -- (1,1);   
\ARETEV{2,0};
\ARETEV{3,0};
\SOMMET{1,0}; \SOMMET{2,0}; \SOMMET{3,0}; 
\SOMMET{1,1}; \SOMMET{2,1}; \SOMMET{3,1}; \SOMMET{4,1}; 
\ETIQUETTEA{1,1}{4};
\ETIQUETTEA{2,1}{2};
\ETIQUETTEA{3,1}{1|1};
\ETIQUETTEA{4,1}{?|?};
%
\ETIQUETTEB{1,0}{5};
\ETIQUETTEB{2,0}{1|3};
\ETIQUETTEB{3,0}{3|};
\ETIQUETTEB{4,0}{};
\node[below] at (2.5,-0.7) {(n)};
\end{tikzpicture}
\hskip 1cm
\begin{tikzpicture}[domain=-0.5:4.5,x=0.9cm,y=0.9cm]
\ARETEH{0,0}{3}; \ARETEH{1,1}{3}; 
\ARETEV{1,0};
\ARETEV{2,0};
\draw[double distance = 2pt] (3,0) -- (3,1);   
\SOMMET{0,0}; \SOMMET{1,0}; \SOMMET{2,0}; \SOMMET{3,0}; 
\SOMMET{1,1}; \SOMMET{2,1}; \SOMMET{3,1}; \SOMMET{4,1}; 
\ETIQUETTEA{0,1}{};
\ETIQUETTEA{1,1}{2,3};
\ETIQUETTEA{2,1}{1};
\ETIQUETTEA{3,1}{4};
\ETIQUETTEA{4,1}{2};
\ETIQUETTEB{0,0}{?};
\ETIQUETTEB{1,0}{1};
\ETIQUETTEB{2,0}{2,3};
\ETIQUETTEB{3,0}{5};
\ETIQUETTEB{4,0}{};
\node[below] at (2,-0.7) {(o)};
\end{tikzpicture}

\begin{tikzpicture}[domain=-0.5:4.5,x=0.9cm,y=0.9cm]
\ARETEH{1,0}{2}; \ARETEH{0,1}{4}; 
\ARETEV{1,0};
\ARETEV{2,0};
\draw[double distance = 2pt] (3,0) -- (3,1);   
\SOMMET{1,0}; \SOMMET{2,0}; \SOMMET{3,0}; 
\SOMMET{0,1}; 
\SOMMET{1,1}; \SOMMET{2,1}; \SOMMET{3,1}; \SOMMET{4,1}; 
\ETIQUETTEA{0,1}{?|};
\ETIQUETTEA{1,1}{1|1,2};
\ETIQUETTEA{2,1}{3};
\ETIQUETTEA{3,1}{4};
\ETIQUETTEA{4,1}{2};
%
\ETIQUETTEB{1,0}{|?};
\ETIQUETTEB{2,0}{1|1,2};
\ETIQUETTEB{3,0}{5};
\ETIQUETTEB{4,0}{};
\node[below] at (2.5,-0.7) {(p)};
\end{tikzpicture}
\hskip 1cm
\begin{tikzpicture}[domain=-0.5:4.5,x=0.9cm,y=0.9cm]
\ARETEH{1,0}{2}; \ARETEH{1,1}{3}; 
\draw[double distance = 2pt] (1,0) -- (1,1);   
\ARETEV{2,0};
\ARETEV{3,0};
\SOMMET{1,0}; \SOMMET{2,0}; \SOMMET{3,0}; 
\SOMMET{1,1}; \SOMMET{2,1}; \SOMMET{3,1}; \SOMMET{4,1}; 
\ETIQUETTEA{1,1}{4};
\ETIQUETTEA{2,1}{3};
\ETIQUETTEA{3,1}{1|?};
\ETIQUETTEA{4,1}{?|};
%
\ETIQUETTEB{1,0}{5};
\ETIQUETTEB{2,0}{1|2};
\ETIQUETTEB{3,0}{2|1};
\ETIQUETTEB{4,0}{};
\node[below] at (2.5,-0.7) {(q)};
\end{tikzpicture}
\hskip 1cm
\begin{tikzpicture}[domain=-0.5:4.5,x=0.9cm,y=0.9cm]
\ARETEH{1,0}{2}; \ARETEH{0,1}{4}; 
\ARETEV{1,0};
\ARETEV{2,0};
\draw[double distance = 2pt] (3,0) -- (3,1);   
\SOMMET{1,0}; \SOMMET{2,0}; \SOMMET{3,0}; 
\SOMMET{0,1}; 
\SOMMET{1,1}; \SOMMET{2,1}; \SOMMET{3,1}; \SOMMET{4,1}; 
\ETIQUETTEA{0,1}{|?};
\ETIQUETTEA{1,1}{?|1};
\ETIQUETTEA{2,1}{1|2};
\ETIQUETTEA{3,1}{4};
\ETIQUETTEA{4,1}{3};
%
\ETIQUETTEB{1,0}{|3};
\ETIQUETTEB{2,0}{2|1};
\ETIQUETTEB{3,0}{5};
\ETIQUETTEB{4,0}{};
\node[below] at (2,-0.7) {(r)};
\end{tikzpicture}
\caption{Configurations for the proof of Lemma~\ref{lem:graphX} (the double edge is the edge $u_iv_i$).}
\label{fig:proofX-1}
\end{center}
\end{figure}

The configurations used in the proof correspond to partial colourings of the graph $X$
and are depicted in Figures \ref{fig:proofX-1} and~\ref{fig:proofX-2}, with the following drawing convention.
If $\{a,b\}$ is the set of colours assigned to two distinct vertices, then the ``colour'' of both these
vertices is denoted ``$a,b$''.
If the same configuration describes two partial colourings of $X$ and the colours assigned to some vertex
by these two colourings are respectively $a$ and $b$, then the ``colour'' of this vertex is denoted ``$a|b$''.
Finally, if a vertex has no available colour, its ``colour'' is denoted ``$?$''. 

%
Suppose that for some $i$, $3\le i\le 5$, $\pi(u_i)\neq 1$ and $\pi(v_{i})\neq 1$.
We first prove the following claim.

\begin{claim}
$2\in\{\pi(u_i),\pi(v_i)\}$.
\label{cl:graphX}
\end{claim}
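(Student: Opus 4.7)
The plan is to argue by contradiction: assume $\pi(u_i), \pi(v_i) \in \{3,4,5\}$. Since $u_iv_i \in E(X)$ we have $\pi(u_i) \neq \pi(v_i)$, so $\{\pi(u_i),\pi(v_i)\} \in \{\{3,4\},\{3,5\},\{4,5\}\}$. The graph $X$ admits a top-bottom symmetry (swap $u_j \leftrightarrow v_j$) and a left-right symmetry ($j \mapsto 8-j$ preserves the rung set $\{2,\ldots,6\}$), and I would invoke both at the outset to cut down the number of subcases; in particular the left-right reflection reduces the range of $i$ to $\{3,4\}$.

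The first substantive step is to extract the local constraints around the rung $u_iv_i$. Each of $u_{i-1}, u_{i+1}, v_{i-1}, v_{i+1}$ lies within distance $2$ of both $u_i$ and $v_i$ in $X$, so none of them may carry $\pi(u_i)$ or $\pi(v_i)$; each therefore takes its colour in $\{1,2,c\}$, where $c$ is the unique element of $\{3,4,5\}\setminus\{\pi(u_i),\pi(v_i)\}$. Moreover, $u_{i-1}$ and $u_{i+1}$ are at distance $2$ (similarly $v_{i-1}$ and $v_{i+1}$), so each such pair contains at most one vertex coloured $2$ and at most one coloured $c$; and the diagonal pairs satisfy $d(u_{i\pm 1}, v_{i\mp 1}) = 3$, further restricting where $c$ can appear. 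These observations leave only a short list of admissible local patterns on the $3\times 2$ block centred at the rung.

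Next, for each of the three choices of $\{\pi(u_i),\pi(v_i)\}$ and each admissible local pattern, I would propagate the forced constraints outward to the columns $u_{i\pm 2}, v_{i\pm 2}$ (which exist because $i \in \{3,4,5\}$) and, when needed, to $u_{i\pm 3}, v_{i\pm 3}$. In each subcase the target is to pin down a vertex whose entire palette $\{1,\ldots,5\}$ is blocked by the packing constraints, giving the contradiction. The rung edges at positions $i \pm 1$, which are always present in $X$ when $i \in \{3,4,5\}$, are the decisive agents: they force the two endpoints of a rung to differ and introduce short distances that propagate the conflict through the $3$- and $4$-colour classes.

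The main obstacle will be bookkeeping rather than insight: after symmetry reductions there are still roughly a dozen distinct local configurations to dismiss, each requiring careful tracking of which colours remain legal at each nearby vertex. The subtlest point is making sure that colour $c \in \{3,4,5\}$ is not allowed to quietly reappear at a legally far location, and that no "rescue" colouring of an outer vertex is overlooked; this is precisely why the paper defers the enumeration to Appendix~\ref{ap:graphX}, where each subcase is displayed as a small labelled diagram (Figure~\ref{fig:proofX-1}) so that the contradiction becomes visibly local.
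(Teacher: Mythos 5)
Your setup is sound and follows the same strategy as the paper: assume $\{\pi(u_i),\pi(v_i)\}\subseteq\{3,4,5\}$, use the symmetry exchanging the two rows, and eliminate configurations by propagating forced colours until some vertex has no available colour. Your preliminary observations are all correct: the four neighbours $u_{i\pm1},v_{i\pm1}$ are within distance $2$ (hence $\le 3$) of both endpoints of the rung, so they can only receive colours from $\{1,2,c\}$ with $c$ the unused colour of $\{3,4,5\}$; the pairs at distance $2$ and the diagonals at distance $3$ indeed limit how often $2$ and $c$ can occur; and the reflection $j\mapsto 8-j$ of $X$ does map the rung set onto itself, so reducing to $i\in\{3,4\}$ is legitimate and is in fact a small economy the paper does not exploit (it treats $i=5$ separately inside the case $\{\pi(u_i),\pi(v_i)\}=\{4,5\}$, precisely because the missing rungs at columns $1$ and $7$ make $i=3$ and $i=5$ non-generic; your reflection identifies these two situations).

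The gap is that the proof stops exactly where the actual content of Claim~\ref{cl:graphX} begins. The claim is not a consequence of the local counting on the $3\times 2$ block: for each of the three colour pairs $\{3,4\},\{3,5\},\{4,5\}$ there are admissible local patterns (e.g.\ colour the four neighbours $1,1,2,c$), and the contradiction only appears after pushing the forced colours out to columns $i\pm 2$ and $i\pm 3$, where the presence or absence of rungs matters and where one must check, pattern by pattern, that some vertex really has all five colours blocked. You describe this propagation ("I would propagate\ldots the target is to pin down a vertex whose entire palette is blocked") but do not carry out a single one of the roughly dozen eliminations, nor give any argument why a blocked vertex must exist in every branch rather than the colouring escaping through an outer column. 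Since that exhaustive verification (the content of Figure~\ref{fig:proofX-1} in Appendix~\ref{ap:graphX}) is the entire substance of the claim, the proposal as written is a correct plan but not a proof; to complete it you would need to list the admissible patterns after your symmetry reductions and exhibit, for each, the explicit chain of forced colours ending at a vertex with empty palette.
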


\begin{proof} 
Assume to the contrary that this is not the case, that is, $\{\pi(u_i),\pi(v_i)\}\subseteq\{3,4,5\}$.
Thanks to the symmetry exchanging $u_i$ and $v_i$, we may assume $\pi(u_i)<\pi(v_i)$, without loss of generality.
Recall that there is no edge $u_{i-2}v_{i-2}$ (resp. $u_{i+2}v_{i+2}$) in $X$ when $i=3$ (resp. $i=5$).
We consider the following cases (subscripts are taken modulo $n$).

\begin{enumerate}
\item $\pi(u_i)=3$ and $\pi(v_i)=4$.\\
In that case, we necessarily have $\pi(u_{i+1})\in\{1,2,5\}$.

If $\pi(u_{i+1})=1$, then $\{\pi(v_{i+1}),\pi(u_{i+2})\}=\{2,5\}$.
If $\pi(v_{i+1})=2$ (and $\pi(u_{i+2})=5$), then $\pi(v_{i-1})=1$, so that $\pi(u_{i-1})=2$ and no colour is available for $v_{i-2}$
(see Figure~\ref{fig:proofX-1}(a)).
If $\pi(u_{i+2})=2$ (and $\pi(v_{i+1})=5$), then $\{\pi(u_{i-1}),\pi(v_{i-1})\}=\{1,2\}$, and no colour is available
either for $u_{i-2}$ or for  $v_{i-2}$
(see Figure~\ref{fig:proofX-1}(b)).

If $\pi(u_{i+1})=2$, then $\pi(v_{i+1})\in\{1,5\}$.
If $\pi(v_{i+1})=5$, then $\pi(u_{i-1})=1$, so that $\pi(v_{i-1})=2$ and no colour is available for $u_{i-2}$
(see Figure~\ref{fig:proofX-1}(c)).
If $\pi(v_{i+1})=1$, then either $\pi(u_{i-1})=5$, so that 
no colour is available for $v_{i+2}$ (see Figure~\ref{fig:proofX-1}(d)),
or $\pi(u_{i-1})=1$,
which implies $\{\pi(u_{i-2}),\pi(v_{i-1})\}=\{2,5\}$, so that again 
no colour is available for $v_{i+2}$ (see Figure~\ref{fig:proofX-1}(e)).

Finally, if $\pi(u_{i+1})=5$, then $\{\pi(u_{i-1}),\pi(v_{i-1})\}=\{1,2\}$, and no colour is available either for
$u_{i-2}$ or for $v_{i-2}$
(see Figure~\ref{fig:proofX-1}(f)).


\item $\pi(u_i)=3$ and $\pi(v_i)=5$.\\
Observe that the proof is similar to the proof of the previous case, by switching colours 4 and~5,
in all cases illustrated in Figure~\ref{fig:proofX-1}(b), (c), (d) and (f).
Therefore, it remains only two cases to be considered,
which were illustrated in Figure~\ref{fig:proofX-1}(a) and (e), respectively.
\begin{enumerate}
\item $\pi(u_{i+1})=1$ and $\pi(u_{i+2})=4$.\\
In that case, we have $\pi(v_{i+1})=2$, and thus $\pi(v_{i-1})=1$, which implies $\pi(v_{i-2})=4$ and thus $\pi(u_{i-1})=2$, so that
$\pi(u_{i-2})=1$, $\pi(v_{i-2})=4$, and no colour is available for $u_{i-3}$ (see Figure~\ref{fig:proofX-1}(g)).

\item $\pi(u_{i+1})=2$, $\pi(u_{i-1})=1$ and $\pi(v_{i+1})=1$.\\
In that case, we necessarily have $\pi(v_{i+2})=4$, so that $\pi(u_{i+2})=1$,
and no colour is available for $u_{i+3}$ (see Figure~\ref{fig:proofX-1}(h)).

\end{enumerate}


\item $\pi(u_i)=4$ and $\pi(v_i)=5$.\\
In that case, we necessarily have $\pi(u_{i+1})\in\{1,2,3\}$.
We consider six subcases, depending on the value of $\pi(u_{i+1})$ and $i$.

\begin{enumerate}
\item $\pi(u_{i+1})=1$ and $i\in\{3,4\}$.\\
In that case, we have $\{\pi(u_{i+2}),\pi(v_{i+1})\}=\{2,3\}$, which implies $\pi(v_{i+2})=1$, and
no colour is available for $v_{i+3}$ (see Figure~\ref{fig:proofX-1}(i)).

\item $\pi(u_{i+1})=1$ and $i=5$.\\
In that case, we have $\pi(v_{6})\in\{2,3\}$.
If $\pi(v_{6})=2$, then we necessarily have $\pi(u_{7})=3$, and thus $\pi(v_{4})\in\{1,3\}$.
If $\pi(v_{4})=1$, we get successively $\pi(u_{4})=2$, $\pi(v_{3})=3$, $\pi(u_{3})=1$,
and no colour is available for $u_{2}$ (see Figure~\ref{fig:proofX-1}(j)).
If $\pi(v_{4})=3$, then $\{\pi(u_{4}),\pi(v_{3})\}=\{1,2\}$ and no colour is available for $u_{3}$ (see Figure~\ref{fig:proofX-1}(k)).

If $\pi(v_{6})=3$, then $\{\pi(u_{4}),\pi(v_{4})\}=\{1,2\}$.
If $\pi(v_{4})=1$ and $\pi(u_{4})=2$, then no colour is available for $v_3$
(see Figure~\ref{fig:proofX-1}(l)).
If $\pi(u_{4})=1$ and $\pi(v_{4})=2$, then
we necessarily have $\pi(v_3)=1$ and $\pi(u_3)=3$, and
no colour is available for $v_{2}$ (see Figure~\ref{fig:proofX-1}(m)).

\item $\pi(u_{i+1})=2$ and $i\in\{3,4\}$.\\
In that case, we necessarily have $\pi(v_{i+1})\in\{1,3\}$.
If $\pi(v_{i+1})=1$, then $\pi(v_{i+2})=3$, which implies $\pi(u_{i+2})=1$, and no colour is available for $u_{i+3}$.
If $\pi(v_{i+1})=3$, then $\pi(u_{i+2})=1$, and no colour is available for $u_{i+3}$
(see Figure~\ref{fig:proofX-1}(n)).

\item $\pi(u_{i+1})=2$ and $i=5$.\\
In that case, we necessarily have $\pi(u_{4})\in\{1,3\}$.
If $\pi(u_{4})=1$, then $\{\pi(u_3),\pi(v_4)\}=\{2,3\}$, so that $\pi(v_3)=1$,
and no colour is available for $v_2$ (see Figure~\ref{fig:proofX-1}(o)).
If $\pi(u_{4})=3$, then 
either $\pi(u_3)=\pi(v_4)=1$,
which implies $\pi(v_3)=2$ and no colour is available for $u_2$,
or
$\{\pi(u_3),\pi(v_4)\}=\{1,2\}$, and no colour
is available for $v_3$ (see Figure~\ref{fig:proofX-1}(p)).

\item $\pi(u_{i+1})=3$ and $i\in\{3,4\}$.\\
In that case, either 
$\pi(v_{i+1})=1$, so that $\pi(v_{i+2})=2$, $\pi(u_{i+2})=1$, and no colour is available for $u_{i+3}$,
or $\pi(v_{i+1})=2$, so that $\pi(v_{i+2})=1$ and no colour is available for $u_{i+2}$
(see Figure~\ref{fig:proofX-1}(q)).

\item $\pi(u_{i+1})=3$ and $i=5$.\\ 
In that case, $\pi(u_{4})\in\{1,2\}$.
If $\pi(u_4)=1$, then $\pi(v_4)=2$ and no colour is available for $u_3$.
If $\pi(u_4)=2$, then $\pi(v_4)=1$, so that $\pi(u_3)=1$ and $\pi(v_3)=3$,
and no colour is available for $u_2$ (see Figure~\ref{fig:proofX-1}(r)).
\end{enumerate}

\end{enumerate}

This completes the proof of Claim~\ref{cl:graphX}.
\end{proof}


\begin{figure}
\begin{center}
\begin{tikzpicture}[domain=-0.5:4.5,x=0.9cm,y=0.9cm]
\ARETEH{1,0}{2}; 
\ARETEH{1,1}{3}; 
\ARETEV{1,0};
\draw[double distance = 2pt] (2,0) -- (2,1);   
\ARETEV{3,0};
\SOMMET{1,0}; \SOMMET{2,0}; \SOMMET{3,0}; 
\SOMMET{1,1}; \SOMMET{2,1}; \SOMMET{3,1}; \SOMMET{4,1}; 
\ETIQUETTEA{1,1}{1};
\ETIQUETTEA{2,1}{2};
\ETIQUETTEA{3,1}{1};
\ETIQUETTEA{4,1}{4,5};
%
\ETIQUETTEB{1,0}{?};
\ETIQUETTEB{2,0}{3};
\ETIQUETTEB{3,0}{4,5};
\ETIQUETTEB{4,0}{};
\node[below] at (2.5,-0.7) {(a)};
\end{tikzpicture}
\hskip 1cm
\begin{tikzpicture}[domain=-0.5:4.5,x=0.9cm,y=0.9cm]
\ARETEH{1,0}{1}; \ARETEH{0,1}{3}; 
\ARETEV{1,0};
\draw[double distance = 2pt] (2,0) -- (2,1);   
\SOMMET{1,0}; \SOMMET{2,0}; 
\SOMMET{0,1}; \SOMMET{1,1}; \SOMMET{2,1}; \SOMMET{3,1}; 
\ETIQUETTEA{0,1}{?};
\ETIQUETTEA{1,1}{1};
\ETIQUETTEA{2,1}{2};
\ETIQUETTEA{3,1}{4};
%
\ETIQUETTEB{0,0}{};
\ETIQUETTEB{1,0}{5};
\ETIQUETTEB{2,0}{3};
\ETIQUETTEB{3,0}{};
\node[below] at (1.5,-0.7) {(b)};
\end{tikzpicture}
\hskip 1cm
\begin{tikzpicture}[domain=-0.5:4.5,x=0.9cm,y=0.9cm]
\ARETEH{1,0}{1}; 
\ARETEH{0,1}{3}; 
\ARETEV{1,0};
\draw[double distance = 2pt] (2,0) -- (2,1);   
\SOMMET{1,0}; \SOMMET{2,0}; 
\SOMMET{0,1}; \SOMMET{1,1}; \SOMMET{2,1}; \SOMMET{3,1}; 
\ETIQUETTEA{0,1}{?};
\ETIQUETTEA{1,1}{1};
\ETIQUETTEA{2,1}{2};
\ETIQUETTEA{3,1}{5};
%
\ETIQUETTEB{0,0}{};
\ETIQUETTEB{1,0}{4};
\ETIQUETTEB{2,0}{3};
\ETIQUETTEB{3,0}{};
\node[below] at (1.5,-0.7) {(c)};
\end{tikzpicture}

\vskip 0.5cm
\begin{tikzpicture}[domain=-0.5:4.5,x=0.9cm,y=0.9cm]
\ARETEH{1,0}{2}; \ARETEH{0,1}{4}; 
\ARETEV{1,0};
\draw[double distance = 2pt] (2,0) -- (2,1);   
\ARETEV{3,0};
\SOMMET{1,0}; \SOMMET{2,0}; \SOMMET{3,0}; 
\SOMMET{0,1}; \SOMMET{1,1}; \SOMMET{2,1}; \SOMMET{3,1}; \SOMMET{4,1}; 
\ETIQUETTEA{0,1}{3|};
\ETIQUETTEA{1,1}{1|?};
\ETIQUETTEA{2,1}{2};
\ETIQUETTEA{3,1}{1};
\ETIQUETTEA{4,1}{3|5};
\ETIQUETTEB{0,0}{};
\ETIQUETTEB{1,0}{?|1};
\ETIQUETTEB{2,0}{4};
\ETIQUETTEB{3,0}{5|3};
\ETIQUETTEB{4,0}{};
\node[below] at (2,-0.7) {(d)};
\end{tikzpicture}
\hskip 1cm
\begin{tikzpicture}[domain=-0.5:4.5,x=0.9cm,y=0.9cm]
\ARETEH{1,0}{1}; 
\ARETEH{0,1}{3}; 
\ARETEV{1,0};
\draw[double distance = 2pt] (2,0) -- (2,1);   
\SOMMET{1,0}; \SOMMET{2,0}; 
\SOMMET{0,1}; \SOMMET{1,1}; \SOMMET{2,1}; \SOMMET{3,1}; 
\ETIQUETTEA{0,1}{?};
\ETIQUETTEA{1,1}{1};
\ETIQUETTEA{2,1}{2};
\ETIQUETTEA{3,1}{3};
%
\ETIQUETTEB{0,0}{};
\ETIQUETTEB{1,0}{5};
\ETIQUETTEB{2,0}{4};
\node[below] at (1.5,-0.7) {(e)};
\end{tikzpicture}
\hskip 1cm
\begin{tikzpicture}[domain=-0.5:4.5,x=0.9cm,y=0.9cm]
\ARETEH{1,0}{1}; 
\ARETEH{0,1}{3}; 
\ARETEV{1,0};
\draw[double distance = 2pt] (2,0) -- (2,1);   
\SOMMET{1,0}; \SOMMET{2,0}; 
\SOMMET{0,1}; \SOMMET{1,1}; \SOMMET{2,1}; \SOMMET{3,1}; 
\ETIQUETTEA{0,1}{?};
\ETIQUETTEA{1,1}{1};
\ETIQUETTEA{2,1}{2};
\ETIQUETTEA{3,1}{5};
%
\ETIQUETTEB{0,0}{};
\ETIQUETTEB{1,0}{3};
\ETIQUETTEB{2,0}{4};
\ETIQUETTEB{3,0}{};
\node[below] at (1.5,-0.7) {(f)};
\end{tikzpicture}

\caption{Configurations for the proof of Lemma~\ref{lem:graphX} (cont.).}
\label{fig:proofX-2}
\end{center}
\end{figure}

By Claim~\ref{cl:graphX}, we can thus assume $\pi(u_i)=2$, without loss of generality (again, thanks to the symmetry exchanging $u_i$ and $v_i$),
so that $\pi(v_i)\in\{3,4,5\}$.
To finish the proof of Lemma~\ref{lem:graphX}, we need to prove that $\{\pi(u_{i-1}),\pi(u_{i+1})\}=\{3,4,5\}\setminus\{\pi(v_i)\}$.
Suppose that this is not the case.
We consider the following cases, according to the value of $\pi(v_i)$.

\begin{enumerate}
\item $\pi(v_i)=3$.\\
In that case, we necessarily have $\pi(u_{i+1})\in\{1,4,5\}$.

If $\pi(u_{i+1})=1$, then $\{\pi(u_{i+2}),\pi(v_{i+1})\}=\{4,5\}$, so that $\pi(u_{i-1})=1$, and no colour is available for $v_{i-1}$
(see Figure~\ref{fig:proofX-2}(a)).

If $\pi(u_{i+1})=4$, then either 
$\pi(u_{i-1})=1$, so that $\pi(v_{i-1})=5$, and no colour is available for $u_{i-2}$
(see Figure~\ref{fig:proofX-2}(b)), 
or $\pi(u_{i-1})=5$, which contradicts our assumption since it
would imply $\{\pi(u_{i-1}),\pi(u_{i+1})\}=\{3,4,5\}\setminus\{\pi(v_i)\}$.

Similarly, if $\pi(u_{i+1})=5$, then either 
$\pi(u_{i-1})=1$, so that $\pi(v_{i-1})=4$, and no colour is available for $u_{i-2}$
(see Figure~\ref{fig:proofX-2}(c)), 
or $\pi(u_{i-1})=4$, which again contradicts our assumption.

\item $\pi(v_i)=4$ (the case $\pi(v_i)=5$ is similar, by switching colours $4$ and $5$).\\
In that case, we necessarily have $\pi(u_{i+1})\in\{1,3,5\}$.

If $\pi(u_{i+1})=1$, then $\{\pi(u_{i+2}),\pi(v_{i+1})\}=\{3,5\}$.
If $\pi(u_{i+2})=3$ and $\pi(v_{i+1})=5$, then $\pi(u_{i-1})=1$, so that $\pi(u_{i-2})=3$, and no colour is available for $v_{i-1}$.
If $\pi(u_{i+2})=5$ and $\pi(v_{i+1})=3$, then $\pi(v_{i-1})=1$, and no colour is available for $u_{i-1}$
(see Figure~\ref{fig:proofX-2}(d)).

If $\pi(u_{i+1})=3$, then either  
$\pi(u_{i-1})=1$, so that $\pi(v_{i-1})=5$, and no colour is available for $u_{i-2}$,
or $\pi(u_{i-1})=5$, which contradicts our assumption (see Figure~\ref{fig:proofX-2}(e)).

Finally, if $\pi(u_{i+1})=5$, then either 
$\pi(u_{i-1})=1$, so that $\pi(v_{i-1})=3$, and no colour is available for $u_{i-2}$,
or $\pi(u_{i-1})=3$, which contradicts our assumption (see Figure~\ref{fig:proofX-2}(f)).
\end{enumerate}

This completes the proof of Lemma~\ref{lem:graphX}.


\section{Proof of Lemma~\ref{lem:1altH}}
\label{ap:1altH}


\begin{figure}
\begin{center}
\begin{tikzpicture}[domain=0:12,x=1cm,y=1cm]
\draw[double distance = 2pt] (2,0) -- (3,0);  
\ARETEH{0,0}{2}; 
\ARETEH{2,-1}{1};
\ARETEVV{2,-2}{2}; \ARETEVV{3,-1}{1};
\SOMMET{0,0}; \SOMMET{1,0}; \SOMMET{2,0}; \SOMMET{3,0};
\SOMMET{2,-1}; \SOMMET{3,-1}; 
\SOMMET{2,-2};
\ETIQUETTEA{0,0}{?};
\ETIQUETTEA{1,0}{1};
\ETIQUETTEA{2,0}{2};
\ETIQUETTEA{3,0}{3};
\ETIQUETTEA{1.7,-1.2}{1}; \ETIQUETTEA{3.4,-1.2}{4,5};
\ETIQUETTEA{1.6,-2.2}{4,5};
\ETIQUETTEA{1.5,-3}{(a)};
\end{tikzpicture}
\hskip 0.8cm
\begin{tikzpicture}[domain=0:12,x=1cm,y=1cm]
\draw[double distance = 2pt] (2,0) -- (3,0);  
\ARETEH{0,0}{2}; \ARETEH{3,0}{2}; 
\ARETEVV{1,-1}{1}; 
\ARETEVV{2,-1}{1};
\ARETEVV{4,-1}{1};
\SOMMET{0,0}; \SOMMET{1,0}; \SOMMET{2,0}; \SOMMET{3,0};  \SOMMET{4,0}; \SOMMET{5,0};
\SOMMET{1,-1}; \SOMMET{2,-1}; \SOMMET{4,-1};
\ETIQUETTEA{0,0}{5|};
\ETIQUETTEA{1,0}{1|5};
\ETIQUETTEA{2,0}{2};
\ETIQUETTEA{3,0}{3};
\ETIQUETTEA{4,0}{|1};
\ETIQUETTEA{5,0}{|2};
\ETIQUETTEA{1.7,-1.2}{4}; 
\ETIQUETTEA{0.7,-1.2}{?|};
\ETIQUETTEA{3.7,-1.2}{|?};
\ETIQUETTEA{2.5,-2}{(b)};
\end{tikzpicture}
\hskip 0.8cm
\begin{tikzpicture}[domain=0:12,x=1cm,y=1cm]
\draw[double distance = 2pt] (2,0) -- (3,0);  
\ARETEH{0,0}{2}; 
\ARETEH{2,-1}{1};
\ARETEVV{2,-2}{2}; \ARETEVV{1,-1}{1}; \ARETEVV{3,-1}{1};
\SOMMET{0,0}; \SOMMET{1,0}; \SOMMET{2,0}; \SOMMET{3,0};
\SOMMET{1,-1}; \SOMMET{2,-1}; \SOMMET{3,-1}; 
\SOMMET{2,-2};
\ETIQUETTEA{0,0}{3};
\ETIQUETTEA{1,0}{1};
\ETIQUETTEA{2,0}{2};
\ETIQUETTEA{3,0}{4};
\ETIQUETTEA{0.7,-1.2}{?}; 
\ETIQUETTEA{1.7,-1.2}{1}; \ETIQUETTEA{3.4,-1.2}{3,5};
\ETIQUETTEA{1.6,-2.2}{3,5};
\ETIQUETTEA{1.5,-3}{(c)};
\end{tikzpicture}

\begin{tikzpicture}[domain=0:12,x=1cm,y=1cm]
\draw[double distance = 2pt] (2,0) -- (3,0);  
\ARETEH{0,0}{2}; 
\ARETEH{2,-1}{1}; \ARETEH{2,-2}{1}; 
\ARETEVV{1,-1}{1}; 
\ARETEVV{2,-3}{3};
\ARETEVV{3,-2}{2};
\SOMMET{0,0}; \SOMMET{1,0}; \SOMMET{2,0}; \SOMMET{3,0};  
\SOMMET{1,-1}; \SOMMET{2,-1}; \SOMMET{3,-1};
\SOMMET{2,-2}; \SOMMET{3,-2};
\SOMMET{2,-3};
\ETIQUETTEA{0,0}{5|};
\ETIQUETTEA{1,0}{1|5};
\ETIQUETTEA{2,0}{2};
\ETIQUETTEA{3,0}{4};
\ETIQUETTEA{1.7,-1.2}{3}; 
\ETIQUETTEA{0.7,-1.2}{?|};
\ETIQUETTEA{3.3,-1.2}{|1};
\ETIQUETTEA{1.7,-2.2}{|1}; 
\ETIQUETTEA{3.3,-2.2}{|2};
\ETIQUETTEA{1.7,-3.2}{|?}; 
\ETIQUETTEA{1.5,-4}{(d)};
\end{tikzpicture}
\hskip 0.8cm
\begin{tikzpicture}[domain=0:12,x=1cm,y=1cm]
\draw[double distance = 2pt] (2,0) -- (3,0);  
\ARETEH{0,0}{2}; 
\ARETEH{0,-1}{1}; 
\ARETEVV{0,-1}{1}; 
\ARETEVV{1,-2}{2}; 
\ARETEVV{2,-1}{1};
\SOMMET{0,0}; \SOMMET{1,0}; \SOMMET{2,0}; \SOMMET{3,0};  
\SOMMET{0,-1}; \SOMMET{1,-1}; \SOMMET{2,-1}; 
\SOMMET{1,-2}; 
\ETIQUETTEA{0,0}{3|};
\ETIQUETTEA{1,0}{1|3};
\ETIQUETTEA{2,0}{2};
\ETIQUETTEA{3,0}{4};
\ETIQUETTEA{-0.3,-1.2}{|?}; 
\ETIQUETTEA{1.4,-1.2}{?|1};
\ETIQUETTEA{2.3,-1.2}{5};
\ETIQUETTEA{1.3,-2.2}{|2}; 
\ETIQUETTEA{1.5,-3}{(e)};
\end{tikzpicture}
\hskip 0.8cm
\begin{tikzpicture}[domain=0:12,x=1cm,y=1cm]
\draw[double distance = 2pt] (2,0) -- (3,0);  
\ARETEH{0,0}{2}; 
\ARETEH{0,-1}{1};
\ARETEH{2,-1}{1};
\ARETEVV{2,-2}{2}; \ARETEVV{0,-1}{1}; \ARETEVV{1,-1}{1}; \ARETEVV{3,-1}{1};
\SOMMET{0,0}; \SOMMET{1,0}; \SOMMET{2,0}; \SOMMET{3,0};
\SOMMET{0,-1}; \SOMMET{1,-1}; \SOMMET{2,-1}; \SOMMET{3,-1}; 
\SOMMET{2,-2};
\ETIQUETTEA{0,0}{2|1};
\ETIQUETTEA{1,0}{1|2};
\ETIQUETTEA{2,0}{3};
\ETIQUETTEA{3,0}{4};
\ETIQUETTEA{-0.3,-1.2}{|?}; 
\ETIQUETTEA{1.3,-1.2}{?|}; 
\ETIQUETTEA{1.7,-1.2}{1}; 
\ETIQUETTEA{3.4,-1.2}{2,5};
\ETIQUETTEA{1.6,-2.2}{2,5};
\ETIQUETTEA{1.5,-3}{(f)};
\end{tikzpicture}

\begin{tikzpicture}[domain=0:12,x=1cm,y=1cm]
\draw[double distance = 2pt] (2,0) -- (3,0);  
\ARETEH{2,-1}{1}; \ARETEH{2,-2}{1}; 
\ARETEVV{2,-3}{3};
\ARETEVV{3,-2}{2};
\SOMMET{2,0}; \SOMMET{3,0};  
\SOMMET{2,-1}; \SOMMET{3,-1};
\SOMMET{2,-2}; \SOMMET{3,-2};
\SOMMET{2,-3};
\ETIQUETTEA{2,0}{3};
\ETIQUETTEA{3,0}{4};
\ETIQUETTEA{1.7,-1.2}{2}; 
\ETIQUETTEA{3.3,-1.2}{|1};
\ETIQUETTEA{1.6,-2.2}{1|5}; 
\ETIQUETTEA{3.4,-2.2}{?|?};
\ETIQUETTEA{1.7,-3.2}{5|}; 
\ETIQUETTEA{2.5,-4}{(g)};
\end{tikzpicture}
\hskip 0.8cm
\begin{tikzpicture}[domain=0:12,x=1cm,y=1cm]
\draw[double distance = 2pt] (2,0) -- (3,0);  
\ARETEH{0,0}{2}; 
\ARETEVV{1,-2}{2}; 
\ARETEVV{2,-1}{1};
\SOMMET{0,0}; \SOMMET{1,0}; \SOMMET{2,0}; \SOMMET{3,0};  
\SOMMET{1,-1}; \SOMMET{2,-1}; 
\SOMMET{1,-2}; 
\ETIQUETTEA{0,0}{2|};
\ETIQUETTEA{1,0}{1|2};
\ETIQUETTEA{2,0}{3};
\ETIQUETTEA{3,0}{4};
\ETIQUETTEA{1.4,-1.2}{?|1};
\ETIQUETTEA{2.3,-1.2}{5};
\ETIQUETTEA{1.3,-2.2}{|?}; 
\ETIQUETTEA{1.5,-3}{(h)};
\end{tikzpicture}
\hskip 0.8cm
\begin{tikzpicture}[domain=0:12,x=1cm,y=1cm]
\draw[double distance = 2pt] (2,0) -- (3,0);  
\ARETEH{0,0}{2}; 
\ARETEH{0,-1}{1};
\ARETEH{2,-1}{1};
\ARETEVV{2,-2}{2}; \ARETEVV{0,-1}{1}; \ARETEVV{1,-2}{2}; \ARETEVV{3,-1}{1};
\SOMMET{0,0}; \SOMMET{1,0}; \SOMMET{2,0}; \SOMMET{3,0};
\SOMMET{0,-1}; \SOMMET{1,-1}; \SOMMET{2,-1}; \SOMMET{3,-1}; 
\SOMMET{1,-2}; \SOMMET{2,-2};
\ETIQUETTEA{0,0}{1};
\ETIQUETTEA{1,0}{2};
\ETIQUETTEA{2,0}{3};
\ETIQUETTEA{3,0}{5};
\ETIQUETTEA{-0.3,-1.2}{4}; 
\ETIQUETTEA{1.3,-1.2}{1}; 
\ETIQUETTEA{1.7,-1.2}{1}; 
\ETIQUETTEA{3.4,-1.2}{2,4};
\ETIQUETTEA{0.7,-2.2}{?}; 
\ETIQUETTEA{1.6,-2.2}{2,4};
\ETIQUETTEA{1.5,-3}{(i)};
\end{tikzpicture}

\begin{tikzpicture}[domain=0:12,x=1cm,y=1cm]
\draw[double distance = 2pt] (2,0) -- (3,0);  
\ARETEH{2,-1}{1}; \ARETEH{2,-2}{1}; 
\ARETEVV{2,-2}{2};
\ARETEVV{3,-3}{3};
\SOMMET{2,0}; \SOMMET{3,0};  
\SOMMET{2,-1}; \SOMMET{3,-1};
\SOMMET{2,-2}; \SOMMET{3,-2};
\SOMMET{3,-3};
\ETIQUETTEA{2,0}{4};
\ETIQUETTEA{3,0}{5};
\ETIQUETTEA{1.7,-1.2}{1}; 
\ETIQUETTEA{3.4,-1.2}{2,3};
\ETIQUETTEA{1.6,-2.2}{2,3}; 
\ETIQUETTEA{3.3,-2.2}{1};
\ETIQUETTEA{3.3,-3.2}{?}; 
\ETIQUETTEA{2.5,-4}{(j)};
\end{tikzpicture}
\hskip 0.8cm
\begin{tikzpicture}[domain=0:12,x=1cm,y=1cm]
\draw[double distance = 2pt] (2,0) -- (3,0);  
\ARETEH{0,0}{2};  
\ARETEH{0,-1}{1}; \ARETEH{2,-1}{1}; \ARETEH{2,-2}{1};
\ARETEVV{0,-2}{2}; \ARETEVV{1,-1}{1}; \ARETEVV{2,-3}{3}; \ARETEVV{3,-2}{2};
\SOMMET{0,0}; \SOMMET{1,0}; \SOMMET{2,0}; \SOMMET{3,0};
\SOMMET{0,-1}; \SOMMET{1,-1}; \SOMMET{2,-1}; \SOMMET{3,-1}; 
\SOMMET{0,-2}; \SOMMET{2,-2}; \SOMMET{3,-2};
\SOMMET{2,-3};
\ETIQUETTEA{0,0}{2,3|};
\ETIQUETTEA{1,0}{1|3};
\ETIQUETTEA{2,0}{4};
\ETIQUETTEA{3,0}{5};
\ETIQUETTEA{-0.3,-1.2}{1|}; 
\ETIQUETTEB{1,-1}{2,3|}; 
\ETIQUETTEA{1.7,-1.2}{2}; 
\ETIQUETTEA{3.3,-1.2}{|1};
\ETIQUETTEA{-0.3,-2.2}{?|};
\ETIQUETTEA{1.7,-2.2}{|1};
\ETIQUETTEA{3.3,-2.2}{|3};
\ETIQUETTEA{1.7,-3.2}{|?};
\ETIQUETTEA{1.5,-4}{(k)};
\end{tikzpicture}
\hskip 0.8cm
\begin{tikzpicture}[domain=0:12,x=1cm,y=1cm]
\draw[double distance = 2pt] (2,0) -- (3,0);  
\ARETEH{0,0}{2}; 
\ARETEH{0,-1}{1}; 
\ARETEVV{0,-1}{1}; 
\ARETEVV{1,-2}{2}; 
\ARETEVV{2,-1}{1};
\SOMMET{0,0}; \SOMMET{1,0}; \SOMMET{2,0}; \SOMMET{3,0};  
\SOMMET{0,-1}; \SOMMET{1,-1}; \SOMMET{2,-1}; 
\SOMMET{1,-2}; 
\ETIQUETTEA{0,0}{2|1};
\ETIQUETTEA{1,0}{1|2};
\ETIQUETTEA{2,0}{4};
\ETIQUETTEA{3,0}{5};
\ETIQUETTEA{-0.3,-1.2}{|3}; 
\ETIQUETTEA{1.4,-1.2}{?|1};
\ETIQUETTEA{2.3,-1.2}{3};
\ETIQUETTEA{1.3,-2.2}{|?}; 
\ETIQUETTEA{1.5,-3}{(l)};
\end{tikzpicture}
\caption{Configurations for the proof of Claim~\ref{cl:1altH-1} (the double edge is the edge $u^0_2u^0_3$).}
\label{fig:1altH-1}
\end{center}
\end{figure}


We first prove the following claim.

\begin{claim}
For every integer $j$, $0\le j< r$, either $\pi(u^0_{2j})=1$ or $\pi(u^0_{2j+1})=1$.
\label{cl:1altH-1}
\end{claim}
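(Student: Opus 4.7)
The plan is to argue by contradiction. Assume that for some index $j$, $0\le j<r$, neither $\pi(u^0_{2j})$ nor $\pi(u^0_{2j+1})$ equals~$1$. By Corollary~\ref{cor:1altV}, I may assume from the outset that $\pi$ already assigns colour~$1$ to one endpoint of every vertical edge $u^i_{2j'}u^i_{2j'+1}$ with $1\le i\le\ell$ and $0\le j'\le r-1$. Since $u^0_{2j}$ and $u^0_{2j+1}$ are adjacent, their colours are distinct elements of $\{2,3,4,5\}$, and the symmetry swapping the two endpoints of the edge lets me further assume $\pi(u^0_{2j})<\pi(u^0_{2j+1})$, so that the ordered pair of colours on $u^0_{2j}u^0_{2j+1}$ lies in
$$\{(2,3),(2,4),(2,5),(3,4),(3,5),(4,5)\}.$$

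For each of these six possibilities I examine the four neighbours of the two endpoints that lie outside the edge itself, namely $u^0_{2j-1}$, $u^0_{2j+2}$, $u^1_{2j}$ and $u^1_{2j+1}$. The levelwise hypothesis furnished by Corollary~\ref{cor:1altV} forces exactly one of $u^1_{2j}, u^1_{2j+1}$ to carry colour~$1$; then the other one is restricted to the colours in $\{2,\ldots,5\}$ not already used at distance at most~$2$, which, given the colours on the top edge, typically pins it down to $\{4,5\}$. The same kind of reasoning constrains $u^0_{2j-1}$ and $u^0_{2j+2}$ to short lists of admissible colours by adjacency and by the fact that each of colours $4$ and $5$ can appear only once in any BFS-ball of radius~$4$ and~$5$, respectively. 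I then propagate the forced assignments down through levels $2$ and~$3$ of the graph, again using that colour~$1$ already occupies one endpoint of every internal vertical edge.

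In each branch of the case analysis, a short deduction produces a vertex (at level $0$, $1$, $2$ or~$3$) all of whose three neighbours have already been coloured with elements of $\{3,4,5\}$, so that neither $1$ nor $2$ can be placed there; the remaining three colours are excluded by adjacency. This yields the desired contradiction. The resulting twelve representative configurations are precisely those depicted in Figure~\ref{fig:1altH-1}, and verifying each of them reduces to reading off distances in $H^\ell(r)$ and applying the scarcity constraints on colours~$4$ and~$5$.

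The main obstacle is bookkeeping rather than conceptual: the six cases for $(\pi(u^0_{2j}),\pi(u^0_{2j+1}))$ branch according to which of $u^1_{2j}, u^1_{2j+1}$ carries colour~$1$ and what large colour the other receives, giving on the order of a dozen sub-configurations to check. The uniform underlying reason the claim must hold is that the forced alternation of colour~$1$ on every internal vertical edge, together with the strong global sparsity of colours~$4$ and~$5$, leaves essentially no room to avoid colour~$1$ on an edge of the top cycle; any attempt to do so immediately creates a vertex at level $2$ or $3$ whose neighbourhood has already consumed all five available colours.
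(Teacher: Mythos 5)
Your overall strategy is the same as the paper's: assume $\pi(u^0_{2j})\neq 1\neq\pi(u^0_{2j+1})$, reduce by symmetry to the six ordered colour pairs $(2,3),(2,4),(2,5),(3,4),(3,5),(4,5)$, propagate forced colours through the neighbouring levels, and reach a vertex with no available colour. The problem is that in your write-up this propagation is only asserted, never performed, and the case analysis \emph{is} the proof: the paper's argument consists exactly of working through those configurations (branching further on the colour of $u^1_2$, etc.) until an uncolourable vertex appears, and nothing in your text substitutes for that work. Worse, the mechanism you describe for the final contradiction is a non sequitur: if the three neighbours of a vertex are coloured with elements of $\{3,4,5\}$, this excludes colours $3,4,5$ at that vertex (adjacent vertices never share a colour) but does \emph{not} exclude colours $1$ or $2$; ruling out $1$ requires a $1$-coloured neighbour, and ruling out $2$ requires a $2$ within distance $2$. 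Establishing precisely these facts in every branch is the content of the deductions you skip, so as it stands the proposal is a plan rather than a proof.

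There is also a smaller but real issue with your opening reduction. The claim concerns an \emph{arbitrary} packing $5$-colouring $\pi$ (it is a step in the proof of Lemma~\ref{lem:1altH}, which is a universal statement), whereas Corollary~\ref{cor:1altV} only asserts the \emph{existence} of some packing $5$-colouring with colour~$1$ on every internal rung; you cannot simply ``assume from the outset'' that the given $\pi$ has this property. The reduction can be salvaged, because the recolouring behind Corollary~\ref{cor:1altV} only replaces a colour~$2$ by colour~$1$ on vertices of internal rungs and hence leaves the colours of the vertices $u^0_j$ untouched, so proving the claim for the modified colouring does yield it for $\pi$ — but this observation must be stated, and it is not provided by the corollary as formulated. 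Note that the paper avoids the issue entirely by proving the claim directly, with no hypothesis on the internal rungs; your extra assumption would at best prune some subcases, and it makes your parenthetical remark that the resulting configurations are ``precisely'' those of Figure~\ref{fig:1altH-1} inaccurate, since those configurations were derived without it.
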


\begin{proof}
Thanks to the symmetries of $H^\ell(r)$,
it is enough to prove the claim for the edge $u^0_2u^0_3$.
Suppose to the contrary that $\pi(u^0_2)\neq 1$ and $\pi(u^0_3)\neq 1$.
Thanks to the symmetries of $H^\ell(r)$, we can assume $\pi(u^0_2)<\pi(u^0_3)$, without loss of generality.

We consider four cases. The corresponding configurations are depicted in Figure~\ref{fig:1altH-1}, using the 
same drawing convention as for the proof of Lemma~\ref{lem:graphX} (see Appendix~\ref{ap:graphX}).

\begin{enumerate}
\item $\pi(u^0_2)=2$ and $\pi(u^0_3)=3$.\\
In that case, $\pi(u^1_2)\in\{1,4,5\}$.
If $\pi(u^1_2)=1$, then $\{\pi(u^2_2),\pi(u^1_3)\}=\{4,5\}$, which implies $\pi(u^0_1)=1$,
and no colour is available for $u^0_0$ (see Figure~\ref{fig:1altH-1}(a)).
If $\pi(u^1_2)=4$, then either $\pi(u^0_1)=1$, which implies $\pi(u^0_0)=5$, and no colour is available for $u^1_1$,
or $\pi(u^0_1)=5$, which implies $\pi(u^0_4)=1$, $\pi(u^0_5)=2$, 
and no colour is available for $u^1_4$ (see Figure~\ref{fig:1altH-1}(b)).
The case $\pi(u^1_2)=5$ is similar, by switching colours 4 and~5.

\item $\pi(u^0_2)=2$ and $\pi(u^0_3)=4$ (the case $\pi(u^0_2)=2$ and $\pi(u^0_3)=5$ is
similar, by switching colours $4$ and~$5$).\\
In that case, $\pi(u^1_2)\in\{1,3,5\}$.
If $\pi(u^1_2)=1$, then $\{\pi(u^2_2),\pi(u^1_3)\}=\{3,5\}$, which implies $\pi(u^0_1)=1$,
$\pi(u^0_0)=3$, and no colour is available for $u^1_1$ (see Figure~\ref{fig:1altH-1}(c)).
If $\pi(u^1_2)=3$, then either $\pi(u^0_1)=1$, which implies $\pi(u^0_0)=5$,
and no colour is available for $u^1_1$,
or $\pi(u^0_1)=5$, which implies $\pi(u^2_2)=\pi(u^1_3)=1$, so that $\pi(u^2_3)=2$,
and no colour is available for $u^3_2$ (see Figure~\ref{fig:1altH-1}(d)).
Finally, if $\pi(u^1_2)=5$, then either $\pi(u^0_1)=1$, which implies $\pi(u^0_0)=3$,
and no colour is available for $u^1_1$,
or $\pi(u^0_1)=3$, which implies $\pi(u^1_1)=1$, $\pi(u^2_1)=2$,
and no colour is available for $u^1_0$  (see Figure~\ref{fig:1altH-1}(e)).
  
\item $\pi(u^0_2)=3$ and $\pi(u^0_3)=4$.\\
In that case, $\pi(u^1_2)\in\{1,2,5\}$.
If $\pi(u^1_2)=1$, then $\{\pi(u^2_2),\pi(u^1_3)\}=\{2,5\}$, and thus
either $\pi(u^0_1)=1$, so that $\pi(u^0_0)=2$, and no colour is available for $u^1_1$,
or $\pi(u^0_1)=2$, so that $\pi(u^0_0)=1$, and no colour is available for $u^1_0$ (see Figure~\ref{fig:1altH-1}(f)).
If $\pi(u^1_2)=2$, then either
$\pi(u^2_2)=1$, which implies $\pi(u^3_2)=5$, and no colour is available for $u^2_3$,
or $\pi(u^2_2)=5$, which implies $\pi(u^1_3)=1$, and no colour is available for $u^2_3$ (see Figure~\ref{fig:1altH-1}(g)).
Finally, if $\pi(u^1_2)=5$, then either
$\pi(u^0_1)=1$, which implies $\pi(u^0_0)=2$, and no colour is available for $u^1_1$,
or $\pi(u^0_1)=2$, which implies $\pi(u^1_1)=1$, and no colour is available for $u^2_1$ (see Figure~\ref{fig:1altH-1}(h)).

\item $\pi(u^0_2)=3$ and $\pi(u^0_3)=5$.\\
This case is similar to the previous one, by switching colours 4 and~5, except
when $\pi(u^1_2)=1$ (which implies $\{\pi(u^2_2),\pi(u^1_3)\}=\{2,4\}$) and $\pi(u^0_1)=2$.
In that case, we necessarily have $\pi(u^0_0)=\pi(u^1_1)=1$, which implies $\pi(u^1_0)=4$,
and no colour is available for $u^2_1$ (see Figure~\ref{fig:1altH-1}(i)).

\item $\pi(u^0_2)=4$ and $\pi(u^0_3)=5$.\\
In that case, $\pi(u^1_2)\in\{1,2,3\}$.
If $\pi(u^1_2)=1$, then $\{\pi(u^2_2),\pi(u^1_3)\}=\{2,3\}$, which implies $\pi(u^2_3)=1$,
and no colour is available for $u^3_3$ (see Figure~\ref{fig:1altH-1}(j)).
If $\pi(u^1_2)=2$, then either
$\pi(u^0_1)=1$, which implies $\{\pi(u^0_0),\pi(u^1_1)\}=\{2,3\}$,
so that $\pi(u^1_0)=1$, and no colour is available for $u^2_0$,
or $\pi(u^0_1)=3$, which implies $\pi(u^2_2)=\pi(u^1_3)=1$, 
so that $\pi(u^2_3)=3$, and no colour is available for $u^3_2$ (see Figure~\ref{fig:1altH-1}(k)).
Finally, if $\pi(u^1_2)=3$, then either
$\pi(u^0_1)=1$, which implies $\pi(u^0_0)=2$, and no colour is available for $u^1_1$,
or $\pi(u^0_1)=2$, which implies $\pi(u^0_0)=\pi(u^1_1)=1$, so that $\pi(u^1_0)=3$,
and no colour is available for $u^2_1$ (see Figure~\ref{fig:1altH-1}(l)).
\end{enumerate}
This completes the proof of Claim~\ref{cl:1altH-1}.
\end{proof}

Since the cycle induced by the set of vertices $\{u^0_0,u^0_1,\dots,u^0_{2r-1}\}$
has even length, and adjacent vertices cannot be assigned the same colour,
it follows from Claim~\ref{cl:1altH-1} that colour~1 must be used
on each edge $u^0_ju^0_{j+1}$, $0\le j\le 2r-1$
(subscripts are taken modulo $2r$).
By symmetry, colour~1 must also be used 
on each edge $u^{\ell+1}_ju^{\ell+1}_{j+1}$, $0\le j\le 2r-1$.
This concludes the proof of Lemma~\ref{lem:1altH}.

\end{document}